\numberwithin{equation}{section}
\theoremstyle{plain}
\newtheorem{thm}{Theorem}[section]
\newtheorem{lem}[thm]{Lemma}
\newtheorem{prop}[thm]{Proposition}
\newtheorem{cor}[thm]{Corollary}
\theoremstyle{definition}
\newtheorem{Def}[thm]{Definition}
\theoremstyle{remark}
\newtheorem{rem}[thm]{Remark}
\newcommand{\rmd}{\mathrm{d}}
\newcounter{mnotecount}[section]
\let\oldmarginpar\marginpar
\renewcommand\marginpar[1]{\-\oldmarginpar[\raggedleft\footnotesize #1]%
	{\raggedright\footnotesize #1}}
\begin{document}

\title[Stability of AVTD Behavior within the Polarized $\Tbb^2$-symmetric vacuum spacetimes]{Stability of AVTD Behavior within the Polarized $\Tbb^2$-symmetric vacuum spacetimes}

\author[E. Ames]{Ellery Ames}
\address{Dept. of Mathematics \\ 
Humboldt State University \\
1 Harpst St.
Arcata, CA 95521\\USA }
\email{Ellery.Ames@humboldt.edu}

\author[F. Beyer]{Florian Beyer}
\address{Dept. of Mathematics and Statistics\\
730 Cumberland St\\
University of Otago, Dunedin 9016\\ New Zealand}
\email{fbeyer@maths.otago.ac.nz }

\author[J. Isenberg]{James Isenberg}
\thanks{J. Isenberg is partially supported by NSF grant PHY-1707427.}
\address{Dept. of Mathematics and Institute for Fundamental Science \\
University of Oregon \\
Eugene, OR 97403 USA}
\email{isenberg@uoregon.edu}

\author[T.A. Oliynyk]{Todd A. Oliynyk}
\address{School of Mathematical Sciences\\
9 Rainforest Walk\\
Monash University, VIC 3800\\ Australia}
\email{todd.oliynyk@monash.edu}

\begin{abstract}
  We prove stability of the family of Kasner solutions within the class of polarized $\Tbb^2$-symmetric solutions of the vacuum Einstein equations in the contracting time direction with respect to an areal time foliation. All Kasner solutions for which the asymptotic velocity parameter $K$ satisfies $|K-1|>2$ are non-linearly stable, and all sufficiently small perturbations exhibit asymptotically velocity term dominated (AVTD) behavior and blow-up of the Kretschmann scalar. 
\end{abstract}

\maketitle

\section{Introduction}
\label{s.introduction}

While the general conjectured \emph{BKL behavior} \cite{belinskii1970,lifshitz1963} near the singularity in cosmological solutions of Einstein's equations is very difficult to investigate mathematically, the more special \emph{asymptotically velocity term dominated (AVTD) behavior} has been proven to hold for subfamilies of such solutions. 
Unlike BKL behavior, which involves a congruence of time-like observers each experiencing ``mixmaster" type behavior as they approach the singularity \cite{Weaver:2001,Andersson:2005}, in spacetime solutions with AVTD behavior, the time-like observers each experience ``spatially pointwise'' Kasner type behavior; see Definition~\ref{def.avtd} below for the precise definition of AVTD behavior.

Two major approaches for verifying the AVTD property have been employed in the mathematical general relativity literature.
The first, an approach based on the \emph{Fuchsian method}, aims to show the existence of families of solutions of the Einstein-matter or vacuum Einstein equations with AVTD behavior by prescribing asymptotics near the singularity and solving the equations in the direction away from the singularity. 
While this approach yields infinite-dimensional families of solutions parameterized by free ``asymptotic data'' functions, it does not address whether the family of solutions constructed is open (not to mention dense) in the set of all solutions under consideration, for example, within the given symmetry class, or with certain matter fields. 
To obtain this information, one typically follows a second approach based on the Cauchy problem of hyperbolic PDEs.
In this setting, the AVTD property is tightly connected with the nonlinear stability of the Kasner solutions \cite{kasner1921}.

AVTD behavior has been proven to occur generically in the class of vacuum Gowdy solutions \cite{Isenberg:1990,CIM1990,ringstrom2009a}.
Infinite dimensional subfamilies of solutions with AVTD behavior are known in more general symmetry-defined classes of spacetimes \cite{kichenassamy1998,isenberg1999,stahl2002,isenberg2002,damour2002,choquet-bruhat2004,choquet-bruhat2006,Clausen2007,ames2013a,beyer2017}, in spacetimes without assumed symmetry coupled to a stiff fluid \cite{andersson2001,heinzle2012}, and in vacuum spacetimes without assumed symmetry \cite{ChruscielKlinger:2015,Fournodavlos:2020}.
The above results are based on Fuchsian methods. 
The second approach, based on evolving a Cauchy problem towards the singularity, has recently been used to prove the stability of AVTD behavior.
Specifically, in the Einstein-scalar field and Einstein-perfect fluid spacetimes in $3+1$ dimensions it has been shown that AVTD behavior is stable for solutions in an open neighborhood of Friedman-LeMaitre-Robertson-Walker (FLRW) spacetimes \cite{rodnianski2014}, while in the Einstein-vacuum system with dimension $D+1$ for $D\ge 38$ certain moderately anisotropic Kasner solutions are nonlinearly stable \cite{Rodnianski2018HighD}.
Recent work \cite{fournodavlos2020b} removes the restriction in \cite{rodnianski2014} of being within a neighborhood of FLRW, lowers $D$ to $D \ge 10$ in the context of \cite{Rodnianski2018HighD}, and establishes the nonlinear stability of Kasner solutions (and AVTD behavior) within the class of polarized $U(1)$-symmetric vacuum solutions. 
We also mention recent work \cite{ringstrom2017,Lott:2020a,Lott:2020b,Ringstrom:2021a,Ringstrom:2021b} to establish geometric conditions under which one obtains detailed information regarding the geometry of the singularity. 

In this article, using results obtained by two of the authors with a collaborator \cite{BOOS:2020}, we show that a subfamily of the Kasner solutions (for most choices of the Kasner parameter $K$, as specified below) are nonlinearly stable within the class of the $3+1$-Einstein vacuum solutions with polarized $\Tbb^2$-symmetry, and that each such solution in an open neighborhood of this subfamily exhibits AVTD behavior. 
The polarization condition that we assume in this work is crucial. 
Numerical studies of $\Tbb^2$-symmetric solutions of the vacuum Einstein equations without the polarization condition (and with non-vanishing twist) strongly indicate that such solutions generically exhibit non-AVTD BKL behavior in a neighborhood of the singularity \cite{Weaver:2001,Andersson:2005}.
As noted above, and as this work indicates, with the polarization condition imposed, $\Tbb^2$-symmetric vacuum solutions are expected to generically show AVTD behavior near the singularity. 
We note that recent work strongly supports the conjecture that, even if the polarization condition is not imposed, $\Tbb^2$-symmetric solutions of the Einstein equations coupled to a scalar field or a stiff fluid generically exhibit AVTD behavior near the singularity \cite{rodnianski2018,rodnianski2014,fournodavlos2020b}.

Our work in this paper focuses on the behavior of polarized $\Tbb^2$-symmetric vacuum spacetimes in the contracting direction near the singularity. 
The behavior of these same spacetimes in the expanding direction has also been studied in recent work. 
While the expanding direction results obtained to date \cite{Ringstrom:2015,LeFloch:2016,Berger:2019} are less definitive, and are likely to be less predictive of the behavior of solutions with no symmetry, they do show that certain subfamilies of the $\Tbb^2$-symmetric vacuum solutions have distinctive attractors for the spacetime averaged behavior in the expanding direction.

We provide a definition of the $\Tbb^2$-symmetric (cosmological) spacetimes in Section~\ref{s.T2spacetimes}, and based on the results in \cite{BERGER1997}, which show that such spacetimes always globally admit \emph{areal coordinates}, we specify in that section the canonical form \eqref{T2metric} of the metric for $\Tbb^2$-symmetric spacetimes in areal coordinates. 
A key feature of the areal coordinate representation of these spacetimes is that the time coordinate $t$ locates the singularity at $t=0$, and as $t$ increases, the spacetime expands. 
Thus the analysis of the behavior of $\Tbb^2$-symmetric spacetimes near the singularity focuses on these solutions for positive $t$ very close to $0$. 
Also in Section~\ref{s.T2spacetimes}, we define the polarized and the Kasner subfamilies of the $\Tbb^2$-symmetric spacetimes, and we write out the vacuum Einstein equations for these spacetimes in terms of the areal coordinate metric components.

In Section~\ref{s.AVTD_spacetimes}, we define what it means for a spacetime to have AVTD  behavior in a neighborhood of the spacetime singularity. 
As noted above, AVTD behavior is a specialized version of BKL behavior. 
Early discussions of spacetimes with this property appear in \cite{Eardley:1972}. 
The formal definition of this behavior appears first in \cite{Isenberg:1990}.

We state the main result of this paper, Theorem~\ref{thm:mainresult}, in Section~\ref{s.main_result}. 
Then in Section~\ref{s.main_proof}, we proceed to carry out the details of the proof of this theorem. 
As noted above, our results in this paper depend crucially upon work presented in \cite{BOOS:2020}. 
In that work, a theorem is proven which provides sufficient conditions for a hyperbolic PDE system in Fuchsian form which guarantee that solutions that are evolved from initial data satisfying certain smallness conditions at $T_0 > 0$ must extend all the way to the singularity at $t=0$.
In the Appendix, we state this theorem, Theorem~\ref{symthm}, in a form that is most useful for the analysis carried out in this work. 
Thus the bulk of the work carried out in proving Theorem~\ref{thm:mainresult} in Section~\ref{s.main_proof} consists of first showing that the Einstein vacuum equations for the polarized $\Tbb^2$-symmetric spacetimes, when cast into first order Fuchsian form, satisfy the hypotheses of Theorem~\ref{symthm}., and then showing that we can impose conditions on the initial data so that the conclusions of Theorem~\ref{symthm} hold. 
These conditions correspond to initial data sets which are small perturbations of initial data for Kasner solutions for the Kasner parameter $K$ (see Section~\ref{s.kasner_spacetimes}) satisfying $K< -1$ or $K>3$. 

Following this application of Theorem~\ref{symthm} in Proposition~\ref{prop:main_existence}, the completion of the proof of Theorem~\ref{thm:mainresult} in Section~\ref{s.main_proof} involves improving the estimates for the asymptotics (Proposition~\ref{prop:hoexp}), and the verification that these improved asymptotic estimates for the solutions imply the stability of Kasner solutions within the $\Tbb^2$-symmetric class (Section~\ref{s.stability_of_kasner}), and that AVTD behavior holds for these solutions (Sections~\ref{s.avtd} and \ref{s.proof_main_result}).
In order to prove the AVTD property, we first prove existence of solutions of the singular initial value problem for the nonlinear velocity term dominated system.
This system is written out in Section~\ref{s.AVTD_spacetimes} and the existence of solutions is proven in Section~\ref{s.avtd}.
Finally, the AVTD property of the solutions allows us to show that the Kretschmann scalar is unbounded at each spatial point in the contracting direction, and thus, that for any such solution the spacetime is $C^2$-inextendible.

To provide specific examples which illustrate the unstable behavior of perturbed Kasner solutions with $-1<K<3$ and $K\not=1$, in Section~\ref{sec:spatially_homogeneous} we restrict to the spatially homogeneous subclass of the polarized $\Tbb^2$-symmetric spacetimes. 
We demonstrate that for $K$ in the above interval, there exist perturbations within the spatially homogeneous subclass whose asymptotic behavior as $t \searrow 0$ depends discontinuously on the perturbing parameter. 

As we were completing the writeup of this work, the paper \cite{fournodavlos2020b}, in which similar results for the stability of Kasner solutions is proved, was posted to the arXiv preprint server. 
The far-reaching work of \cite{fournodavlos2020b} treats the Einstein--vacuum system in a large number dimensions, the Einstein--scalar field system, and the polarized $U(1)$-symmetric Einstein--vacuum equations in $3+1$ dimensions. 
The latter class of spacetimes includes the polarized $\Tbb^2$-symmetric spacetimes as a subset. 
However, the formulation of the Einstein equations that is specifically adapted to the $\Tbb^2$-symmetry, as we employ in this work with the areal gauge, provides interesting insights that are not visible in the more general setting and the orthonormal frame formalism used in the impressive work of \cite{fournodavlos2020b}.
In particular, we obtain detailed asymptotic estimates that recover the full leading order term which is identified in an analysis of the singular initial value problem for the polarized $\Tbb^2$-symmetric Einstein-vacuum equations (cf. Remark~\ref{rem.sharp_asymptotics}).
Further, our formulation is sensitive to the instability of Kasner solutions for which $-1<K<3$ and $K\not=1$ (cf.~Remark~\ref{rem:instability} and Section~\ref{sec:spatially_homogeneous}). 
This geometrically distinguished property of polarized $\Tbb^2$-symmetric vacuum perturbations of Kasner solutions is a pure gauge effect for \emph{generic} perturbations in the (larger class of) polarized $U(1)$-symmetric vacuum solutions considered in \cite{fournodavlos2020b}. 
Despite the fact that the class of polarized $U(1)$-symmetric solutions contains the class of polarized $\Tbb^2$-symmetric solutions, the geometric analysis of the full polarized $\Tbb^2$-symmetric setting therefore benefits from methods that are tailored to the particular symmetry, which we employ here.

\section{$\Tbb^2$-Symmetric Spacetimes and Certain Subfamilies}
\label{s.T2spacetimes}
\subsection{Areal Coordinates}
We assume a Lorentzian manifold with topology $M=I \times \Tbb^3$ for some interval $I \subset (0, \infty)$.
The class of $\Tbb^2$-symmetric spacetimes are characterized by a $\Tbb^2$ isometry group acting on $\Tbb^3$ effectively \cite{CHRUSCIEL:1990}.
The time-dependent areas of the symmetry orbits provide a useful time coordinate, which along with a coordinatization $(\theta, x, y)$ of $\Tbb^3$  are known as the areal coordinates.
In areal coordinates the metrics for this family of spacetimes can be written in the form
\begin{align}
\label{T2metric}
    g = e^{2(\nu - u)} \left( -\alpha\rmd t^2 + \rmd \theta^2 \right)
    + e^{2u}\left(\rmd x + Q \rmd y +  (G + QH) \rmd \theta \right)^2
    + e^{-2u} t^2\left(\rmd y + H \rmd \theta\right)^2,
\end{align}
where the fields $\nu, u, Q, \alpha, G, H$ are functions of $t, \theta$ only. 
Global existence of solutions on the time interval $I=(0,\infty)$ to the Einstein vacuum equations in this symmetry class in areal coordinates is shown in \cite{BERGER1997,LeFloch:2016}.

The Killing vector fields associated to the $\Tbb^2$ symmetry (constant linear combinations of $\partial_x, \partial_y$) give rise to twist quantities\footnote{Here $\star$ denotes the Hodge star operator, and $(\del{x})^\flat = g(\del{x}, \cdot)$ is the one form which is metric dual to the vector field $\del{x}$.} 
\begin{align*}
  J_{\del{x}} =& \star (\mathrm{d}(\del{x})^\flat\wedge(\del{x})^\flat\wedge (\del{y})^\flat)
      = t e^{4u - 2\nu} \alpha^{-1/2} (\del{t} G + Q \del{t}H), \AND \\
  J_{\del{y}} =& \star (\mathrm{d}(\del{y})^\flat\wedge (\del{x})^\flat\wedge (\del{y})^\flat)
      = -t e^{-2\nu} \alpha^{-1/2} (t^2 \del{t} H + Q e^{4u}(\del{t} G + Q \del{t}H)),
\end{align*} 
that can be shown to be constant \cite{Geroch:1972jmp}, as a consequence of Einstein's vacuum equations.
In general $\Tbb^2$-symmetric spacetimes a suitable combination of $\partial_x, \partial_y$ can be taken such that one of these twist constants vanishes \cite{Berger:2019}. 
An important subclass known as the Gowdy spacetimes \cite{Gowdy1974} is characterized by the vanishing of both twist constants.

\subsection{The Polarized $\Tbb^2$-Symmetric Spacetimes}
A $\Tbb^2$-symmetric spacetime is called polarized if there exist spacelike Killing vector fields $X, Y$ spanning the $\Tbb^2$ Lie algebra such that $Q = g(X,Y)/g(X,X) = \mathrm{const}$. 
As discussed in detail in \cite{Berger:2019}, the coordinates can be adapted\footnote{In our most recent work \cite{ABIO2021} we however exploit the fact that the set of polarized $\Tbb^2$-symmetric perturbations of Kasner solutions is slightly more general when the gauge condition that $Q$ and $J_{\del{x}}$ vanish simultaneously is \emph{not} imposed.} such that both the twist constant $J_{\del{x}}$ and the constant $Q$ vanish.
Assuming such a choice of coordinates, the metrics take the form
\begin{align}
\label{polT2metric}
    g = e^{2(\nu - u)} \left( -\alpha\rmd t^2 + \rmd \theta^2 \right)
    + e^{2u}\left(\rmd x +  G \rmd \theta \right)^2
    + e^{-2u} t^2\left(\rmd y + H \rmd \theta\right)^2,
\end{align}
where, as a consequence of the coordinate conditions, the function $G$ is time independent.
For this choice of coordinates $\rmd (\partial_x)^\flat = 2 \rmd u \wedge (\partial_x)^\flat$, and thus the twist one-form $\star (\mathrm{d}(\del{x})^\flat\wedge(\del{x})^\flat)$ associated with the Killing vector field $\del{x}$ also vanishes; the Killing vector field $\del{x}$ is therefore hypersurface orthogonal. 
The other Killing vector field $\del{y}$ is hypersurface orthogonal if and only if $J_{\del{y}}=0$.
In fact, one can show that $\del{x}$ is the unique hypersurface orthogonal member of the Killing Lie algebra spanned by $\del{x}$ and $\del{y}$ if $J_{\del{y}}\neq 0$, while all members of this Lie algebra are hypersurface orthogonal if $J_{\del{y}}=0$.
Below, in the Einstein equations we label the square of the remaining non-vanishing twist $J_{\del{y}}$ by $m$. 

The vacuum Einstein equations for polarized $\Tbb^2$-symmetric spacetimes take the form (see \cite{BERGER1997})
\begin{align}
    \del{tt} u + t^{-1}\del{t}u - \alpha \del{\theta \theta} u 
        & = \frac 12 \del{\theta} \alpha \del{\theta} u 
            + \frac 12 \alpha^{-1} \del{t}\alpha \del{t}u \label{Evac.u.tt} \\
    \del{tt} \nu - \alpha \del{\theta \theta} \nu 
        & = \frac 12 \del{\theta}\alpha\del{\theta}\nu 
            + \frac 12 \alpha^{-1} \del{t}\alpha\del{t}\nu
            - \frac 14 \alpha^{-1}(\del{\theta}\alpha)^2 
            + \frac 12 \del{\theta\theta} \alpha 
            - (\del{t}u)^2 + \alpha (\del{\theta}u)^2
            + \frac 34 m \alpha t^{-4} e^{2\nu} \label{Evac.nu.tt} \\
    \del{t}\nu 
        &= t (\del{t}u)^2 + t \alpha (\del{\theta}u)^2
            + \frac 14 m \alpha t^{-3} e^{2\nu} \label{Evac.nu.t} \\
    \del{\theta}\nu         
        &= 2t\del{t}u\del{\theta}u 
        - \frac 12 \alpha^{-1}\del{\theta}\alpha \label{Evac.nu.theta} \\
    \del{t}\alpha 
        &= - m \alpha^2 t^{-3} e^{2\nu} \label{Evac.alpha.t} \\
    \del{t}G &= 0 \label{Evac.G.t} \\
    \del{t}H &= \sqrt{m}\sqrt{\alpha} t^{-3} e^{2\nu}.   \label{Evac.H.t}
\end{align}
Equation \eqref{Evac.nu.tt} is redundant as it can be generated from the remaining equations. 
Below we work with the first order evolution equation for $\nu$ \eqref{Evac.nu.t} and ignore \eqref{Evac.nu.tt}. 

\subsection{The Kasner Spacetimes}
\label{s.kasner_spacetimes}
The well-known Kasner spacetimes \cite{kasner1921} are an important example of a family of spatially homogeneous solutions of the Einstein vacuum equations. 
These can be written in constant mean curvature gauge in the form $g_K = -\rmd \tau^2 + \tau^{2p_1} \rmd x^2 + \tau^{2p_2} \rmd y^2 + \tau^{2p_3} \rmd z^2$, where the Kasner exponents satisfy the relations $\sum_i p_i = \sum_i p_i^2 = 1$.
Such spacetimes form a subclass of the polarized $\Tbb^2$-symmetric spacetimes in which the twist vanishes $(m = 0)$.
In the areal coordinates used in \eqref{polT2metric} the fields take the form 
\begin{equation}
    \label{kasner.fields}
    u^{(K)}=\frac 12(1-K)\ln t,\quad
    \nu^{(K)}=\frac 14(1-K)^2\ln t,\quad
    \alpha^{(K)}=1, \quad
    G^{(K)}= H^{(K)} = 0
\end{equation}
for an arbitrary real constant $K$, which parameterizes the full family of Kasner spacetimes. 
The corresponding \emph{Kasner exponents} are
\begin{equation}
  \label{eq:Kasnerexponents}
  p_1=\frac{K^2-1}{K^2+3},\quad p_2=\frac{2(1+K)}{K^2+3},\quad p_3=\frac{2(1-K)}{K^2+3}.
\end{equation}

\section{Definition of AVTD Behavior}
\label{s.AVTD_spacetimes}
As described above in the Introduction~\ref{s.introduction}, solutions to the Einstein equations may or may not be asymptotically velocity term dominated (AVTD).
This important property is conjectured to hold generally in the polarized $\Tbb^2$-symmetric class.
In this subsection we define the AVTD property, and review known results regarding AVTD behavior within the $\Tbb^2$-symmetric class.

The key feature of an AVTD solution is that near the singularity, which we assume here is located at $t=0$, as in the areal gauge above, the dynamics are modeled by a simpler system of equations. 
This model system, which we refer to as the \emph{Velocity Term Dominated (VTD) System}, is obtained from the Einstein equations by dropping certain terms \cite{Isenberg:1990}, notably those involving spatial derivatives in a specified gauge.
For the polarized $\Tbb^2$-symmetric spacetimes in areal gauge the VTD system is 
\begin{align}
  t\del{t}(t\del{t} u) 
      & = -\frac m2 \alpha (t\del{t}u)t^{-2}e^{2\nu}, \label{vtd.u} \\
  t\del{t}\alpha 
      &= - m \alpha^2 t^{-2} e^{2\nu}, \label{vtd.alpha} \\
  t\del{t}\nu 
      &= (t\del{t}u)^2 + \frac m4 \alpha t^{-2} e^{2\nu}, \label{vtd.nu} \\
  \del{\theta}\nu         
      &= 2t\del{t}u\del{\theta}u 
      - \frac 12 \alpha^{-1}\del{\theta}\alpha, \label{vtd.constraint} \\
  \del{t}G &= 0, \label{vtd.G} \\
  \del{t}H &= \sqrt{m}\sqrt{\alpha} t^{-3} e^{2\nu}.   \label{vtd.H}      
\end{align}
Equations \eqref{vtd.G} and \eqref{vtd.H} are the same as the Einstein equations \eqref{Evac.G.t} and \eqref{Evac.H.t}, since they do not contain spatial derivative terms. 
The constraint \eqref{vtd.constraint}, which contains only spatial derivative terms, is also the same as \eqref{Evac.nu.theta} in the Einstein equations. 
Equations \eqref{vtd.u}-\eqref{vtd.constraint} form the main VTD system and the equations \eqref{vtd.G} and \eqref{vtd.H} can be integrated once solutions to the main VTD system are obtained. 

In this work we use the following definition of the AVTD property.
\begin{Def}
  \label{def.avtd}
  A solution $(u, \nu, \alpha, G, H)$  with twist $m >0$ of the polarized $\Tbb^2$-symmetric Einstein vacuum equations \eqref{Evac.nu.tt}-\eqref{Evac.H.t} is AVTD in areal gauge provided there exists a solution 
  \begin{equation*}
      (u^{(VTD)}, \nu^{(VTD)}, \alpha^{(VTD)}, G^{(VTD)}, H^{(VTD)})
  \end{equation*} of the VTD system \eqref{vtd.u}-\eqref{vtd.H} such that for an appropriate $k \in \Zbb$ and $\beta \in \Rbb^+$, 
  \begin{align}
    \label{def.avtd.estimates}
    \begin{split}
    \lim_{t\searrow 0}&\Bnorm{(u, \nu, \alpha, G, H) - (u^{(VTD)}, \nu^{(VTD)}, \alpha^{(VTD)}, G^{(VTD)}, H^{(VTD)})}_{H^k} = 0, \\ \AND \\
    \lim_{t\searrow 0}&\Bnorm{t^\beta\del{t}\left((u, \nu, \alpha, G, H)-(u^{(VTD)}, \nu^{(VTD)}, \alpha^{(VTD)}, G^{(VTD)}, H^{(VTD)})\right)}_{H^k} = 0,
    \end{split}
  \end{align}
  where $H^k$ denotes the Sobolev space with index $k$.
\end{Def}
The weight $t^\beta$ in the norm for the time-derivatives is motivated by the singular nature of certain metric fields. 
Roughly speaking, if a spacetime solution is AVTD in the sense specified by Definition~\ref{def.avtd}, then for some time independent functions $\kt$ and $\ut$, the field $u \sim \frac 12(1-\kt)\ln(t) + \ut $ near $t\searrow 0$.
While this asymptotic behavior provides enough information to control the difference $u - u^{(VTD)}$, without employing additional high order estimates, the AVTD behavior does not control the asymptotic difference between $\del{t}u$ and $\del{t}u^{(VTD)}$.
Below, in Section~\ref{s.proof_main_result}, we prove that AVTD behavior with $\beta=1$ holds for polarized $\Tbb^2$-symmetric spacetimes for initial data sufficiently close to Kasner initial data.

This definition emphasizes an essential point of the AVTD property which is that the solution of the full Einstein equations with this property asymptotically approaches a solution of a system of ordinary differential equations (the VTD system).
This definition is equivalent to the original definition of Isenberg and Moncrief \cite{Isenberg:1990} provided one uses suitably weighted norms for the spatial metric and the second fundamental form in that work.

We note that any AVTD solution has the feature that at each spatial point the metric fields converge to values associated with a spatially homogeneous Kasner solution. 
The particular member of the Kasner family generally varies from one spatial point to another.

Within the $\Tbb^2$-symmetric class that we focus on in this work, families of AVTD solutions were first proved to exist for polarized spacetimes in the analytic category \cite{isenberg1999}.
These results follow from the analysis of an appropriate singular initial value problem in which solutions of the VTD system are prescribed as singular data.
This approach was later extended to the so-called ``half-polarized spacetime solutions" (with or without a cosmological constant \cite{Clausen2007}), which are characterized by certain restrictions on the asymptotic data.
The existence of families of smooth \cite{ClausenThesis2007}, and Sobolev-regular \cite{ames2013a} AVTD solutions are also known within the polarized and half-polarized $\Tbb^2$-symmetric class, again making use of Fuchsian methods for analyzing the singular initial value problem.
While the above cited works establish the existence of infinite-dimensional familes of AVTD solutions in the polarized $\Tbb^2$-symmetric class, Theorem~\ref{thm:mainresult} below shows, for the first time, that AVTD solutions form an open set within this class.

\section{Main Result}
\label{s.main_result}
The purpose of this paper is to discuss and prove the following result regarding the contracting asymptotics of polarized $\Tbb^2$-symmetric vacuum solutions.

\begin{thm}
  \label{thm:mainresult}
  Pick any $K\in \Rbb$ such that $|K-1| > 2$, any twist constant $m \ge 0$, $k\in \Zbb_{\ge 3}$ and $\sigma\in (0,2\kappa_0/3)$, where $\kappa_0 = \min\{1, \frac 14(K-3)(K+1)\}$.
  Then there exists $T_0 > 0$ and $R_0 >0$ sufficiently small such that any choice of functions $(\mathring{\omega}, \mathring{\nu}, \mathring{G}, \mathring{H})\in H^k(\Tbb, \Rbb^4)$ and $(\mathring{u},\mathring\alpha)\in H^{k+1}(\Tbb, \Rbb^2)$ satisfying the Einstein constraint equation at $t=T_0$
  \begin{equation*}
    \del{\theta}\mathring{\nu}         
        = 2T_0\mathring{\omega}\del{\theta}\mathring{u} 
        - \frac 12 \mathring{\alpha}^{-1}\del{\theta}\mathring{\alpha}
  \end{equation*}
  and such that 
\begin{equation}
\label{eq:mainsmallness}
\delta:=\|(T_0\mathring{\omega}-\frac 12(1-K), T_0 \del{\theta}\mathring{u},
\mathring{\alpha}-1,T_0^{-1}e^{\mathring{\nu}},T_0\del{\theta}\mathring{\alpha})\|_{H^k}
< R_0, 
\end{equation}
generates a classical solution $(u, \nu, \alpha, G, H)$ 
to the Einstein vacuum equations of the form \eqref{polT2metric} with regularity
$(u, \partial_t u, \partial_\theta u, \nu, \partial_\theta\nu, \alpha, \partial_\theta\alpha, G, H)\in C^0\bigl((0,T_0],H^k(\Tbb,\Rbb^9)\bigr)\cap L^\infty\bigl((0,T_0],H^k(\Tbb,\Rbb^9)\bigr)\cap C^1\bigl((0,T_0],H^{k-1}(\Tbb,\Rbb^9)\bigr)$
with initial data $(u,\partial_t u,\nu,\alpha, G, H) =(\mathring{u}, \mathring{\omega}, \mathring{\nu}, \mathring{\alpha}, \mathring{G}, \mathring{H})$ at $t=T_0$. The solution satisfies the following properties:
  \begin{enumerate}
    \item Contracting asymptotics: 
    There exist asymptotic data functions $(\kt, \ut, \nut, \alphat, \Ht) \in H^{k-1}$ satisfying the asymptotic constraint
    \begin{equation*}
      \del{\theta}\nut - (1 - \kt)\del{\theta}\ut + \frac 12 \del{\theta}\ln(1+\alphat) = 0,
    \end{equation*}
    such that\footnote{We write $f(t)\lesssim g(t)$ for arbitrary two functions $f$ and $g$ defined on $(0,T_0]$ if there exists a constant $C$ such that $f(t)\le C g(t)$ for all $t\in (0,T_0]$.}
    \begin{align*}
      \Bnorm{u(t)-u^{(\kt)}(t)-\ut}_{H^{k-1}}
        &\lesssim t+t^{2\kappa_0-2\sigma},\\
      \Bnorm{t\del{t}u(t)- t\del{t} u^{(\kt)}(t)}_{H^{k-1}}
        &\lesssim t+t^{2\kappa_0-2\sigma},\\
      \Bnorm{\alpha(t) -\alpha^{(\kt)}(t) - \alphat}_{H^{k-1}}
        &\lesssim t+t^{2\kappa_0-2\sigma},\\
      \norm{\nu(t)-\nu^{(\kt)}(t)-\nut}_{H^{k-1}}&\lesssim t+t^{2\kappa_0-2\sigma},\\      
      \Bnorm{H(t)-\Ht}_{H^{k-1}}&\lesssim t^{\frac 12 \min_{\theta\in [0,2\pi)}\{(\kt(\theta)-3)(\kt(\theta)+1)\}-2\sigma},
    \end{align*}
    for all $t \in (0, T_0]$, where $(u^{(\kt)},\nu^{(\kt)},\alpha^{(\kt)}, G^{(\kt)}, H^{(\kt)})$ is the ``pointwise Kasner spacetime'' determined by $\kt$ via \eqref{kasner.fields}. 
    Lastly we have $G=\mathring{G}$ for all $(t,\theta) \in (0,T_0]\times\Tbb$ and
    the Kasner parameters $\kt$ and $K$ are related as
    \begin{equation}
      \label{eq:EstimateKasnerParameters}
      \norm{\kt-K}_{H^{k-1}}\lesssim \delta.
    \end{equation}
    \item AVTD: Each solution is asymptotically velocity term dominated in the sense of Definition \ref{def.avtd} with $\beta = 1$, and $C^2$-inextendible in the contracting direction. 
    \item Stability of the Kasner: 
      We have
    \begin{align}
      \label{thm.mainresult.kasnerstability.first}
      \Bnorm{(u-u^{(K)})/\ln t}_{H^{k-1}} 
        &\lesssim \frac 1{|\ln T_0|}\Bnorm{\mathring{u}- \frac 12(1-K)\ln T_0}_{H^{k-1}} \nonumber \\ 
        & + \Bnorm{T_0\mathring{\omega}
          -\frac 12(1-K)}_{H^{k-1}}
          +T_0+T_0^{2\kappa_0-2\sigma},\\
      \Bnorm{t\del{t}u-t\del{t}u^{(K)}}_{H^{k-1}} 
        &\lesssim \Bnorm{T_0\mathring{\omega}-\frac 12(1-K)}_{H^{k-1}}
          +T_0+T_0^{2\kappa_0-2\sigma},\\
      \norm{(\nu-\nu^{(K)})/\ln(t)}_{H^{k-1}}&\lesssim \Bnorm{T_0\mathring{\omega}-\frac 12(1-K)}_{H^{k-1}}+\Bnorm{T_0\mathring{\omega}-\frac 12(1-K)}_{H^{k-1}}^2\nonumber\\
        &+\frac 1{|\ln T_0|}\norm{\mathring{\nu}}_{H^{k-1}}
          +T_0+T_0^{2\kappa_0-2\sigma},\\ 
      \Bnorm{\alpha-\alpha^{(K)}}_{H^{k-1}}
        &\lesssim \Bnorm{\mathring{\alpha} - 1}_{H^{k-1}}
        +T_0+T_0^{2\kappa_0-2\sigma},\\
      \Bnorm{G-G^{(K)}}_{H^{k}} 
        &=\Bnorm{\mathring G}_{H^{k}},\\
        \label{thm.mainresult.kasnerstability.last}
      \Bnorm{H-H^{(K)}}_{H^{k-1}}
        &\lesssim \Bnorm{\mathring{H}}_{H^{k-1}}+T_0^{\frac 12 \min_{\theta\in [0,2\pi)}\{(\kt(\theta)-3)(\kt(\theta)+1)\}-2\sigma},
    \end{align}
    where $(u^{(K)},\nu^{(K)},\alpha^{(K)}, G^{(K)}, H^{(K)})$ is the Kasner solution determined by $K$ via \eqref{kasner.fields}.
  \end{enumerate}
\end{thm}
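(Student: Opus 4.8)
The plan is to reformulate the polarized $\Tbb^2$-symmetric Einstein vacuum equations \eqref{Evac.u.tt}--\eqref{Evac.H.t} as a first-order Fuchsian system in a set of rescaled variables adapted to the Kasner solution \eqref{kasner.fields}, and then invoke Theorem~\ref{symthm} from the Appendix. The natural choice of variables is suggested by the smallness quantity \eqref{eq:mainsmallness}: set $\omega = \del{t}u$ and work with the differences $t\omega - \tfrac12(1-K)$, $t\del{\theta}u$, $\alpha - 1$, $t^{-1}e^{\nu}$ (equivalently something like $t^{-2}e^{2\nu}$ to capture the twist term), and $t\del{\theta}\alpha$, since these are the combinations that remain bounded (indeed small) along the Kasner background. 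First I would introduce a logarithmic time variable, or work directly with $t\del{t}$ as the natural singular derivative, rewrite \eqref{Evac.u.tt} as a first-order system in $(u, t\del{t}u, t\del{\theta}u)$, append the first-order equations \eqref{Evac.nu.t}, \eqref{Evac.alpha.t} for $\nu$ and $\alpha$, and carry along \eqref{Evac.G.t}, \eqref{Evac.H.t} (which decouple and are integrated afterwards). One then checks that in these variables the system has the block structure $t\del{t}U + \mathcal{A}U = \mathcal{F}(t,U) + t^{\epsilon}(\cdots)$ required by Theorem~\ref{symthm}, with the matrix $\mathcal{A}$ having the correct positivity/spectral properties; the key algebraic fact is that the relevant eigenvalue is controlled by $\kappa_0 = \min\{1, \tfrac14(K-3)(K+1)\}$, which is positive precisely when $|K-1|>2$ — this is where that hypothesis enters.

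The key steps, in order, are: (i) derive the first-order Fuchsian reformulation and identify the precise change of variables that subtracts off the Kasner asymptotics; (ii) verify the structural hypotheses of Theorem~\ref{symthm} — symmetric-hyperbolicity of the principal part in $\theta$, the block-diagonal/positivity condition on the zeroth-order coefficient matrix, and the required decay rate of the ``source'' terms (the twist terms carry factors $t^{-2}e^{2\nu} \sim t^{-2} t^{(1-K)^2/2} = t^{\frac14(K-3)(K+1) \cdot 2 - ?}$, so one must track the exponent carefully and confirm it exceeds the threshold dictated by $\sigma \in (0, 2\kappa_0/3)$); (iii) translate the smallness condition \eqref{eq:mainsmallness} with radius $R_0$ into the smallness hypothesis of Theorem~\ref{symthm}, and check that the Einstein constraint \eqref{Evac.nu.theta} is preserved by the evolution (so that it suffices to impose it at $t = T_0$); (iv) apply Theorem~\ref{symthm} to conclude global existence on $(0,T_0]$ with the stated regularity and with the leading-order decay estimates of the difference $U$ — this is Proposition~\ref{prop:main_existence}. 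Then (v) bootstrap: feed the crude estimates back into the evolution equations to extract the asymptotic data functions $(\kt,\ut,\nut,\alphat,\Ht)$ and the improved remainder bounds $\lesssim t + t^{2\kappa_0 - 2\sigma}$ (Proposition~\ref{prop:hoexp}), integrating \eqref{Evac.nu.t} and \eqref{Evac.alpha.t} along $t$ and using the already-established decay of $t\del{t}u - \tfrac12(1-K)$; the $H$-estimate follows by integrating \eqref{Evac.H.t} and noting $\alpha t^{-3}e^{2\nu} \sim t^{\frac12(\kt-3)(\kt+1) - 1}$ pointwise in $\theta$. Finally (vi) verify the AVTD conclusion by comparing with a solution of the VTD system \eqref{vtd.u}--\eqref{vtd.H} built from the same asymptotic data (existence of which is the singular-IVP result proved separately), and (vii) deduce $C^2$-inextendibility from blow-up of the Kretschmann scalar, computed from the asymptotic form of the metric. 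The stability-of-Kasner estimates \eqref{thm.mainresult.kasnerstability.first}--\eqref{thm.mainresult.kasnerstability.last} then follow by combining the contracting-asymptotics bounds with \eqref{eq:EstimateKasnerParameters} and the triangle inequality, writing, e.g., $u - u^{(K)} = (u - u^{(\kt)} - \ut) + (u^{(\kt)} - u^{(K)}) + \ut$ and using $u^{(\kt)} - u^{(K)} = \tfrac12(K - \kt)\ln t$ together with $\|\kt - K\|_{H^{k-1}} \lesssim \delta$; the division by $\ln t$ (resp. $|\ln T_0|$) converts the $\ln t$-growth into a bounded quantity and the data terms are estimated by tracing $\delta$ back through the constraint.

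I expect the main obstacle to be step (ii), specifically arranging the change of variables so that the zeroth-order matrix $\mathcal{A}$ genuinely satisfies the positivity hypothesis of Theorem~\ref{symthm} \emph{uniformly} in a neighborhood of the Kasner data, rather than just at the background. The $u$-equation \eqref{Evac.u.tt}, written in first-order form as a system for $(u - u^{(K)},\, t\del{t}u - \tfrac12(1-K),\, t\del{\theta}u)$, has a singular structure $t\del{t}V + \mathcal{A}V = \ldots$ where $\mathcal{A}$ involves the constant $\tfrac12(1-K)$ in a way that must be diagonalized and whose eigenvalues control whether $t\del{\theta}u$ and the difference $u - u^{(K)}$ decay or merely stay bounded; getting the sharp threshold $\kappa_0$ — and in particular handling the borderline case where $\tfrac14(K-3)(K+1) > 1$ versus $\le 1$ — requires a careful, possibly $\theta$-dependent, symmetrizer. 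A secondary subtlety is that $u$ itself (as opposed to its derivatives and the differences) is not small — it grows like $\ln t$ — so the variable $u$ must be handled separately from the Fuchsian system proper (as the regularity statement in the theorem already hints, with $u, \del{t}u, \del{\theta}u$ listed but the Fuchsian smallness imposed only on $t\mathring\omega - \tfrac12(1-K)$ and $T_0\del{\theta}\mathring u$); one recovers $u$ at the end by integrating $t\del{t}u$ in $\ln t$, which is exactly why the stability estimate for $u$ is naturally phrased after division by $\ln t$. Tracking the twist-term exponents through all of this, and ensuring the constraint propagation is compatible with the Fuchsian framework, are the remaining places where genuine care is needed.
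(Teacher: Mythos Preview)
Your proposal is correct and follows essentially the same route as the paper: the change of variables you identify is exactly the paper's $(w_1,w_2,\psi,\xi,\eta)$ (with $w_0$ carrying $u$ and handled separately by integration, as you anticipate), and steps (i)--(vii) line up with Propositions~\ref{prop:main_existence}, \ref{prop:hoexp}, \ref{prop:fullEFE}, \ref{prop.vtd_existence} and Sections~\ref{s.stability_of_kasner}--\ref{s.proof_main_result}. One small simplification relative to your worries in step (ii): no $\theta$-dependent symmetrizer is needed --- the matrices $B^0,B^1,\Bc$ depend only on $U$, and the positivity condition $\kappa B^0\le \Bc$ holds uniformly for $|U|<R$ once $|a|=\tfrac12|1-K|>1$, which is precisely where $|K-1|>2$ enters.
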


We remark that the constants $T_0$ and $R_0$ as well as all implicit constants in the estimates in this theorem depend on the choices of parameters $K$, $m$, $k$ and $\sigma$. The proof of Theorem~\ref{thm:mainresult} is given in Section~\ref{s.main_proof}.

\begin{rem}
  We anticipate that this result will be useful in the proof of strong cosmic censorship within the polarized $\Tbb^2$-symmetric class of spacetimes, where it is expected that all solutions are AVTD. 
  To establish strong cosmic censorship one must also show that the spacetime is geodesically complete in the expanding direction. 
  Such a result has been achieved for an open family of polarized $\Tbb^2$-symmetric spacetimes in work of LeFloch and Smulevici \cite{LeFloch:2016}.
  There is however a mismatch in the current results concerning the expanding direction versus the contracting. 
  While we require initial data imposed at a time sufficiently near the $t=0$ singularity, the results of \cite{LeFloch:2016} in the expanding direction require that the initial data be imposed sufficiently far in the future. 
  It is certainly of interest to see if results in these two limits can be brought together. 
\end{rem}

\begin{rem}
  One of the main structural restrictions of our theorem is
  the \emph{smallness assumption} \eqref{eq:mainsmallness} for the initial data imposed at $t=T_0>0$. 
  Essentially, our theorem states that for every choice of polarized $T^2$-symmetric Cauchy data satisfying the Einstein constraint for which this condition holds, there is a unique solution of the $\mathbb T^2$-symmetric polarized Einstein vacuum equations defined globally on $(0,T_0]\times\mathbb T^3$ for which the properties (1), (2) and (3) are satisfied.
  The solution in particular converges to the \emph{pointwise Kasner solution} $(u^{(\kt)}, \nu^{(\kt)}, \alpha^{(\kt)}, G^{(\kt)}, H^{(\kt)})$ given by \eqref{kasner.fields} at each spatial point $\theta$ for some spatially dependent function $\kt$. 
  This function $\kt$ is close to the constant value $K$ of the original background Kasner solution according to \eqref{eq:EstimateKasnerParameters}. 
  The particular purpose of property (3) -- Kasner stability -- is to provide estimates for the difference between any such solution of this theorem and the background Kasner solution determined by $K$. 
  The estimates in (3) interestingly suggest that the solutions which are compatible with the hypothesis of this theorem are not necessarily all uniformly close to that Kasner solution. 
  This is so because the quantities on the right sides of the estimates in (3) are not all necessarily small if \eqref{eq:mainsmallness} holds. 
  Nevertheless, it is clear that the set of all solutions, for which these right-hand sides in (3) \emph{are} small, do indeed satisfy the hypothesis of the theorem, especially \eqref{eq:mainsmallness}.  
\end{rem}

\begin{rem}
  \label{rem:instability}
The other main restriction of this theorem is the condition $|K-1|>2$ which restricts the class of Kasner solutions for which we prove stability within the family of polarized $\mathbb T^2$-symmetric vacuum solutions. 
This is consistent with earlier results \cite{isenberg1999,uggla2003,heinzle2012a,ames2013a} that strongly suggest that this restriction for $K$  is sharp. Kasner solutions with $-1< K< 3$ and $K\not= 1$ are expected to be \emph{unstable} in the contracting time direction if $m>0$ in the sense that polarized $\Tbb^2$-symmetric perturbations of Kasner solutions given by some $K$ in that range do \emph{not} converge to Kasner solutions given by some similar value $\kt$ at $t=0$ as in our theorem. 
The properties of the borderline cases $K=3$ or $K=-1$ as well as those of the flat Kasner case $K=1$ are far less clear. 
The dynamics of this instability are discussed in Section~\ref{sec:spatially_homogeneous} using explicit examples. 
In this remark here, we clarify the geometric meaning of the stability restriction $|K-1|>2$. 

Recall that each diagonal Kasner solution of the form given by \eqref{kasner.fields} and \eqref{polT2metric} and some $K\in\Rbb$ has six locally isometric diagonal coordinate representations \cite{wainwright1997}, each of which gives rise to a different value of $K$. 
Each of these six isometric representations of the same Kasner solution is obtained by taking one of the six possible permutations of the three spatial coordinates, redefining the time coordinate and finally introducing a new value of $K$ to bring the metric back to the same structure as the original metric \eqref{kasner.fields} and \eqref{polT2metric}. 
In the following we refer to any such gauge transformation as a \emph{Kasner transformation}. 
For the polarized $\Tbb^2$-symmetric perturbation problem, we consider \emph{generic} (in general spatially inhomogeneous) polarized $\Tbb^2$-symmetric perturbations of the Kasner solution given by an arbitrary $K\in\Rbb$. 
Using Kasner transformations, we can, in principle, map any polarized $\Tbb^2$-symmetric perturbation of the Kasner solution given by $K$ to another (isometric) polarized $\Tbb^2$-symmetric perturbation of the same, up to isometry, Kasner solution given by some other $K$.
Since we have fixed the gauge to represent the class of polarized $\Tbb^2$-symmetric geometries in Section~\ref{s.T2spacetimes}, however, this can only work for those Kasner transformations which are compatible with these gauge choices. 
It turns out that this rules out all Kasner transformations. 
This is so because, first, the $x$-$y$-coordinate plane coincides with the geometrically distinguished Killing orbits in our gauge and therefore all Kasner transformations except the one which swaps the $x$- and the $y$-coordinates are incompatible. 
However, the Killing field $\del{x}$ is geometrically distinguished from $\del{y}$ as the unique, up to rescaling, {hypersurface orthogonal} Killing field in our gauge if $m>0$.
Thus, the Kasner transformation that swaps the $x$ and the $y$ coordinates is in fact also incompatible. 
This shows that with respect to the gauge choices made in Section~\ref{s.T2spacetimes}, the range $|K-1|>2$ \emph{geometrically} distinguishes the regime of stable polarized $\Tbb^2$-symmetric Kasner perturbations from the range $-1<K<3$, $K\not=1$ of the (as suggested by Section~\ref{sec:spatially_homogeneous}) unstable ones.
\end{rem}

\begin{rem}  
Keeping in mind Remark~\ref{rem:instability}, we now present a \emph{fully geometric} characterization of the classes of stable and of unstable polarized $\Tbb^2$-symmetric perturbations,
which is equivalent to the gauge dependent characterization of stability in Theorem~\ref{thm:mainresult} together with the strong evidence for unstable dynamics in Section~\ref{sec:spatially_homogeneous}.
Pick an arbitrary spatially homogeneous Bianchi I solution $g^{(0)}$ on $\Rbb\times\Tbb^3$ of the vacuum equations (i.e., a Kasner solution). 
Let $q_1$, $q_2$ and $q_3$ denote the eigenvalues of the Weingarten map (i.e., the mixed component second fundamental form) induced by $g^{(0)}$ on the foliation of spatially homogeneous surfaces. 
Suppose that $q_1$, $q_2$ and $q_3$ are distinct and have corresponding eigenvector fields $X_1$, $X_2$ and $X_3$. 
It then follows that we can find $\tau\in\Rbb$ such that $p_1=-\tau q_1$, $p_2=-\tau q_2$, $p_3=-\tau q_3$ satisfy the Kasner conditions $\sum_i p_i = \sum_i p_i^2 = 1$. 
Consider the set of all vacuum metrics $g$ on $\Rbb\times\Tbb^3$ with the properties that, 
  (i), $X_2$ and $X_3$ are Killing vector fields of $g$, 
  (ii), $g(X_2,X_3)=0$, and, 
  (iii), $X_3$ is hypersurface orthogonal with respect to $g$.
Theorem~\ref{thm:mainresult} states that there is an open subset of \emph{stable} polarized $\Tbb^2$-symmetric perturbations $g$ of $g^{(0)}$ provided $1>p_1>p_2$. 
On the other hand if $1>p_2>p_1$,  Section~\ref{sec:spatially_homogeneous} provides evidence that there is an open subset of \emph{unstable} polarized $\Tbb^2$-symmetric perturbations of $g^{(0)}$.
Note that according to \eqref{eq:Kasnerexponents}, the case $1>p_1>p_2$ corresponds to $|K-1|>2$, while $p_1<p_2<1$ yields $-1<K<3$, $K\not=1$.
The exceptional case $p_1=p_2$ corresponds to $K=-1$ or $K=3$. 
\end{rem}

\begin{rem}
  While the geometric source of the restiction $|K-1|>2$ is described above, the condition also arises analytically in the application of our method.
  In the proof of the Theorem~\ref{thm:mainresult} in Section~\ref{s.main_proof} below, we introduce first order variables with parameter $a \in \Rbb$ (cf. \eqref{nvars.4}). 
  By adjusting $a$, we obtain, in Propositions~\ref{prop:main_existence} and \ref{prop:fullEFE}, an open ball of solutions of the full polarized $\Tbb^2$-symmetric equations containing the Kasner solution with a given parameter $K$, where $a = \frac{1}{2}(1-K)$.
  The inequality $|K-1|>2$ arises as a necessary condition to obtain the estimate \eqref{kappalbnd} on the coefficient matrices for the first order system. 
  In particular, the parameter must satisfy $|a| > 1$, which, under the above relation between $a$ and $K$, is equivalent to $|K-1|>2$. 
  
  This condition on $K$ also ensures that an initial time $T_0$ can be found such that the norm of the initial Kasner data, which scales as $T_0^{\frac 14(K-3)(K+1)}$, can be made small, and is thus compatible with the condition \eqref{eq:mainsmallness}.
  We note that given a ball of polarized $\Tbb^2$-symmetric initial data centered on a particular Kasner data with Kasner parameter $K$, any other Kasner data with parameter $\Kt$ satisfying $|\Kt-1| > |K-1|$ is also contained in the ball.
\end{rem}

\begin{rem}
  \label{rem.sharp_asymptotics}
  The estimates specified for the contracting asymptotics in (1) of Theorem \ref{thm:mainresult} are sharp in the sense that we recover the full leading order terms that are identified in the Fuchsian studies of the singular initial value problem for the polarized $\Tbb^2$-symmetric spacetimes \cite{isenberg1999,Clausen2007,ames2013a}.
  We note that our results, at least for the spatial metric, extend the asymptotics obtained in \cite{fournodavlos2020b} for polarized $\Tbb^2$-symmetric spacetimes considered as a subclass of the polarized $U(1)$-symmetric spacetimes, which are considered in that work. 
  In \cite{fournodavlos2020b} the authors obtain sharp estimates for the asymptotics of the second fundamental form, whereas asymptotic estimates for the frame components are multiplied by a suitably large constant power of $t$.
  The more detailed asymptotics we obtain in this work are likely a benefit of working within a symmetry-adapted gauge. 
\end{rem}

\section{Spatially homogeneous solutions and the nature of the instability for $-1<K<3$, $K\not=1$}
\label{sec:spatially_homogeneous}

Before we present the proof of Theorem~\ref{thm:mainresult}  in Section~\ref{s.main_proof},  we first derive the general class of spatially homogeneous solutions of \eqref{Evac.u.tt}-\eqref{Evac.H.t} explicitly in this section. 
On the one hand, this class of solutions serves as an illustration for the theoretical results in Theorem~\ref{thm:mainresult}. 
On the other hand, it allows us to shed light on the nature of perturbations of Kasner solutions with $-1<K<3$, $K\not=1$, and, in fact, show that such solutions are unstable.
Within the spatially homogeneous subclass of polarized $\Tbb^2$-symmetric spacetimes, this instability is pure gauge.
However, as we explain in Remark~\ref{rem:instability}, some of the corresponding gauge transformations are incompatible with the gauge choices made in Section~\ref{s.T2spacetimes} to represent the \emph{full} class of generally inhomogeneous {polarized $\Tbb^2$-symmetric perturbations}.
In this sense, the instability is a geometric feature of the polarized $\Tbb^2$-symmetric spacetimes, which we anticipate is also present in the spatially inhomogeneous setting.

Consider an arbitrary metric of the form \eqref{polT2metric} for which all metric functions $\nu$, $u$, $\alpha$, $G$ and $H$ are constant with respect to the spatial coordinates $(\theta,x,y)$ but may depend on $t$.
Any such metric is spatially homogeneous of Bianchi I-type \cite{wainwright1997}. 
It is a well-known fact \cite{wainwright1997} that all Bianchi I solutions of the vacuum equations are locally isometric to Kasner solutions. 
This allows us to construct the general class of spatially homogeneous solutions of \eqref{Evac.u.tt}-\eqref{Evac.H.t} by applying a suitable class of coordinate transformations to the standard diagonal Kasner metric in the areal coordinate form given by \eqref{kasner.fields} and \eqref{polT2metric}
  \begin{equation}
    \label{eq:Kasnermetricareal}
    g={\tilde t}^{\frac{K^2-1}{2}}(-\rmd {\tilde t}^2+\rmd\thetat^2)+{\tilde t}^{1-K} \rmd \xt^2+{\tilde t}^{1+K}\rmd \yt^2,
  \end{equation}
with the coordinates in equation \eqref{eq:Kasnermetricareal} labeled as $(\tilde t,\thetat,\xt,\yt)$.
A general class of local\footnote{The coordinate transformations in this section are in general incompatible with the \emph{global} $\Rbb\times \Tbb^3$-topology of the spacetime manifold. 
If we apply a ``local coordinate transformation'', we in fact first go to the universal cover $\Rbb\times\Rbb^3$, then apply the coordinate transformation on that manifold, and then finally go back to the original spacetime manifold $\Rbb\times\Tbb^3$ by making each spatial coordinate $2\pi$-periodic. This is common practice in the literature.} coordinate transformations from the coordinates $\xt^\mu=(\tilde t,\thetat,\xt,\yt)$ above to a new set of coordinates $x^\mu=(t,\theta,x,y)$ may now be introduced, where in a first step, we require that the Killing Lie algebras spanned by $(\del{\thetat}, \del{\xt}, \del{\yt})$ and $(\del{\theta}, \del{x}, \del{y})$ are isomorphic. 
This implies that the  Jacobian $M=(\partial \xt^\mu/\partial x^\nu)$ is of the form
  \begin{equation}
    \label{eq:coordinateJ}
    M=
    \begin{pmatrix}
      M_{0}(t) & 0 & 0 & 0\\
      M_1 & A_{11} & A_{12} & A_{13}\\ 
      M_2 & A_{21} & A_{22} & A_{23}\\ 
      M_3 & A_{31} & A_{32} & A_{33}
    \end{pmatrix},
  \end{equation}
where the submatrix $(A_{ik})$ is non-singular and constant both in space and time. 
A direct calculation reveals that the metric obtained from \eqref{eq:Kasnermetricareal} by this coordinate transformation is of the form \eqref{polT2metric} provided we set $M_1=M_2=M_3=0$ (in order to make the shift components zero), $A_{12}=A_{23}=A_{32}=0$ (in order to make $Q$ in \eqref{T2metric} zero), and $G$ constant in time (which is equivalent to the condition that the first twist constant vanishes and $\del{x}$ is therefore a hypersurface orthogonal Killing vector field), and we define the new areal  (with respect to the $x$-$y$-coordinate orbits) time coordinate
\begin{equation}
  \label{eq:transformedarealtime}
  t=A_{22}\sqrt{A_{33}^2{\tilde t}^2+A_{13}^2 {\tilde t}^{\frac{(K-1)^2}2}},
\end{equation}
from which we can determine the function $M_0(t)$ in \eqref{eq:coordinateJ} implicitly. 
Observe here that we can, in general, not express ${\tilde t}$ explicitly as a function of $t$. 
For this reason most functions in the following are  expressed as functions of ${\tilde t}$ instead of $t$.
Here, and in all of the following, we also assume that $A_{11}$, $A_{22}$ and $A_{33}$ are all positive.

Given these conditions, it follows from a straightforward calculation that the metric functions in \eqref{polT2metric} expressed in terms of the new coordinates $(t,\theta,x,y)$ take the following form:
\begin{align}
  \label{eq:explicitu}
  e^{2u}&=A_{22}^2 {\tilde t}^{1-K},\\
  G&=\frac{A_{21}}{A_{22}},\\
  H&=\frac{A_{11} A_{13}  {\tilde t}^{\frac{1}{2} (K-3) (K+1)}+A_{31} A_{33}}{A_{13}^2
    {\tilde t}^{\frac{1}{2} (K-3) (K+1)}+A_{33}^2},\\
  e^{2\nu}&=
    \frac{A_{22}^2 (A_{13}
A_{31}-A_{11} A_{33})^2}{A_{13}^2 {\tilde t}^{\frac{1}{2} (K-3)
            (K+1)}+A_{33}^2}{\tilde t}^{\frac 12(1-K)^2},\\
  \label{eq:explicitalpha}
\alpha&=\frac{16 \bigl(A_{13}^2 {\tilde t}^{\frac{1}{2} (K-3)
(K+1)}+A_{33}^2\bigr)^2}{A_{22}^2 (A_{13} A_{31}-A_{11}
A_{33})^2 \bigl(A_{13}^2 (K-1)^2 {\tilde t}^{\frac{1}{2} (K-3) (K+1)}+4
    A_{33}^2\bigr)^2}.
\end{align}
One can check that \eqref{eq:explicitu}-\eqref{eq:explicitalpha} is a solution of \eqref{Evac.u.tt}-\eqref{Evac.H.t} for 
\begin{equation}
  \label{eq:explicitm}
  m=\frac 14 A_{13}^2A_{22}^2A_{33}^2(K-3)^2(K+1)^2,
\end{equation}
using \eqref{eq:transformedarealtime}. 
Moreover, given any spatially homogeneous solution of \eqref{Evac.u.tt}-\eqref{Evac.H.t} for arbitrary $m\ge 0$, one can determine all parameters $K$, $A_{11}$, $A_{21}$, $A_{22}$, $A_{31}$ and $A_{33}$ and $A_{13}$ from the data of the solution at some arbitrary initial time $t=T_0$ and from \eqref{eq:explicitm} so that this solution agrees with \eqref{eq:explicitu}-\eqref{eq:explicitalpha} for all times $t$. 
The functions \eqref{eq:explicitu}-\eqref{eq:explicitalpha} therefore represent the general spatially homogeneous solution of \eqref{Evac.u.tt}-\eqref{Evac.H.t} for arbitrary $m\ge 0$.
The solution is twist-free (i.e. $m=0$) if and only if $A_{13}=0$ (unless $K=3$ or $K=-1$) which follows from \eqref{eq:explicitm}.
The original diagonal Kasner solutions \eqref{kasner.fields} correspond to the special case
\begin{equation}
  \label{eq:diagonalKasnerparameters}
  A_{11}=A_{22}=A_{33}=1,\quad A_{21}=A_{13}=A_{31}=0.
\end{equation}

Observe here that the \emph{twist is pure gauge} in the spatially homogeneous case. 
If we interpret \eqref{eq:explicitu}-\eqref{eq:explicitalpha} as a particular subclass of the family of polarized $\mathbb T^2$-symmetric solutions, however, all coordinate transformations which would take solutions with $m>0$ to solutions with $m=0$ are inconsistent with the gauge choices in Section~\ref{s.T2spacetimes}, which represent the \emph{full} class of {polarized $\Tbb^2$-symmetric geometries}.
Specifically, within the spatially homogeneous class, gauge transformations with an arbitrary value for $A_{13}$ are allowed as above, while in the full polarized $\Tbb^2$-symmetric class given in the gauge in Section~\ref{s.T2spacetimes} only gauge transformations with $A_{13}=0$ are permitted.

Given $K\in\Rbb$, and arbitrary parameters $\kt$ close to $K$, and $A_{11}$, $A_{21}$, $A_{22}$, $A_{31}$ and $A_{33}$ and $A_{13}$ close to the values in \eqref{eq:diagonalKasnerparameters},
we interpret \eqref{eq:explicitu}-\eqref{eq:explicitalpha} with $K$ replaced by $\kt$ as a (spatially homogeneous) perturbation of the original diagonal Kasner solution \eqref{kasner.fields} given by the parameter $K$. We refer to the Killing sub Lie algebra spanned by $\del{x}$ and $\del{y}$ for \eqref{eq:explicitu}-\eqref{eq:explicitalpha} as \emph{$\Tbb^2$-symmetry}, and recall that $\del{x}$ is by construction hypersurface orthogonal as required by our choice of gauge for the class of polarized $\Tbb^2$-symmetric metrics. 
As a consequence, Theorem~\ref{thm:mainresult} must apply to these perturbations.
The unstable behavior of solutions \eqref{eq:explicitu}-\eqref{eq:explicitalpha} with $\kt$ close to $K$ and $-1<K<3$ and $K\not=1$ is made manifest in the asymptotic behavior as $t\searrow 0$.
Notice that as a consequence of
\[\frac 12 (K-1)^2 - 2=\frac12 (K-3)(K+1),\]
we have, close to $t=\tilde{t}=0$, that $t=A_{22} A_{33}\tilde t+\ldots$ if $K>3$ or $K<-1$ or $A_{13}=0$, and $t=A_{22} |A_{13}|{\tilde t}^{\frac{(K-1)^2}4}+\ldots$ if $-1<K<3$, $K\not=1$ and $A_{13}\not=0$. 
In the regime $K>3$ or $K<-1$ or $A_{13}=0$, if the dynamics of polarized $\Tbb^2$-symmetric vacuum solutions is stable according to Theorem~\ref{thm:mainresult}, it therefore follows that $t$ and $\tilde t$ are ``essentially the same'' close to $t=0$ (up to a factor) and the limits $t\searrow 0$ of \eqref{eq:explicitu}-\eqref{eq:explicitalpha} can be read off directly:
\begin{gather*}
  {t}^{-(1-\kt)} e^{2u}=\frac{A_{22}^2}{(A_{22}A_{33})^{1-\kt}},\quad
  G=\frac{A_{21}}{A_{22}},\quad
  H\rightarrow\frac{A_{31}}{A_{33}},\\
  {t}^{-\frac 12(1-\kt)^2}e^{2\nu}\rightarrow
    \frac{A_{22}^2 (A_{13}
A_{31}-A_{11} A_{33})^2}{A_{33}^2 (A_{22}A_{33})^{(\kt-1)^2/2}},\quad
\alpha\rightarrow\frac{\bigl(A_{13}^2+A_{33}^2\bigr)^2}{A_{22}^2 (A_{13} A_{31}-A_{11}  A_{33})^2 A_{33}^4}.
\end{gather*}
We see that these limits depend continuously on the perturbation parameters $\kt$ close to $K$, and on $A_{11}$, $A_{21}$, $A_{22}$, $A_{31}$ and $A_{33}$ and $A_{13}$ close to the values in \eqref{eq:diagonalKasnerparameters}.

However, in the case $-1<K<3$ and $K\not=1$ we find that the limits at $t=0$
are significantly different, depending upon whether or not $A_{13}$ vanishes.
If $A_{13}=0$, then
\begin{gather*}
  t^{-(1-\kt)} e^{2u}\rightarrow \frac{A_{22}^2}{(A_{22}A_{33})^{1-\kt}},\quad
    G= \frac{A_{21}}{A_{22}},\quad
    H= \frac{A_{31}}{A_{33}},\\
    t^{-\frac{1}{2}(1-\kt)^2} e^{2\nu} \rightarrow \frac{A_{22}^2 A_{11}^2}{(A_{22}A_{33})^{(\kt-1)^2/2}},\quad
 \alpha=\frac{1}{A_{11}^2 A_{11}^2 A_{33}^2},
\end{gather*}
while if $A_{13}\not=0$, we have
\begin{gather*}
    e^{2u} t^{-\frac {4}{1-\kt}}\rightarrow \frac{A_{22}^2}{(A_{22}|A_{13}|)^{\frac {4}{1-\kt}}},\quad
    G= \frac{A_{21}}{A_{22}},\quad
    H\rightarrow\frac{A_{11}}{A_{13}},\\
    e^{2\nu} t^{-\frac {8}{(\kt-1)^2}}\rightarrow\frac{A_{22}^2 (A_{13}
  A_{31}-A_{11} A_{33})^2}{A_{13}^2 (A_{22}|A_{13}|)^{\frac {8}{(\kt-1)^2}}}, \quad
 \alpha\rightarrow\frac{16 }{A_{22}^2 (A_{13}
  A_{31}-A_{11} A_{33})^2 (\kt-1)^4}.
\end{gather*}
This shows explicitly that \emph{in the parameter range $-1<K<3$, $K\not=1$, the diagonal Kasner solutions (for which $m=0$) are unstable, and therefore they are, in a sense, bad models for general $m>0$-solutions. In fact, any arbitrarily small deviation from the twist-free case $m=0$ leads to drastically different asymptotic behavior at $t=0$.}

Given the explicit solutions discussed above, it is conceivable that 
certain perturbations of the $m\neq 0$ Kasner solutions (as opposed to the diagonal twist-free Kasner solutions) within the family of polarized $\Tbb^2$-symmetric solutions might be stable even if $-1<K<3$. 
While these Kasner solutions with twist are isometric to the diagonal twist-free Kasner solutions in the spatially homogeneous setting (as we have shown above), this would nevertheless constitute a geometrically distinguished perturbation analysis within the general class of spatially inhomogeneous polarized $\Tbb^2$-symmetric solutions.

\section{Proof of Main Result}
\label{s.main_proof}

\subsection{Reformulation of the Polarized $\Tbb^2$-Symmetric Vacuum Einstein equations as a First Order Symmetric Hyperbolic Fuchsian System} 
As noted above, to carry out the proof of our main result, Theorem~\ref{thm:mainresult}, we rely on Theorem~3.8 from reference \cite{BOOS:2020}, which we re-state in a modified form as Theorem~\ref{symthm} below. 
This theorem requires that the Einstein equations for the polarized $\Tbb^2$-symmetric spacetimes be re-expressed as a first-order symmetric hyperbolic Fuchsian system. 
We do this in two steps.  
As shown in the appendix of \cite{isenberg1999}, equations \eqref{Evac.u.tt}, \eqref{Evac.nu.t}, and \eqref{Evac.alpha.t} can be recast in first-order symmetric hyperbolic form. 
To do this, we introduce the variables 
\begin{equation}
\label{first_order_variables}
   (z_0, z_1, z_2) = (u, \del{t}u, \del{\theta}u), \quad \zeta = \del{\theta}\alpha.
\end{equation}
The resulting \emph{evolution system}, obtained as described in \cite{isenberg1999}, is
\begin{align}
\del{t}z_0&=z_1, \label{vacA.1}\\
\del{t}z_1&=\alpha \del{\theta}z_2-\frac{1}{t}z_1 - \frac{m}{2t^3}\alpha z_1 e^{2\nu}+\frac{1}{2}z_2\zeta, \label{vacA.2}\\
\alpha\del{t}z_2&=\alpha\del{\theta}z_1, \label{vacA.3}\\
\del{t}\alpha &=-\frac{m}{t^3}\alpha^2 e^{2\nu}, \label{vacA.4}\\
\del{t}\nu &= t z_1^2+t\alpha z_2^2+\frac{m}{4t^3}\alpha e^{2\nu} \label{vacA.5}
\intertext{and}
\del{t}\zeta &= -\frac{2m}{t^3}\alpha e^{2\nu}\biggl[\zeta+\alpha\biggl(2 t z_1 z_2 -\frac{\zeta}{2\alpha}\biggr)\biggr], \label{vacA.6}
\end{align}
together with \eqref{Evac.G.t} and \eqref{Evac.H.t}. Eq.~\eqref{Evac.nu.theta} and \eqref{first_order_variables} yield the following \emph{constraint equations} 
\begin{align}
  \label{eq:vacC.1}
  \del{\theta}\nu         
        &= 2t z_1 z_2 
        - \frac 12 \alpha^{-1}\zeta,\\
  \label{eq:vacC.2}
  \del{\theta}z_0&=z_2, \\
  \label{eq:vacC.3}
  \del{\theta} \alpha &= \zeta,
\end{align}
which must hold at each $t$ in addition to the above evolution equations. It has been shown in \cite{isenberg1999} that the \emph{constraint equations propagate}; i.e., given an arbitrary  time interval $I\subset (0,\infty)$ and a sufficiently smooth solution of \eqref{vacA.1}-\eqref{vacA.6} on $I$ with the property that \eqref{eq:vacC.1}-\eqref{eq:vacC.3} are satisfied at one instance of time $t_0\in I$, then \eqref{eq:vacC.1}-\eqref{eq:vacC.3} are satisfied for all $t\in I$.

While the system of evolution equations \eqref{vacA.1}-\eqref{vacA.6} is now in a useful symmetric hyperbolic form, which gives rise to the standard \emph{local-in time} well-posedness of the Cauchy problem, to apply the \emph{global-in time} existence Theorem~\ref{symthm}, it is necessary to recast the system in the \emph{Fuchsian form}.
To do this, we define the variables $\xi$, $\psi$, $w_0$, $w_1$, $w_2$ and $\eta$ via 
\begin{align}
    \nu&=\ln(t)+\ln(\xi), \label{nvars.1}\\
    \alpha&=1+\psi, \label{nvars.2} \\
    z_0&={a} w_0, \label{nvars.3}\\
    z_1&=\frac{1}{t}(a+w_1), \label{nvars.4}\\
z_2 & = \frac{1}{t}w_2 \label{nvars.5}
\intertext{and}
\zeta &= \frac{1}{t}\eta \label{nvars.6},
\end{align}
where $a\in \Rbb\setminus\{0\}$ is a constant. 
A short calculation shows we can write \eqref{vacA.1}-\eqref{vacA.6} in terms of these variables as 
\begin{align}
 \del{t}w_0&=\frac{1}t\Bigl(\frac{1}{a}w_1+1\Bigr), \label{vacB.1}  \\
 2\del{t}w_1&= 2(1+\psi)\del{\theta}w_2+\frac{1}{t}\eta w_2-\frac{1}{t}m(1+\psi)(a+w_1)\xi^2, \label{vacB.2}\\
  2(1+\psi)\del{t} w_2&=2(1+\psi)\del{\theta}w_1+\frac{2}{t}(1+\psi)w_2,
  \label{vacB.3} \\
\del{t}\psi &= -\frac{1}{t}m(1+\psi)^2\xi^2,  \label{vacB.4}\\
4 \del{t}\xi&=\frac{1}{t} \bigl(4
   a^2-4+8 a w_1+4w_2^2 (1+\psi )+4w_1^2\bigr)\xi+ \frac{1}{t} m  (1+\psi)\xi^3 \label{vacB.5}
   \intertext{and}
 \del{t}\eta &= \frac{1}{t}\eta -\frac{m}{t}(1+\psi) \bigl(4 w_2 (a+w_1) (1+\psi )+\eta \bigr)\xi^2. \label{vacB.6}
\end{align}
The constraint equations \eqref{eq:vacC.1}--\eqref{eq:vacC.3} take the form
\begin{align}
  \label{eq:vacCC.2}
  t{a}\del{\theta}w_0&=w_2,\\
  \label{eq:vacCC.1}
  t\del{\theta}\xi         
        &= 2 (a+w_1) w_2 \xi
        - \frac 12 (1+\psi)^{-1}\eta \xi,\\
  \label{eq:vacCC.3}
  t \del{\theta}\psi&=\eta.
\end{align}
We note that the equations \eqref{vacB.2}-\eqref{vacB.6} form a closed subsystem of evolution equations for the variables $\{w_1,w_2,\psi,\xi,\eta\}$. 
We therefore focus on this subsystem first, together with \eqref{eq:vacCC.1}-\eqref{eq:vacCC.3}, and then solve the decoupled equations \eqref{Evac.G.t}, \eqref{Evac.H.t} and \eqref{vacB.1}  together with \eqref{eq:vacCC.2} in order to obtain solutions of the full vacuum Einstein's equations.

Writing the core evolution system \eqref{vacB.2}-\eqref{vacB.6} in matrix form, we have
\begin{equation}
    B^0\del{t}U+B^1\del{\theta}U=\frac{1}{t}\Bc\Pbb U
    +\frac{1}{t}F \label{vacC}
\end{equation}
where
\begin{align}
    U&= (w_1,w_2,\psi,\xi,\eta)^{\tr}, \label{Udef}\\
    B^0&=\begin{pmatrix} 
     2 & 0 & 0 & 0 & 0\\
     0 & 2(1+\psi) & 0 & 0 & 0\\
     0 & 0 & 1 & 0 & 0\\
     0 & 0 & 0 & 4 & 0\\
     0 & 0 & 0 & 0 & 1\end{pmatrix}, \label{B0def}\\
    B^1&=\begin{pmatrix} 
     0 & -2(1+\psi) & 0 & 0 & 0\\
     -2(1+\psi) & 0 & 0 & 0 & 0\\
     0 & 0 & 0 & 0 & 0\\
     0 & 0 & 0 & 0 & 0\\
     0 & 0 & 0 & 0 & 0\end{pmatrix}, \label{B1def}\\
    \Bc&=\begin{pmatrix} 
     2 & 0 & 0 & 0 & 0\\
     0 & 2(1+\psi) & 0 & 0 & 0\\
     0 & 0 & 1 & 0 & 0\\
     0 & 0 & 0 & 4
   a^2-4+8 a w_1+4w_2^2 (1+\psi )+4w_1^2 +m(1+\psi)\xi^2& 0\\
     0 & 0 & 0 & -4m(a+w_1)(1+\psi)^2 w_2\xi & 1-m (1+\psi)\xi^2\end{pmatrix}, \label{Bcdef}\\
    \Pbb&=\begin{pmatrix} 
     0 & 0 & 0 & 0 & 0\\
     0 & 1 & 0 & 0 & 0\\
     0 & 0 & 0 & 0 & 0\\
     0 & 0 & 0 & 1 & 0\\
     0 & 0 & 0 & 0 & 1\end{pmatrix}, \label{Pbbdef}\\
    \intertext{and}
     F &= \begin{pmatrix}
    \eta w_2-m(1+\psi)(a+w_1)\xi^2\\0\\-m(1+\psi)^2\xi^2\\0\\0
    \end{pmatrix}. \label{Fdef}
\end{align}

The reason for rewriting equations \eqref{vacB.2}-\eqref{vacB.6} in the form of equation \eqref{vacC} is to verify that this system is in the Fuchsian form of Section~\ref{sec:Fuchsian} in the Appendix, and more importantly, as we show in the next section, that the system satisfies, for certain choices of the constants $a,b$, all of the coefficient assumptions stated in Section \ref{coeffassumps} of the Appendix. 
This establishes via an application of Theorem \ref{symthm}, the existences of solutions to the global initial value problem (GIVP)
\begin{align}
    B^0\del{t}U+B^1\del{\theta}U&=\frac{1}{t}\Bc\Pbb U
    + \frac{1}{t}F\quad \text{in $(0,T_0]\times \Tbb$,} \label{vacIVP.1}\\
    U &= \mathring{U}\hspace{2.1cm} \text{in $\{T_0\}\times \Tbb$,} \label{vacIVP.2}  
\end{align}
under a suitable smallness condition imposed on the initial data $\mathring{U}$, and for a suitable choice of the constant $a$. 

\subsection{Global Existence for the Cauchy Problem Near the Singularity for Initial Data near Kasner}
\label{s.global_existence}

\subsubsection{Coefficient properties\label{coefficient}}
To apply Theorem~\ref{symthm} to the Einstein equations for polarized $\Tbb^2$-symmetric spacetimes, we must verify that the coefficients appearing in these equations satisfy the hypotheses of Theorem~\ref{symthm}. 
We do this in this subsection.
Let $a\in\mathbb R\setminus\{0\}$ be given. Suppose that $R>0$ and\footnote{In equation \eqref{eq:defU} and below, the superscript ``tr" indicates the transpose operation.}
\begin{equation}
\label{eq:defU}
    U=(w_1,w_2,\psi,\xi,\eta)^{\tr}\in B_R(\Rbb^5),
\end{equation}
where $B_R$ is the ball of radius $R$.
Notice that the vector $U$ is at this stage just a collection of real-valued functions and not yet necessarily a solution of \eqref{vacIVP.1}-\eqref{vacIVP.2}.
First, we note that if
\begin{equation}
\label{psibndF}
    R<1,
\end{equation}
then the matrix $B^0$, defined by \eqref{B0def}, satisfies
\begin{equation} \label{B0lowbnd}
{\min\{1,2(1-R)\}}\id \leq B^0.
\end{equation}
It is also clear from \eqref{B0def} and \eqref{B1def}  that the matrices $B^0$ and $B^1$ are symmetric; that is,
\begin{equation} \label{B0B1sym}
    (B^0)^{\tr}=B^0 \AND (B^1)^{\tr}=B^1,
\end{equation}
while we see from  \eqref{B0def}, \eqref{B1def}, \eqref{Bcdef} and \eqref{Fdef} that 
$B^0$, $B^1$, $\Bc$ and $F$ are
smooth
in $(t,U)$.
In particular, this is enough to guarantee that the system \eqref{vacIVP.1} is symmetric hyperbolic, which is, in turn, enough by standard local-in-time existence and uniqueness theorems to guarantee that the initial value problem \eqref{vacIVP.1}-\eqref{vacIVP.2} has solutions on time intervals of the form $(T_1,T_0]$ for some $T_1 \in (0,T_0)$ where the initial data can be large, but typically the size of the interval of existence is small; i.e., $|T_0-T_1| \ll |T_0|$. What we are interested in is the global initial value problem (GIVP) with $T_1=0$, which in general can only hold under a small initial data assumption.

Next, we note that the matrix $\Pbb$ defined by \eqref{Pbbdef} is a constant, symmetric projection operator; that is 
\begin{equation} \label{Pbbprops}
\Pbb^2=\Pbb, \quad \Pbb^{\tr}=\Pbb \AND \del{t}\Pbb=\del{\theta}\Pbb = 0.
\end{equation}
We further observe from \eqref{B0def} and \eqref{Bcdef}
that 
\begin{equation}\label{B0Bccom}
[\Pbb,B^0]=[\Pbb,\Bc]=0.
\end{equation}
Letting
\begin{equation} \label{Pbbperp}
\Pbb^\perp = \id -\Pbb
\end{equation}
denote the complementary projection operator, it follows
immediately from \eqref{B0Bccom} that $B^0$ satisfies
\begin{equation}\label{PbbB0Pbbperp}
    \Pbb B^0 \Pbb^\perp = \Pbb^\perp B^0 \Pbb = 0.
\end{equation}
We further observe from  \eqref{Udef}, \eqref{Pbbdef}, \eqref{Fdef}, and \eqref{Pbbperp} that $F$ satisfies
\begin{equation} \label{Fbnds1}
    \Pbb F = 0 
\end{equation}
and there exists a positive constant\footnote{\label{ft.oh_notation} We use the usual big `$O$', and little `$o$' notation: In the limit $t\searrow 0$ $f(t) = O(g(t))$ if $\norm{f} < C\norm{g}$ for some constant $C$, and $f(t) = o(g(t))$ if $\lim_{t\to 0} \norm{f}/\norm{g} = 0$ where the norm is chosen according to the context. The notation $\Ordc$ is defined as follows: $f(t,v)=\Ordc(g(t,v))$ for $(t,v)\in (0,T_0]\times B_{R}(\Rbb^N)$ if $f(t,v)$ satisfies $f(t,v)= h(t,v)g(t,v)$ for some $h(t,v)$ satisfying $|h(t,v)|\leq 1$ and $|\del{v}^\ell h(t,v)| \leq C_\ell$ for all $(t,v)\in (0,T_0]\times B_{R}(\Rbb^N)$ and $\ell \in \Zbb_{\geq 1}$.} $\lambda=\Ord(R)$ such that
\begin{equation} \label{Fbnds2}
    \Pbb^\perp F = \Ordc\biggl(\frac{\lambda}{R}|\Pbb U|^2\biggr)
\end{equation}
for all $|U|<R$.

Next, from \eqref{B0def} and \eqref{Bcdef},
we see that\footnote{Here we write $B|_{U=0}$ to denote that $B$ is evaluated at the zero vector in the argument corresponding to $U$.}
\begin{equation*}
   \kappa_0 B^0|_{U=0} \leq \Bc|_{U=0}
\end{equation*}
where 
\begin{equation} \label{kappa0def}
\kappa_0 = \min\{1,a^2-1\}.
\end{equation}
So if $|a|>1$, then for any $\sigma \in (0,\kappa_0)$, there exists a $R>0$ depending on $\sigma$ such that
\begin{equation} \label{kappalbnd} 
    \kappa B^0(U) \leq \Bc(U)
\end{equation}
for all $|U|<R$
where
\begin{equation} \label{kappadef}
    \kappa=\kappa_0-\sigma.
\end{equation}

To complete the verification of the coefficient properties from Section \ref{coeffassumps}, we
observe from \eqref{vacC}-\eqref{Fdef} that
$\Div\! B$, see \eqref{DivB}, is given by
\begin{equation*}
    \Div\! B(U,\partial_\theta U) =  \begin{pmatrix} 
    0 & -2\del{\theta}\psi & 0 & 0 & 0\\
    -2\del{\theta}\psi & -\frac{2m}{t}(1+\psi)^2\xi^2 & 0 & 0 & 0\\
    0 & 0 & 0 & 0 & 0\\
    0 & 0 & 0 & 0 & 0\\
    0 & 0 & 0 & 0 & 0
    \end{pmatrix}.
\end{equation*}
From this formula, \eqref{Udef} and \eqref{Pbbdef}, it is clear that there exists a positive constant  $\Theta=\Ord(R)$ and $\beta$ such that
\begin{equation} \label{DivBbnd}
\Div B(U,\partial_\theta U) = \Ordc\Bigl(\Theta+\frac{\beta}{t}|\Pbb U|^2\Bigr)
\end{equation}
for all $|U|<R$, $|\partial_\theta U|<R$ and $t\in (0,T_0]$.

Together, the results \eqref{B0lowbnd}-\eqref{Fbnds2} and \eqref{kappalbnd}-\eqref{DivBbnd} imply that the Fuchsian
equation \eqref{vacIVP.1} satisfies all the coefficient assumption from Section \ref{coeffassumps} for the above choice of constants $R$, $\kappa$, $\lambda$, $\Theta$ and $\beta$. Moreover, we note that by choosing $R$ sufficiently small we can make the constant $\lambda$ as small as we like.

\subsubsection{Global existence}
We are now in a position to apply Theorem~\ref{symthm} from the appendix. Doing so yields, for a suitably small choice of the initial data, the existence of a classical solution  $U\in C^1((0,T_0]\times \Tbb,\Rbb^5)$ to the GIVP \eqref{vacIVP.1}-\eqref{vacIVP.2} that satisfies the bound
\begin{equation}\label{Ubnd}
    \norm{U(t)}_{L^\infty}<R
\end{equation}
for all $t\in(0,T_0]$, where $R=R(\sigma)>0$ is small enough to satisfy \eqref{psibndF} and
ensure that \eqref{kappalbnd} holds and that the inequality $\kappa>\gamma_1(\lambda+\beta/2)$ is satisfied.

Since \eqref{vacB.5} is an ODE that admits the trivial solution $\xi=0$, it follows that if  $\xi|_{t=T_0}>0$ on $\Tbb$, then this holds for all $t\in (0,T_0]$.
The following proposition contains the precise statement of the existence result. 

\begin{prop} \label{prop:main_existence}
Suppose $k \in \Zbb_{\geq 2}$, $T_0>0$, $|a|>1$ and $\sigma \in (0,\kappa_0)$ with $\kappa_0$ given in \eqref{kappa0def}, and\footnote{Here and below, $H^k$ denotes the Sobolev space encompassing k'th order derivatives, and $\|\cdot\|_{H^k}$ denotes the corresponding norm.}
\begin{equation*}
\mathring{U}=\bigl(\mathring{w}_1,\mathring{w}_2,\mathring{\psi},\mathring{\xi},\mathring{\eta}\bigr)^{\tr}\in H^k(\Tbb,\Rbb^5)
\end{equation*}
is chosen so that $\mathring{\xi}>0$ in $\Tbb$.
Then there exists a constant 
$R_0>0$ such that, if $\mathring{U}$ is chosen to satisfy
\begin{equation}
  \label{eq:CDSmallness}
 \norm{\mathring{U}}_{H^k}< R_0,
\end{equation}
then there exists a unique solution 
\begin{equation*}
U \in C^0\bigl((0,T_0],H^k(\Tbb,\Rbb^5)\bigr)\cap L^\infty\bigl((0,T_0],H^k(\Tbb,\Rbb^5)\bigr)\cap C^1\bigl((0,T_0],H^{k-1}(\Tbb,\Rbb^5)\bigr)\subset C^1\bigl((0,T_0]\times\Tbb,\Rbb^5\bigr)
\end{equation*}
of the GIVP \eqref{vacIVP.1}-\eqref{vacIVP.2} such that the limit $\lim_{t\searrow 0} \Pbb^\perp U(t)$, denoted $\Pbb^\perp U(0)$, exists in $H^{k-1}(\Tbb,\Rbb^5)$.
Moreover $U$ satisfies the bound
\eqref{Ubnd}
on $(0,T_0]\times \Tbb$, and the component $\xi$ of $U$ satisfies $\xi>0$ in $(0,T_0]\times \Tbb$.
Finally, for $0<t<T_0$,  the solution $U$ satisfies  the energy estimate
\begin{equation}  \label{eq:energyestimates}
\norm{U(t)}_{H^k}^2+ \int_t^{T_0} \frac{1}{\tau} \norm{\Pbb U(\tau)}_{H^k}^2\, d\tau   \lesssim \norm{\mathring{U}}_{H^k}^2
\end{equation}
and the decay estimates
\begin{equation}
  \label{eq:decayestimates}
\norm{\Pbb U(t)}_{H^{k-1}} \lesssim 
t^{\kappa_0-\sigma}
\AND \norm{\Pbb^\perp U(t) - \Pbb^\perp U(0)}_{H^{k-1}} \lesssim t+t^{2\kappa_0-2\sigma}.
\end{equation}
\end{prop}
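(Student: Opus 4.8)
The plan is to derive Proposition~\ref{prop:main_existence} as a direct application of the abstract Fuchsian global existence result, Theorem~\ref{symthm}, to the symmetric hyperbolic GIVP \eqref{vacIVP.1}--\eqref{vacIVP.2}, using the coefficient properties verified in Section~\ref{coefficient}. First I would fix the problem data: given $a\in\Rbb$ with $|a|>1$ and $\sigma\in(0,\kappa_0)$, with $\kappa_0=\min\{1,a^2-1\}$ as in \eqref{kappa0def}, set $\kappa=\kappa_0-\sigma>0$ and choose $R=R(\sigma)>0$ small enough that \eqref{psibndF} holds (hence \eqref{B0lowbnd} is valid), that \eqref{kappalbnd} holds on $B_R(\Rbb^5)$, and that the remaining structural smallness inequality required by Theorem~\ref{symthm}, $\kappa>\gamma_1(\lambda+\beta/2)$, is satisfied; this is possible because the constants $\lambda$ and $\Theta$ in \eqref{Fbnds2} and \eqref{DivBbnd} are $\Ord(R)$, while $\kappa$ stays fixed as $R\searrow 0$. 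With $R$ so fixed, I would then pick $R_0>0$ small enough (depending also on $T_0$, $k$ and $m$) that $\|\mathring{U}\|_{H^k}<R_0$ forces $\mathring{U}$ to take values in $B_R(\Rbb^5)$ --- via the embedding $H^k(\Tbb)\hookrightarrow L^\infty(\Tbb)$, valid since $k\ge 2$ --- and to satisfy the initial-data smallness hypothesis of Theorem~\ref{symthm}.

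With these choices all hypotheses of Theorem~\ref{symthm} are in force: the symmetry \eqref{B0B1sym} and the smoothness of $B^0,B^1,\Bc,F$ in $(t,U)$; the coercivity \eqref{B0lowbnd}; the projection and commutation relations \eqref{Pbbprops}--\eqref{PbbB0Pbbperp}; the source bounds \eqref{Fbnds1}--\eqref{Fbnds2}; the key inequality \eqref{kappalbnd}; and the divergence bound \eqref{DivBbnd}. Invoking Theorem~\ref{symthm} then produces the solution $U$ in the asserted class $C^0((0,T_0],H^k)\cap L^\infty((0,T_0],H^k)\cap C^1((0,T_0],H^{k-1})$, its uniqueness in that class, the bound $\|U(t)\|_{L^\infty}<R$ on $(0,T_0]$, the existence of $\Pbb^\perp U(0):=\lim_{t\searrow 0}\Pbb^\perp U(t)$ in $H^{k-1}$, the energy estimate \eqref{eq:energyestimates}, and the decay estimates \eqref{eq:decayestimates} --- where the exponent $\kappa_0-\sigma$ is exactly $\kappa$, and the $\Pbb^\perp$ rate $t+t^{2\kappa_0-2\sigma}$ results from integrating the evolution equations \eqref{vacB.2} and \eqref{vacB.4} for $w_1$ and $\psi$ in $t$ against the $t^\kappa$-decay of $\Pbb U$. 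The inclusion into $C^1((0,T_0]\times\Tbb,\Rbb^5)$ follows from the one-dimensional Sobolev embeddings $H^k(\Tbb)\hookrightarrow C^1(\Tbb)$ and $H^{k-1}(\Tbb)\hookrightarrow C^0(\Tbb)$, valid for $k\ge 2$.

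It then remains to record the positivity of $\xi$. Since \eqref{vacB.5} involves no $\theta$-derivatives, at each fixed $\theta$ it is an ODE in $t$ for $\xi$ whose right-hand side is linear and homogeneous in $\xi$ with coefficients continuous in $t$ and in the already-constructed components of $U$; hence $\xi(\cdot,\theta)$ is either identically zero or nowhere zero on $(0,T_0]$, and since $\mathring{\xi}>0$ on $\Tbb$, we get $\xi>0$ throughout $(0,T_0]\times\Tbb$.

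I expect the ``main obstacle'' here to be organizational rather than analytic, since all the hard work has been done upstream --- the coefficient verification in Section~\ref{coefficient} and the Fuchsian machinery of \cite{BOOS:2020}. The point that requires care is the order of the quantifiers: $R$ must be fixed first, depending only on the ambient parameters $a$, $\sigma$ and $m$, so as to simultaneously secure \eqref{psibndF}, \eqref{kappalbnd} and the structural inequality of Theorem~\ref{symthm}; only afterwards may $R_0$ be chosen --- now also allowed to depend on $T_0$ and $k$ --- small enough that the $H^k$-smallness of $\mathring{U}$ implies both that $\mathring{U}\in B_R(\Rbb^5)$ pointwise and that the initial-data hypothesis of Theorem~\ref{symthm} holds.
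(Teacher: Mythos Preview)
Your proposal is correct and follows essentially the same approach as the paper: verify the coefficient hypotheses of Section~\ref{coeffassumps}, choose $R$ small enough to secure \eqref{kappalbnd} and the inequality $\kappa>\gamma_1(\lambda+\beta/2)$, then apply Theorem~\ref{symthm} and read off the existence, regularity, energy and decay statements; the positivity of $\xi$ is handled exactly as in the paper via the observation that \eqref{vacB.5} is an ODE in $t$ admitting the trivial solution. One small slip: \eqref{vacB.5} is not linear in $\xi$ (it contains a $\xi^3$ term), but your conclusion stands because the right-hand side vanishes at $\xi=0$, so uniqueness for ODEs still prevents $\xi$ from crossing zero.
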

Observe that, in contrast to Theorem~\ref{s.main_result}, we do not require $T_0$ to be small here. The smallness requirement for $T_0$ is introduced only in Section~\ref{s.stability_of_kasner}.

\subsection{Improved Decay Estimates for the Solutions Approaching the Singularity}
\label{ss.improved_asymptotics}
Having established the global existence of the GIVP \eqref{vacIVP.1}-\eqref{vacIVP.2} for sufficiently small initial data, the next step is to improve the information regarding the asymptotic behavior for the solutions (as $t \searrow 0$) in \eqref{eq:decayestimates} (at the cost of an order of differentiability).

\begin{prop} \label{prop:hoexp}
Suppose $k \in \Zbb_{\geq 3}$, $T_0>0$, $|a|>1$, and $\sigma \in (0, 2\kappa_0/3)$ with $\kappa_0$ given in \eqref{kappa0def}, and
\begin{equation}
\label{eq:HOCD}
\mathring{U}=\bigl(\mathring{w}_1,\mathring{w}_2,\mathring{\psi},\mathring{\xi},\mathring{\eta}\bigr)^{\tr}\in H^k(\Tbb,\Rbb^5)
\end{equation}
satisfies $\norm{\mathring{U}}_{H^k}< R_0$
for $R_0>0$ small enough so that by Proposition~\ref{prop:main_existence} there
exists a unique solution
\begin{equation*}
U=\bigl(w_1,w_2,\psi,\xi,\eta\bigr)^{\tr} \in C^0\bigl((0,T_0],H^k(\Tbb,\Rbb^5)\bigr)\cap L^\infty\bigl((0,T_0],H^k(\Tbb,\Rbb^5)\bigr)\cap C^1\bigl((0,T_0],H^{k-1}(\Tbb,\Rbb^5)\bigr)
\end{equation*}
to the GIVP \eqref{vacIVP.1}-\eqref{vacIVP.2} such that the limit $\lim_{t\searrow 0} \Pbb^\perp U(t)$ exists in $H^{k-1}(\Tbb,\Rbb^5)$, $U$ satisfies the bound
\eqref{Ubnd}
on $(0,T_0]\times \Tbb$, and the component $\xi$ of $U$ satisfies $\xi>0$ in $(0,T_0]\times \Tbb$. Then
labeling the components of $\Pbb^\perp U(0)$ according to
\begin{equation} \label{PbbperpU(0)}
    \Pbb^\perp U(0) = (\wt_1,0,\psit,0,0)^{\tr},
\end{equation}
where $\wt_1,\psit\in H^{k-1}(\Tbb)$, there 
exist elements $\wt_2,\tilde{\eta}\in H^{k-2}(\Tbb)$ and $\nut\in H^{k-1}(\Tbb)$
such that
\begin{align}
\norm{t^{-1}w_2(t)-\ln(t)\del{\theta}\wt_{1}-\wt_2}_{H^{k-2}} &\lesssim
t+t^{2\kappa_0-3\sigma}, \label{prop:hoexp.1} \\
\norm{t^{-1}\eta-\tilde{\eta}}_{H^{k-2}} &\lesssim
t+t^{2\kappa_0-3\sigma} \label{prop:hoexp.2}
\intertext{and}
\norm{\ln(\xi(t))-((a+\wt_1)^2-1) \ln(t)-\nut}_{H^{k-1}}&\lesssim t+t^{2\kappa_0-2\sigma}, \label{prop:hoexp.3}
\end{align}
for $0<t\le T_0$.

Finally, if the Cauchy data \eqref{eq:HOCD} satisfy the constraint \eqref{eq:vacCC.1} at $t=T_0$, then the functions $\wt_1,\wt_2, \psit, \nut,\tilde{\eta}$ satisfy the following asymptotic form of the constraint equation \eqref{eq:vacCC.1}
\begin{equation}
  \label{eq:AsymptConstr}
  \partial_\theta \nut
        - 2 (a+\wt_1) \wt_2
        + \frac 12 (1+\psit)^{-1} \tilde\eta 
        =0.
\end{equation}
\end{prop}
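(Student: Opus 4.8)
The plan is to bootstrap the crude information supplied by Proposition~\ref{prop:main_existence} --- the decay $\norm{\Pbb U(t)}_{H^{k-1}}\lesssim t^{\kappa_0-\sigma}$ of the components $w_2,\xi,\eta$ and the convergence $\norm{\Pbb^\perp U(t)-\Pbb^\perp U(0)}_{H^{k-1}}\lesssim t+t^{2\kappa_0-2\sigma}$, so in particular $\norm{w_1(t)-\wt_1}_{H^{k-1}}$ and $\norm{\psi(t)-\psit}_{H^{k-1}}$ are $\lesssim t+t^{2\kappa_0-2\sigma}$ --- into the sharper asymptotics \eqref{prop:hoexp.1}--\eqref{prop:hoexp.3}, at the price of one Sobolev derivative. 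Throughout, one works in $H^{k-1}$ and $H^{k-2}$, which for $k\ge 3$ are Banach algebras closed under composition with smooth functions; the factor $1+\psi$ is bounded away from $0$ by \eqref{psibndF} and \eqref{Ubnd}; and for each fixed $t>0$ the strictly positive function $\xi(t,\cdot)\in H^k(\Tbb)$ is bounded below on the compact torus, so $\ln\xi(t)\in H^k(\Tbb)$ is well defined. I treat $\xi$, then $w_2$, then $\eta$, and finally the constraint.

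For $\xi$, dividing \eqref{vacB.5} by $4\xi$ and using $4a^2-4+8aw_1+4w_1^2=4((a+w_1)^2-1)$ gives $\del{t}\ln\xi=\frac{1}{t}\bigl((a+w_1)^2-1\bigr)+\frac{1}{t}\bigl(w_2^2(1+\psi)+\frac{m}{4}(1+\psi)\xi^2\bigr)$. Writing $(a+w_1)^2-1=(a+\wt_1)^2-1+(w_1-\wt_1)(2a+w_1+\wt_1)$, the correction divided by $t$ has $H^{k-1}$ norm $\lesssim 1+t^{2\kappa_0-2\sigma-1}$ and the last bracket divided by $t$ has $H^{k-1}$ norm $\lesssim t^{2\kappa_0-2\sigma-1}$, both integrable on $(0,T_0]$ since $\sigma<\kappa_0$; integrating from $t$ to $T_0$ and absorbing the convergent integral of the remainder over $(0,T_0)$ into a time-independent function $\nut\in H^{k-1}$ yields \eqref{prop:hoexp.3}. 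For $w_2$, dividing \eqref{vacB.3} by $2(1+\psi)$ gives $\del{t}(w_2/t)=\frac{1}{t}\del{\theta}w_1=\frac{1}{t}\del{\theta}\wt_1+\frac{1}{t}(\del{\theta}w_1-\del{\theta}\wt_1)$, where $\norm{\del{\theta}w_1(t)-\del{\theta}\wt_1}_{H^{k-2}}\lesssim t+t^{2\kappa_0-2\sigma}$; the same integration procedure produces $\wt_2\in H^{k-2}$ and \eqref{prop:hoexp.1}, and as a by-product $\norm{w_2(t)}_{H^{k-2}}\lesssim t|\ln t|\lesssim t^{1-\sigma}$.

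For $\eta$, set $\rho=\eta/t$; then \eqref{vacB.6} becomes $\del{t}\rho=-\frac{4m}{t^2}(1+\psi)^2\xi^2 w_2(a+w_1)-\frac{m}{t}(1+\psi)\xi^2\rho$. The multiplier norm of the coefficient of $\rho$ is $\lesssim t^{2\kappa_0-2\sigma-1}$, which is integrable, while using $\norm{\xi(t)}_{H^{k-2}}\lesssim t^{\kappa_0-\sigma}$ together with the by-product bound for $w_2$, the source is bounded in $H^{k-2}$ by $t^{-2}\cdot t^{2\kappa_0-2\sigma}\cdot t^{1-\sigma}=t^{2\kappa_0-3\sigma-1}$, which is integrable on $(0,T_0]$ \emph{precisely} because $\sigma<2\kappa_0/3$. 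A backward Gronwall argument then shows $\rho$ is uniformly bounded on $(0,T_0]$, hence converges in $H^{k-2}$ to some $\tilde{\eta}$ with remainder $\lesssim t+t^{2\kappa_0-3\sigma}$, which is \eqref{prop:hoexp.2}.

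For \eqref{eq:AsymptConstr}, assume the Cauchy data satisfy \eqref{eq:vacCC.1} at $t=T_0$; since the constraints propagate, \eqref{eq:vacCC.1} holds for all $t\in(0,T_0]$, and dividing it by $t\xi$ gives $\del{\theta}\ln\xi=2(a+w_1)\frac{w_2}{t}-\frac{1}{2}(1+\psi)^{-1}\frac{\eta}{t}$. Differentiating \eqref{prop:hoexp.3} in $\theta$ shows $\del{\theta}\ln\xi(t)=2(a+\wt_1)\del{\theta}\wt_1\ln t+\del{\theta}\nut+o(1)$ in $H^{k-2}$, while \eqref{prop:hoexp.1}, \eqref{prop:hoexp.2}, $w_1\to\wt_1$, $\psi\to\psit$ and the algebra property give $2(a+w_1)\frac{w_2}{t}=2(a+\wt_1)\del{\theta}\wt_1\ln t+2(a+\wt_1)\wt_2+o(1)$ and $\frac{1}{2}(1+\psi)^{-1}\frac{\eta}{t}=\frac{1}{2}(1+\psit)^{-1}\tilde{\eta}+o(1)$ in $H^{k-2}$; the $\ln t$ terms cancel and letting $t\searrow 0$ yields \eqref{eq:AsymptConstr}. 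I expect the $\eta$ estimate to be the main obstacle: it can only be closed once the improved bound $\norm{w_2(t)}_{H^{k-2}}\lesssim t^{1-\sigma}$ is available (genuinely stronger than the $t^{\kappa_0-\sigma}$ of Proposition~\ref{prop:main_existence} when $\kappa_0<1$), and the integrability near $t=0$ of its source term, of order $t^{2\kappa_0-3\sigma-1}$, is exactly what forces $\sigma<2\kappa_0/3$ rather than merely $\sigma<\kappa_0$ and also explains why \eqref{prop:hoexp.1}--\eqref{prop:hoexp.2} live at the lower index $k-2$; the remaining work is careful bookkeeping of Sobolev indices and of the derivative loss incurred in passing to $\del{\theta}w_1$ and $\del{\theta}\ln\xi$.
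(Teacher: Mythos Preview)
Your proposal is correct and follows essentially the same strategy as the paper, but organized component-by-component rather than in vector form. The paper introduces $V = e^{-\ln(t)\Ac\Pbb}U$ (with $\Ac = ((B^0)^{-1}\Bc)|_{U=0}$) and then a shifted vector $W$ whose components are exactly your $w_1$, $t^{-1}w_2-\ln(t)\del{\theta}\wt_1$, $\psi$, $t^{-((a+\wt_1)^2-1)}\xi$, and $t^{-1}\eta$; a single Gronwall argument on the full vector $W$ produces all limits at once, after which the decoupled $\xi$-equation is revisited to upgrade $\nut$ to $H^{k-1}$. You instead work through the scalar equations for $\ln\xi$, $w_2/t$, $\eta/t$ sequentially. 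This is equivalent and arguably more elementary (no matrix exponential), and it makes the logical dependence explicit: the improved bound $\norm{w_2}_{H^{k-2}}\lesssim t^{1-\sigma}$ must be fed into the $\eta$-equation before it closes, and this is where the restriction $\sigma<2\kappa_0/3$ appears. In the paper the same restriction enters through the term $\frac{1}{t}\ln(t)\,h\,\del{\theta}\wt_1$ in the source $\Fc$ for $W$, via $|\ln t|\lesssim t^{-\sigma}$; so both routes isolate the identical obstruction. Your constraint argument (divide \eqref{eq:vacCC.1} by $t\xi$, take $\del{\theta}$ of \eqref{prop:hoexp.3}, cancel the $\ln t$ terms and pass to the limit in $H^{k-2}$) is likewise equivalent to, and somewhat cleaner than, the paper's version, which multiplies by $t^{-(a+\wt_1)^2}$ and unpacks a longer chain of triangle inequalities.
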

\begin{proof}
Setting
\begin{equation} \label{Acdef}
    \Ac := \bigl((B^0)^{-1}\Bc\bigr)|_{U=0}=
    \begin{pmatrix} 
     1 & 0 & 0 & 0 & 0\\
     0 & 1 & 0 & 0 & 0\\
     0 & 0 & 1 & 0 & 0\\
     0 & 0 & 0 & a^2-1 & 0\\
     0 & 0 & 0 & 0 & 1\end{pmatrix},
\end{equation}
we can express \eqref{vacIVP.1} as
\begin{equation*}
    \del{t}U=\frac{1}{t}\Ac \Pbb U
    +\frac{1}{t}\bigl((B^0)^{-1}\Bc-\Ac\bigr)\Pbb U + \frac{1}{t}(B^0)^{-1}F-(B^0)^{-1}B^1\del{\theta}U.
\end{equation*}
Multiplying this equation on the left
by 
\begin{equation} \label{exp-mat}
    e^{-\ln(t)\Ac\Pbb} =  \begin{pmatrix} 
     1 & 0 & 0 & 0 & 0\\
     0 & \frac{1}{t} & 0 & 0 & 0\\
     0 & 0 & 1 & 0 & 0\\
     0 & 0 & 0 & \frac{1}{t^{a^2-1}} & 0\\
     0 & 0 & 0 & 0 & \frac{1}{t}\end{pmatrix},
\end{equation}
a short calculation shows that
\begin{equation} \label{Vdef}
    V= (V_1,V_2,V_3,V_4,V_5)^{\tr} :=  e^{-\ln(t)\Ac\Pbb}U
\end{equation}
satisfies
\begin{equation} \label{V-evolve-A}
     \del{t}V=
    \frac{1}{t} e^{-\ln(t)\Ac\Pbb}\bigl((B^0)^{-1}\Bc-\Ac\bigr)\Pbb U + \frac{1}{t} e^{-\ln(t)\Ac\Pbb}(B^0)^{-1}F- e^{-\ln(t)\Ac\Pbb}(B^0)^{-1}B^1\del{\theta}U.
\end{equation}
From \eqref{Udef}-\eqref{Fdef}, \eqref{PbbperpU(0)} and \eqref{exp-mat}, we then find after a straightforward calculation that we can express \eqref{V-evolve-A} as
\begin{equation} \label{V-evolve-B}
    \del{t}V = 
    \frac{1}{t}\begin{pmatrix}0 \\ \del{\theta}\wt_1\\0\\0\\0\end{pmatrix}
    + \frac{1}{t}\begin{pmatrix} 0 \\ 0\\0\\  (2a\wt_1+\wt_1^2 + f)V_4 \\ h V_2 -m(1+\psi)\xi^2 V_5
    \end{pmatrix} 
    + \frac{1}{t}\begin{pmatrix}
\frac{1}{2}\eta w_2-\frac{m}{2}(1+\psi)(a+w_1)\xi^2+t(1+\psi)\del{\theta}w_2 \\
\del{\theta}(w_1-\wt_1)\\
-m(1+\psi)^2\xi^2 \\ 0 \\ 0
\end{pmatrix}, 
\end{equation}
where
\begin{align}
f&=2aw_1+w_1^2-(2a \wt_1+\wt_1^2)+\frac{1}{4}\bigl(4(1+\psi)w_2^2+m(1+\psi)\xi^2\bigr) \label{fdef}
\intertext{and}
h&= -4m(a+w_1)(1+\psi)^2\xi^2. \label{hdef}
\end{align}

The form of \eqref{V-evolve-B} motivates us to introduce the new variable
\begin{equation} \label{Wdef}
    W:= \begin{pmatrix} V_1 \\ V_2-\ln(t)\del{\theta}\wt_1 \\V_3 \\ t^{-(2a \wt_1+\wt_1^2)}V_4\\ V_5
    \end{pmatrix}
    = \begin{pmatrix} w_1 \\ t^{-1}w_2-\ln(t)\del{\theta}\wt_1 \\\psi \\ t^{-(a^2-1 +2a \wt_1+\wt_1^2)}\xi\\ t^{-1}\eta
    \end{pmatrix},
\end{equation}
which then allows us to express \eqref{V-evolve-B} as
\begin{equation} \label{W-evolve}
    \del{t}W=\Cc W + \Fc
\end{equation}
where
\begin{align}
    \Cc &= \begin{pmatrix} 
     0 & 0 & 0 & 0 & 0\\
     0 & 0 & 0 & 0 & 0\\
     0 & 0 & 0 & 0 & 0\\
     0 & 0 & 0 & \frac{1}{t}f & 0\\
     0 & \frac{1}{t} h & 0 & 0 & -\frac{m}{t}(1+\psi)\xi^2 \end{pmatrix} \label{Ccdef}
    \intertext{and}
    \Fc &= \frac{1}{t}\begin{pmatrix}
\frac{1}{2}\eta w_2-\frac{m}{2}(1+\psi)(a+w_1)\xi^2+t(1+\psi)\del{\theta}w_2 \\
\del{\theta}(w_1-\wt_1)\\
-m(1+\psi)^2\xi^2 \\ 0 \\ \ln(t)h\del{\theta}\wt_1
\end{pmatrix}, \label{Fcdef}
\end{align}
both depending on $t$, and $W$ and $\partial_\theta W$. 

Integrating \eqref{W-evolve} in time yields
\begin{equation} \label{W-int}
    W(t)=W(t_0) + \int_{t_0}^t \Cc(\tau, W(\tau))W(\tau)+\Fc(\tau,W(\tau),\partial_\theta W(\tau))\,d\tau
\end{equation}
for $0<t_0\le t\le T_0$.
By the triangle inequality, and the Sobolev and product estimates, see Proposition 2.4 and 3.7 from Chapter 13 of  \cite{TaylorIII:1996},
we find, since $k-2\geq 1>1/2$, that
\begin{equation*}
    \norm{W(t)}_{H^{k-2}} \leq \norm{W(T_0)}_{H^{k-2}}+\int^{T_0}_t 
    \norm{\Cc(\tau, W(\tau))}_{H^{k-2}}
    \norm{W(\tau)}_{H^{k-2}}+\norm{\Fc(\tau, W(\tau),\partial_\theta W(\tau))}_{H^{k-2}}\,d\tau.
\end{equation*}
From this we conclude via an application of Gr\"onwall's inequality that
\begin{equation} \label{W-bnd-A}
\norm{W(t)}_{H^{k-2}} \leq e^{\int^{T_0}_t 
    \norm{\Cc(\tau, W(\tau))}_{H^{k-2}}\,d\tau}
\biggl(\norm{W(T_0)}_{H^{k-2}}+\int^{T_0}_t\norm{\Fc(\tau, W(\tau),\partial_\theta W(\tau))}_{H^{k-2}}\,d\tau\biggr).
\end{equation}
But by \eqref{PbbperpU(0)}, \eqref{fdef}, \eqref{hdef}, \eqref{Ccdef} and \eqref{Fcdef}, we observe, with the help of
the energy and decay estimates \eqref{eq:energyestimates}-\eqref{eq:decayestimates} and the Sobolev and product estimates, see Proposition 2.4 and 3.7 from Chapter 13 of  \cite{TaylorIII:1996}, that
\begin{equation} \label{W-bnd-B}
    \int_{t_0}^t\norm{\Cc(\tau, W(\tau))}_{H^{k-2}}+ \norm{\Fc(\tau, W(\tau),\partial_\theta W(\tau))}_{H^{k-2}}\,d\tau \lesssim \bigl(t+t^{2\kappa_0-3\sigma}\bigr)-\bigl(t_0+t_0^{2\kappa_0-3\sigma}\bigr).
\end{equation}
Thus, by \eqref{W-bnd-A}, we have 
\begin{equation*}
   \sup_{0<t<T_0} \norm{W(t)}_{H^{k-2}} \lesssim 1.
\end{equation*}
With the help of this uniform bound, we deduce
from \eqref{W-int} and another application of the product, Sobolev, and triangle inequalities, that
\begin{align*}
\norm{W(t)-W(t_0)}_{H^{k-2}} 
&\leq   \int_{t_0}^t\norm{\Cc(\tau, W(\tau))}_{H^{k-2}}
\norm{W(\tau)}_{H^{k-2}}\,d\tau + \int_{t_0}^t\norm{\Fc(\tau, W(\tau),\partial_\theta W(\tau))}_{H^{k-2}}
\,d\tau \\
&\lesssim  \int_{t_0}^t\norm{\Cc(\tau, W(\tau))}_{H^{k-2}}+ \norm{\Fc(\tau, W(\tau),\partial_\theta W(\tau))}_{H^{k-2}}\,d\tau.
\end{align*}
From this inequality and \eqref{W-bnd-B}, we deduce that the limit $\lim_{t\searrow 0} W(t)$ converges to an element of $H^{k-2}(\Tbb)$, and denoting this element by $W(0)$, we can extend $W(t)$ to a uniformly continuous map on $[0,T_0]$; that is
\begin{equation*}
    W\in C^0\bigl([0,T_0],H^{k-2}(\Tbb,\Rbb^5)\bigr),
\end{equation*}
and moreover, that
\begin{equation} \label{W-bnd-C}
\norm{W(t)-W(0)}_{H^{k-2}} 
\lesssim  t^{2\kappa_0-3\sigma}+t
\end{equation}
for $0<t\le T_0$. The stated estimates \eqref{prop:hoexp.1} and \eqref{prop:hoexp.2} are then a direct consequence of \eqref{Wdef} and \eqref{W-bnd-C}.

We observe that the previous arguments also yield an estimate for $\xi$ (in addition to the estimates \eqref{prop:hoexp.1} and \eqref{prop:hoexp.2} for $w_2$ and $\eta$, respectively) via the estimate for the component $W_4$ resulting from \eqref{W-bnd-C}.
However, since the equation
\begin{equation}
  \partial_t W_4=\frac 1t f W_4
\end{equation}
decouples from the rest of the system \eqref{W-evolve}, we can establish the estimate \eqref{prop:hoexp.2}, with improved regularity.  
Note that $\xi$, and therefore $W_4$, is strictly positive everywhere on $(0,T_0]\times\Tbb^3$ as a consequence of Proposition~\ref{prop:main_existence} and \eqref{Wdef}.
It follows that
\begin{equation}
  \ln(W_4(t))-\ln(W_4(t_0))=\int_{t_0}^t f(\tau)\tau^{-1}d\tau.
\end{equation}
From the definition of $f$, \eqref{fdef}, and Proposition~\ref{prop:main_existence}, we obtain $\norm{f(\tau)}_{H^{k-1}}\lesssim t^{2\kappa_0-2\sigma}+t$, and hence
\begin{equation}
  \norm{\ln(W_4(t))-\ln(W_4(t_0))}_{H^{k-1}}\lesssim t^{2\kappa_0-2\sigma}+t-(t_0^{2\kappa_0-2\sigma}+t_0)
\end{equation}
for any $0<t_0\le t\le T_0$. Thanks to the completeness of $H^{k-1}(\Tbb)$, the sequence $\ln(W_4(t))$ converges in the limit $t\searrow 0$ with respect to the $H^{k-1}$-norm and has a limit, which we call $\nut\in H^{k-1}(\Tbb)$, such that
\begin{equation}
  \label{eq:fljsdlfgjlsdfnvxc}
  \norm{\ln(W_4(t))-\nut}_{H^{k-1}}\lesssim t^{2\kappa_0-2\sigma}+t,
\end{equation}
for all $t\in (0,T_0]$. The estimate \eqref{prop:hoexp.2} follows from \eqref{eq:fljsdlfgjlsdfnvxc} together with the identity
\[\ln(W_4(t))=\ln(\xi(t))-((a+\wt_1)^2-1) \ln(t)\]
obtained from \eqref{Wdef}.

To finish the proof, suppose that the Cauchy data \eqref{eq:HOCD} satisfy the constraint \eqref{eq:vacCC.1} at $t=T_0$. 
We note that this constraint propagates as a consequence of the propagation of \eqref{eq:vacC.1} and the algebraically defined variables \eqref{nvars.1}-\eqref{nvars.6}.
Hence, \eqref{eq:vacCC.1} holds for all $t\in (0,T_0]$.
For any such $t$ we multiply this equation \eqref{eq:vacCC.1} by $t^{-(a+\wt_1)^2}$ to find that
\begin{align*}
  0=
    \bnorm{&
      t^{-((a+\wt_1)^2-1)}\del{\theta}\xi         
      - 2 (a+w_1) t^{-1} w_2 t^{-((a+\wt_1)^2-1)}\xi
      + \frac 12 (1+\psi)^{-1} t^{-1}\eta t^{-((a+\wt_1)^2-1)}\xi
    }_{H^{k-3}}\\
  =
    \bnorm{&
      2 \ln t\partial_\theta \wt_1 (a+\wt_1) t^{-((a+\wt_1)^2-1)}\xi
      +\partial_\theta \bigl(t^{-((a+\wt_1)^2-1)}\xi\bigr)\\
      &- 2 (a+w_1) t^{-1} w_2 t^{-((a+\wt_1)^2-1)}\xi\\
      &+ \frac 12 (1+\psi)^{-1} t^{-1}\eta t^{-((a+\wt_1)^2-1)}\xi
    }_{H^{k-3}} \\
  \ge 
    \bnorm{&
      2 \ln t\partial_\theta \wt_1 (a+\wt_1) e^{\nut}
      +\partial_\theta e^{\nut}
      - 2 (a+\wt_1) (\ln t\partial_\theta\wt_1  +\wt_2)e^{\nut}
      + \frac 12 (1+\psit)^{-1} \tilde\eta e^{\nut}  
    }_{H^{k-3}}\\ 
    -\bnorm{&
      2 \ln t\partial_\theta \wt_1 (a+\wt_1) (t^{-((a+\wt_1)^2-1)}\xi-e^{\nut})
    }_{H^{k-2}}
    -\bnorm{
      t^{-((a+\wt_1)^2-1)}\xi-e^{\nut}
    }_{H^{k-2}}\\
    -\bnorm{&
      - 2 (w_1-\wt_1) t^{-1} w_2 t^{-((a+\wt_1)^2-1)}\xi
    }_{H^{k-2}}\\
    -\bnorm{&
      - 2 (a+\wt_1) (t^{-1} w_2-\ln t\partial_\theta\wt_1-\wt_2) t^{-((a+\wt_1)^2-1)}\xi
    }_{H^{k-2}}\\
    -\bnorm{&
      - 2 (a+\wt_1) (\ln t\partial_\theta\wt_1+\wt_2) (t^{-((a+\wt_1)^2-1)}\xi-e^{\nut})
    }_{H^{k-2}}\\
    -\bnorm{&
      \frac 12 \frac{\psi-\psit}{(1+\psi)(1+\psit)} t^{-1}\eta t^{-((a+\wt_1)^2-1)}\xi
    }_{H^{k-2}}\\
    -\bnorm{&
      \frac 12 (1+\psit)^{-1} (t^{-1}\eta-\tilde\eta) t^{-((a+\wt_1)^2-1)}\xi
    }_{H^{k-2}}
    -\bnorm{
      \frac 12 (1+\psit)^{-1} t^{-1}\tilde\eta (t^{-((a+\wt_1)^2-1)}\xi-e^{\nut})
    }_{H^{k-2}},
\end{align*}
where in deriving this inequality we have used the fact that $k-2>1/2$.
Since $|\psi|$ and therefore $|\psit|$ are both strictly smaller than $1$ as a consequence of \eqref{Ubnd} with \eqref{psibndF},
and $(t^{-((a+\wt_1)^2-1)}\xi-e^{\nut}) = e^{\nut}(\exp(\ln(\xi)-((a+\wt_1)^2-1)\ln(t)-\nut)-1)$, we conclude from \eqref{eq:decayestimates}, \eqref{prop:hoexp.1}-\eqref{prop:hoexp.3} and the Moser inequality that all but the first line of this previous estimate go to zero in the limit $t\searrow 0$. 
This therefore implies the validity of the asymptotic constraint \eqref{eq:AsymptConstr}.
\end{proof}

\subsection{Solutions to the Full Polarized $\Tbb^2$-symmetric Vacuum Einstein Equations for Perturbations of Kasner Initial Data}

Propositions~\ref{prop:main_existence} and \ref{prop:hoexp} establish the existence of global solutions of the initial value problem \eqref{vacIVP.1}-\eqref{vacIVP.2} of the \emph{core evolution system} \eqref{vacB.2}-\eqref{vacB.6}, the leading-order behavior of the corresponding variables $\bigl(w_1,w_2,\psi,\xi,\eta\bigr)^{\tr}$, and the existence of limits $\wt_1, \psit, \nut \in H^{k-1}(\Tbb)$ and $\wt_2,\tilde{\eta}\in H^{k-2}(\Tbb)$. 
We have also addressed the constraint \eqref{eq:vacCC.1}, which, if it is satisfied by the initial data, propagates and then implies \eqref{eq:AsymptConstr}. 
In order to construct solutions to the \emph{full} vacuum Einstein equations for polarized $T^2$-symmetric spacetimes, it remains to solve the decoupled evolution equation \eqref{vacB.1} for $w_0$, \eqref{Evac.G.t} for $G$ and \eqref{Evac.H.t} for $H$ together with the constraints \eqref{eq:vacCC.2} and \eqref{eq:vacCC.3}.

\begin{prop}
  \label{prop:fullEFE}
  Consider the same conditions as specified in the hypothesis for Proposition~\ref{prop:hoexp}. 
  Let $U$ be the solution to the GIVP \eqref{vacIVP.1}-\eqref{vacIVP.2} determined by Cauchy data $\mathring{U}$ with $\norm{\mathring{U}}_{H^k}< R_0$. 
  Given any $\mathring{w}_0, \mathring{G}, \mathring{H}\in H^{k}(\Tbb)$, then the Cauchy problems of \eqref{vacB.1} with Cauchy data $\mathring{w}_0$, of \eqref{Evac.G.t} with Cauchy data $\mathring{G}$, and of \eqref{Evac.H.t} with Cauchy data $\mathring{H}$  imposed at $t=T_0$ have unique solutions
  \[w_0, G, H\in C^1\bigl((0,T_0],H^k(\Tbb)\bigr)\cap L^\infty\bigl((0,T_0],H^k(\Tbb)\bigr),\]
  where $G=\mathring{G}$. 
  Provided $R_0$ is sufficiently small, there exists $\wt_0\in H^{k-1}(\Tbb)$ and $\Ht\in H^{k-1}(\Tbb)$ such that
  \begin{align}
    \label{eq:OtherLimit.1}
    \Bnorm{w_0(t)-\frac1 a \Bigl(\wt_1+a\Bigr)\ln t-\frac1 a \wt_0}_{H^{k-1}}&\lesssim t+t^{2\kappa_0-2\sigma},\\
    \label{eq:OtherLimit.2}
    \Bnorm{H(t)-\Ht}_{H^{k-1}}&\lesssim t^{2(\gamma_{\min}-1-\sigma)},
  \end{align}
  where $\gamma_{\min}=\min_{\theta\in [0,2\pi)}\{(a+\wt_1(\theta))^2\}$.
  Finally, if the Cauchy data satisfy the constraints \eqref{eq:vacCC.2} and \eqref{eq:vacCC.3} at $t=T_0$, then
\begin{align}
  \label{eq:AsymptConstr2}
  \partial_\theta\wt_0=\wt_2, \\
  \label{eq:AsymptConstr3}
  \del{\theta}\psit=\etat.
\end{align}
\end{prop}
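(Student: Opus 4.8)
The plan is to treat the three decoupled equations \eqref{vacB.1}, \eqref{Evac.G.t} and \eqref{Evac.H.t} in turn, using the already-established estimates of Propositions~\ref{prop:main_existence} and \ref{prop:hoexp} for the core variables $U = (w_1,w_2,\psi,\xi,\eta)^{\tr}$ as given data. For $G$ the claim is immediate: \eqref{Evac.G.t} reads $\del{t}G = 0$, so $G(t) = \mathring{G}$ for all $t\in(0,T_0]$, which lies in the asserted regularity class trivially. For $w_0$, equation \eqref{vacB.1} is $\del{t}w_0 = \frac1t\bigl(\frac1a w_1 + 1\bigr)$, which I would integrate directly from $T_0$: $w_0(t) = \mathring{w}_0 - \int_t^{T_0}\frac1\tau\bigl(\frac1a w_1(\tau)+1\bigr)\,d\tau$. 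Splitting the integrand as $\frac1{a\tau}\wt_1 + \frac1\tau + \frac1{a\tau}(w_1(\tau)-\wt_1)$, the first two terms integrate to $-\frac1a(\wt_1+a)\ln\tau$ evaluated between $t$ and $T_0$, producing the stated $\ln t$ coefficient; the remainder $\frac1{a\tau}(w_1(\tau)-\wt_1)$ is, by \eqref{eq:decayestimates} (note $w_1-\wt_1 = (\Pbb^\perp U - \Pbb^\perp U(0))_1$ up to the $\Pbb U$ contribution, both controlled by $t+t^{2\kappa_0-2\sigma}$ after one more $H^{k-1}$ loss), integrable near $0$ with tail bounded by $t+t^{2\kappa_0-2\sigma}$. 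This identifies $\wt_0\in H^{k-1}(\Tbb)$ as the limit and yields \eqref{eq:OtherLimit.1}; the regularity $w_0\in C^1\cap L^\infty$ follows from that of $w_1$.

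For $H$, equation \eqref{Evac.H.t} is $\del{t}H = \sqrt m\,\sqrt\alpha\,t^{-3}e^{2\nu}$. Using \eqref{nvars.1}–\eqref{nvars.2}, $t^{-3}e^{2\nu} = t^{-1}\xi^2$ and $\sqrt\alpha = \sqrt{1+\psi}$, so the right side is $\sqrt m\,\sqrt{1+\psi}\,t^{-1}\xi^2$. From \eqref{prop:hoexp.3}, $\xi(t) = t^{(a+\wt_1)^2-1}e^{\nut}(1 + o(1))$ in $H^{k-1}$, hence $t^{-1}\xi^2 \sim t^{2(a+\wt_1)^2 - 3}e^{2\nut}$, and since $\gamma_{\min} = \min_\theta (a+\wt_1(\theta))^2 > 1$ (because $|a|>1$ and $\wt_1$ is small), the exponent $2(a+\wt_1)^2-3 \ge 2\gamma_{\min}-3 > -1$ pointwise, so the right side is integrable in $t$ down to $0$. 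Integrating from $T_0$ gives $H(t) = \mathring H - \int_t^{T_0}\sqrt m\sqrt{1+\psi(\tau)}\,\tau^{-1}\xi(\tau)^2\,d\tau$; completeness of $H^{k-1}(\Tbb)$ and the convergence of the integral give $\Ht\in H^{k-1}(\Tbb)$, with tail $\norm{H(t)-\Ht}_{H^{k-1}} \lesssim \int_0^t \tau^{2\gamma_{\min}-3-\epsilon}\,d\tau \lesssim t^{2(\gamma_{\min}-1)-\epsilon}$; absorbing the $\sigma$-losses from the estimates of Proposition~\ref{prop:hoexp} into the exponent gives exactly \eqref{eq:OtherLimit.2}. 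Here one must be a little careful to bound the $H^{k-1}$-norm of the product $\sqrt{1+\psi}\,t^{-1}\xi^2$ uniformly, which is routine via the Moser/product estimates and $\norm{\psi}_{L^\infty}<1$.

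Finally, the asymptotic constraints. Equations \eqref{eq:AsymptConstr2}–\eqref{eq:AsymptConstr3} should follow by the same strategy used for \eqref{eq:AsymptConstr} in the proof of Proposition~\ref{prop:hoexp}: the constraints \eqref{eq:vacCC.2}, $t a\,\del{\theta}w_0 = w_2$, and \eqref{eq:vacCC.3}, $t\,\del{\theta}\psi = \eta$, propagate (this uses that \eqref{eq:vacC.1}–\eqref{eq:vacC.3} propagate, together with the algebraic change of variables \eqref{nvars.1}–\eqref{nvars.6}), so they hold for all $t\in(0,T_0]$. For \eqref{eq:AsymptConstr3}, rewrite $t\,\del{\theta}\psi = \eta$ as $\del{\theta}\psi = t^{-1}\eta$; the left side converges to $\del{\theta}\psit$ in $H^{k-2}$ by \eqref{PbbperpU(0)} and \eqref{eq:decayestimates} (differentiating the $H^{k-1}$ convergence $\psi\to\psit$), while the right side converges to $\etat$ by \eqref{prop:hoexp.2}; taking $t\searrow 0$ gives the identity. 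For \eqref{eq:AsymptConstr2}, from \eqref{eq:OtherLimit.1} one has $w_0(t) - \frac1a\wt_0 = \frac1a(\wt_1+a)\ln t + O(t+t^{2\kappa_0-2\sigma})$ in $H^{k-1}$; apply $t a\,\del{\theta}$ to the constraint $ta\,\del{\theta}w_0 = w_2$ — more precisely, use $\del{\theta}w_0 = (ta)^{-1}w_2 = a^{-1}(t^{-1}w_2)$, which by \eqref{prop:hoexp.1} converges in $H^{k-2}$ to $a^{-1}(\ln t\,\del{\theta}\wt_1 + \wt_2)$; matching this against $\del{\theta}$ of the expansion of $w_0$, the $\ln t$ terms cancel provided $\del{\theta}\wt_1 = \del{\theta}\wt_1$ (automatic) and the constant terms yield $\del{\theta}\wt_0 = \wt_2$. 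The main obstacle I anticipate is the bookkeeping in the $H$-estimate: ensuring that the spatially-varying exponent $2(a+\wt_1(\theta))^2-3$ in the integrand is handled correctly — in particular that the integral and its $H^{k-1}$-norm are controlled uniformly by $t^{\min_\theta(\cdot)}$-type bounds rather than pointwise-in-$\theta$ bounds — which requires writing $\tau^{2(a+\wt_1)^2} = e^{2(a+\wt_1)^2\ln\tau}$ and estimating this exponential in $H^{k-1}$ via Moser, uniformly for $\tau\in(0,T_0]$, using $\gamma_{\min}>1$ to guarantee $\ln\tau$ appears with a sign that makes the norm bounded.
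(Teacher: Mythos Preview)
Your proposal is correct and follows essentially the same approach as the paper: integrate each decoupled equation from $T_0$, split off the leading term using the asymptotic data $\wt_1$ (for $w_0$) and $\nut$ (for $H$), control the remainder via the decay estimates of Propositions~\ref{prop:main_existence} and~\ref{prop:hoexp}, and then pass to the limit in the propagated constraints. The paper handles the spatially-varying exponent in the $H$-estimate exactly as you anticipate, writing $\tau^{2(a+\wt_1)^2-2} = \tau^{2((a+\wt_1)^2-\gamma_{\min})}\tau^{2\gamma_{\min}-2}$ and bounding $\norm{\tau^{2((a+\wt_1)^2-\gamma_{\min})}}_{H^{k-1}}$ uniformly on $[0,T_0]$ via Moser; one minor simplification is that $w_1$ lies entirely in the $\Pbb^\perp$-block, so $w_1-\wt_1$ is directly controlled by the second estimate in \eqref{eq:decayestimates} without any $\Pbb U$ contribution.
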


\begin{proof}
  The existence and regularity of the solution $w_0$ follows directly from \eqref{vacB.1} and results concerning $w_1$ in Proposition~\ref{prop:main_existence}.
  Given that \eqref{vacB.1} also implies
  \[\del{t}\Bigl(w_0-\frac1 a \Bigl(\wt_1+a\Bigr)\ln t\Bigr)
    =\frac{1}{a t}(w_1-\wt_1),\]
  it follows that
  \begin{equation}
    \label{eq:yeq}
    y(t):=w_0(t)-\frac{1}{a}\Bigl(\wt_1+a\Bigr)\ln t
    =y_0+\int_{T_0}^t\frac{1}{a}(w_1(s)-\wt_1) s^{-1}ds
  \end{equation}
  for some $y_0\in H^{k-1}(\Tbb)$. 
  For any monotonic sequence $(t_n)$ approaching zero, we therefore get
  \begin{align}
    \norm{y(t_n)-y(t_m)}_{H^{k-1}}
    &\le \left|\frac{1}{a}\right| \left|\int_{t_m}^{t_n}\norm{w_1(s)-\wt_1}_{H^{k-1}} s^{-1}ds\right|\notag\\
    \label{eq:ybd}
    &\lesssim \left|\int_{t_m}^{t_n} (s+s^{2\kappa_0-2\sigma}) s^{-1}ds\right|
    \lesssim |t_n-t_m|+|t_n^{2\kappa_0-2\sigma}-t_m^{2\kappa_0-2\sigma}|,
  \end{align}
  using \eqref{eq:decayestimates}. 
  The sequence $(y(t_n))$ is therefore a Cauchy sequence in $H^{k-1}(\Tbb)$ and hence converges to a limit, $\frac{1}{a} \wt_0\in H^{k-1}(\Tbb)$.
  We choose the rescaling by ${a}$ for later convenience.
  Setting $t_n=t$ for an arbitrary $t\in (0,T_0]$ we can take the limit $t_m\searrow 0$ to obtain \eqref{eq:OtherLimit.1}.

  The evolution equation \eqref{Evac.G.t} for $G$ is trivial.
  The evolution equation \eqref{Evac.H.t} for $H$ reads in our variables
  \[\del{t}H = \frac 1t \sqrt{m} \sqrt{1+\psi}\, t^{2((a+\wt_1)^2-1)}e^{2\nut}\exp\bigl(2(\ln(\xi)-((a+\wt_1)^2-1) \ln(t)-\nut)\bigr),\]
  and therefore
  \[H(t)-H(t_0)=\sqrt{m}\,e^{2\nut}\int_{t_0}^t \sqrt{1+\psi(\tau)}\exp\bigl(2(\ln(\xi)-((a+\wt_1)^2-1) \ln(t)-\nut)\bigr) \tau^{2((a+\wt_1)^2-1)-1} d\tau\]
  for any $t,t_0\in (0,T_0]$.
  Since $a^2>1$, we conclude from Propositions~\ref{prop:main_existence} and \ref{prop:hoexp} as well as the Moser inequality that
\begin{align*}
    \norm{ H(t)-H(t_0)}_{H^{k-1}}
  &\lesssim \int_{t_0}^t \norm{\tau^{2((a+\wt_1)^2-\gamma_{\min})}}_{H^{k-1}} \tau^{2(\gamma_{\min}-1)-1}d\tau\\
  &\lesssim t^{2(\gamma_{\min}-1-\sigma)}-t_0^{2(\gamma_{\min}-1-\sigma)}
  \end{align*}
  so long as $0<t_0\le t\le T_0$, since $\norm{\tau^{2((a+\wt_1)^2-\gamma_{\min})}}_{H^{k-1}}$ is uniformly bounded on $[0,T_0]$
and since it follows from  Proposition~\ref{prop:main_existence} that $\wt_1$ can be made so small
   that $\gamma_{\min}>1$. Hence $H(t)$ converges with respect to the $H^{k-1}$-norm in the limit $t\searrow 0$
  owing to the completeness of $H^{k-1}(\Tbb)$. As a consequence, the limit $\Ht$ lies in $H^{k-1}(\Tbb)$ and \eqref{eq:OtherLimit.2} follows.

  If the constraint \eqref{eq:vacCC.2} is satisfied at $t=T_0$, then  \eqref{eq:vacCC.2} is satisfied at every $t\in (0,T_0]$ and
  \begin{align*}
    0=&\bnorm{a\del{\theta}w_0-t^{-1}w_2}_{H^{k-2}}\\
    \ge &\Bnorm{{a}\del{\theta}\Bigl(\frac{1}{a} \Bigl(\wt_1+a\Bigr)\ln t+\frac{1}{a}\wt_0\Bigr)-(\ln(t)\del{\theta}\wt_{1}+\wt_2)}_{H^{k-2}}\\
    &-\left|{a}\right|\Bnorm{w_0-\frac{1}{a}\Bigl(\Bigl(\wt_1+a\Bigr)\ln t+\wt_0\Bigr)}_{H^{k-1}}
    -\bnorm{t^{-1}w_2 -(\ln(t)\del{\theta}\wt_{1}+\wt_2)}_{H^{k-2}}.
  \end{align*}
  According to \eqref{prop:hoexp.1} and \eqref{eq:OtherLimit.1}, we can make the last two terms arbitrarily small in the limit $t\searrow 0$. 
  Similarly, 
  \begin{align*}
    0 =& \bnorm{t^{-1}\eta - \del{\theta}\psi}_{H^{k-2}} \\
      \ge& \bnorm{\etat - \del{\theta}\psit}_{H^{k-2}}
            - \bnorm{t^{-1}\eta - \etat}_{H^{k-2}} 
            - \Bnorm{\psi - \psit}_{H^{k-1}},
  \end{align*}
where it follows from \eqref{eq:decayestimates} and \eqref{prop:hoexp.2} that the latter two terms vanish in the limit $t\searrow 0$.
This implies \eqref{eq:AsymptConstr2} and \eqref{eq:AsymptConstr3}.
\end{proof}

Before concluding this subsection, we establish further estimates on the time-derivatives of $e^\nu = t \xi$ and $H$. 
These estimates will be employed in Section~\ref{s.proof_main_result} to obtain decay estimates for the differences between time derivatives of solutions of the full Einstein vacuum system and of the VTD system. 
\begin{lem}
  \label{lem.time_derivatives}
    Suppose the conditions of Proposition~\ref{prop:hoexp} and \ref{prop:fullEFE} are satisfied, and let $w_0, w_1, w_2, \psi, \eta, \xi, H, G$ be the corresponding solution of \eqref{vacB.1}-\eqref{vacB.6}, \eqref{eq:vacCC.2}-\eqref{eq:vacCC.3}, and \eqref{Evac.H.t}-\eqref{Evac.G.t}.
    Then the estimates
    \begin{align}
      \label{e.asymptotics.dnudt}
      \Bnorm{t\del{t}\nu - (a + \wt_1)^2}_{H^{k-1}} 
        &\lesssim t^{1-\sigma} + t^{2\kappa_0 - 2\sigma} + t^{2(\gamma_{\min}-1-\sigma)}, \\
      \label{e.asymptotics.dHdt}
      \Bnorm{t\del{t}H - \sqrt{m}(1+\psit)^{1/2}t^{2(a+\wt_1)^2-2}e^{2\nut}}_{H^{k-1}} &\lesssim t^{2\gamma_{\min}-2-\sigma}(t + t^{2\kappa_0 - 3\sigma}), \\
      \label{e.asymptotics.dadt}
      \Bnorm{t\del{t}\psi + \sqrt{m}(1+\psit)^2t^{2(a+\wt_1)^2-2}e^{2\nut}}_{H^{k-1}} &\lesssim t^{2\gamma_{\min}-2-\sigma}(t + t^{2\kappa_0 - 3\sigma}),
    \end{align}
    hold for $\sigma>0$ that can be chosen arbitrarily small.
\end{lem}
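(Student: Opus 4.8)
The plan is to prove the three estimates \eqref{e.asymptotics.dnudt}--\eqref{e.asymptotics.dadt} by simply reading off the relevant evolution equations, substituting the asymptotic expansions established in Propositions~\ref{prop:main_existence}, \ref{prop:hoexp} and \ref{prop:fullEFE}, and then controlling the error terms with the decay estimates \eqref{eq:decayestimates}, \eqref{prop:hoexp.1}--\eqref{prop:hoexp.3} together with Moser/product estimates in $H^{k-1}$ (valid since $k-1\geq 2>1/2$).

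\medskip
\noindent\textbf{Estimate for $t\del{t}\nu$.} From \eqref{nvars.1} we have $e^\nu = t\xi$, hence $t\del{t}\nu = 1 + t\del{t}\xi/\xi = 1 + t\del{t}(\ln\xi)$. Now $\ln\xi = ((a+\wt_1)^2 - 1)\ln t + \nut + r(t)$ where, by \eqref{prop:hoexp.3}, $\norm{r(t)}_{H^{k-1}}\lesssim t + t^{2\kappa_0-2\sigma}$. Differentiating formally, $t\del{t}(\ln\xi) = (a+\wt_1)^2 - 1 + t\del{t}r$. To control $t\del{t}r$ rigorously I would instead go back to \eqref{vacB.5}, which reads $4t\del{t}\xi = (4a^2 - 4 + 8aw_1 + 4w_2^2(1+\psi) + 4w_1^2)\xi + m(1+\psi)\xi^3$; dividing by $\xi$ gives $t\del{t}(\ln\xi) = a^2 - 1 + 2aw_1 + w_1^2 + w_2^2(1+\psi) + \tfrac14 m(1+\psi)\xi^2$. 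Using $w_1 = \wt_1 + O(t^{\kappa_0-\sigma})$ from \eqref{eq:decayestimates}, $w_2^2(1+\psi) = O(t^{2\kappa_0-2\sigma})\cdot(\text{bounded})$ since $w_2 = tW_2$ and $W_2$ is bounded (so in fact $w_2 = O(t^{1-\sigma}\cdot\ln t)$ — more carefully $w_2 = t(\ln t\,\del\theta\wt_1 + \wt_2 + o(1))$, so $w_2^2 \lesssim t^{2-2\sigma}$), and $\xi^2 \lesssim t^{2(\gamma_{\min}-1)-\sigma}$ from \eqref{prop:hoexp.3}, I obtain $t\del{t}(\ln\xi) = (a+\wt_1)^2 - 1 + O(t^{\kappa_0-\sigma}) + O(t^{2-2\sigma}) + O(t^{2(\gamma_{\min}-1)-\sigma})$ in $H^{k-1}$, and adding the $1$ gives \eqref{e.asymptotics.dnudt} after absorbing $t^{\kappa_0-\sigma}$ into $t^{2\kappa_0-2\sigma}$ when $\kappa_0 > ...$; one must be slightly careful with which exponent dominates, but the stated right-hand side $t^{1-\sigma} + t^{2\kappa_0-2\sigma} + t^{2(\gamma_{\min}-1-\sigma)}$ already accommodates all of these.

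\medskip
\noindent\textbf{Estimates for $t\del{t}H$ and $t\del{t}\psi$.} For $H$, equation \eqref{Evac.H.t} gives $t\del{t}H = \sqrt m\sqrt\alpha\, t^{-2}e^{2\nu} = \sqrt m(1+\psi)^{1/2} t^{-2}(t\xi)^2 = \sqrt m(1+\psi)^{1/2}\xi^2$. Writing $\xi^2 = t^{2((a+\wt_1)^2-1)}e^{2\nut}\exp(2r(t))$ with $r$ as above, and using $(1+\psi)^{1/2} = (1+\psit)^{1/2} + O(\norm{\psi-\psit}_{H^{k-1}})$ with $\psi - \psit = O(t^{\kappa_0-\sigma})$ from \eqref{eq:decayestimates}, $\exp(2r) = 1 + O(\norm r) = 1 + O(t + t^{2\kappa_0-2\sigma})$ via Moser, the difference $t\del{t}H - \sqrt m(1+\psit)^{1/2}t^{2(a+\wt_1)^2-2}e^{2\nut}$ is $t^{2(a+\wt_1)^2-2}\cdot O(t^{\kappa_0-\sigma} + t + t^{2\kappa_0-2\sigma})$ in $H^{k-1}$; bounding $t^{2(a+\wt_1)^2} \leq t^{2\gamma_{\min}}$ (after extracting the uniformly bounded factor $t^{2((a+\wt_1)^2-\gamma_{\min})}$) and checking that $\kappa_0 - \sigma$, $1$, $2\kappa_0 - 2\sigma$ are all $\geq$ the minimum appearing in $t + t^{2\kappa_0-3\sigma}$ on the right-hand side gives \eqref{e.asymptotics.dHdt}. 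For $\psi$, equation \eqref{vacB.4} gives $t\del{t}\psi = -m(1+\psi)^2\xi^2$, which is handled identically to the $H$ case with $(1+\psi)^{1/2}$ replaced by $(1+\psi)^2$ and the overall sign and constant adjusted, yielding \eqref{e.asymptotics.dadt}.

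\medskip
\noindent\textbf{Expected main obstacle.} The routine part is the algebra of substituting the expansions; the one genuinely delicate point is bookkeeping the competing powers of $t$ (the ``VTD'' terms carry the factor $t^{2(a+\wt_1)^2-2}$ which is spatially varying, the perturbative errors carry $t^{\kappa_0-\sigma}$ and logarithms, and one must verify the stated right-hand sides dominate all of them uniformly in $\theta$) — in particular, extracting the spatially-varying exponent requires writing $t^{2(a+\wt_1)^2} = t^{2\gamma_{\min}}\cdot t^{2((a+\wt_1)^2-\gamma_{\min})}$ and invoking uniform boundedness of the second factor on $[0,T_0]$ together with a Moser estimate, exactly as in the proof of \eqref{eq:OtherLimit.2} in Proposition~\ref{prop:fullEFE}. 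There is also a minor subtlety in that the $\ln t$ factor hidden in $w_2$ must be absorbed into an arbitrarily small power $t^{-\sigma}$, which is why the right-hand sides are stated with $\sigma$ ``arbitrarily small'' rather than with the sharper exponents.
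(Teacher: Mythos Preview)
Your approach is exactly the paper's: substitute the asymptotic expansions from Propositions~\ref{prop:main_existence}, \ref{prop:hoexp}, \ref{prop:fullEFE} into the right-hand sides of \eqref{vacB.5}, \eqref{Evac.H.t}, \eqref{vacB.4} and control the remainders via product and Moser estimates. The paper's own proof is a two-line sketch saying precisely this.

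One correction on bookkeeping: you write $w_1=\wt_1+O(t^{\kappa_0-\sigma})$ and $\psi-\psit=O(t^{\kappa_0-\sigma})$, citing \eqref{eq:decayestimates}. But $w_1$ and $\psi$ sit in $\Pbb^\perp U$, not $\Pbb U$ (see \eqref{Pbbdef}), so the relevant bound from \eqref{eq:decayestimates} is the stronger one, $\norm{w_1-\wt_1}_{H^{k-1}}+\norm{\psi-\psit}_{H^{k-1}}\lesssim t+t^{2\kappa_0-2\sigma}$. This matters: with your weaker rate $t^{\kappa_0-\sigma}$ and $\kappa_0<1$, the error in $(a+w_1)^2-(a+\wt_1)^2$ would \emph{not} be absorbed by $t^{1-\sigma}+t^{2\kappa_0-2\sigma}+t^{2(\gamma_{\min}-1-\sigma)}$, contrary to what you assert. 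With the correct $\Pbb^\perp$ rate the fit is immediate and no case distinction is needed. The same fix cleans up the $H$ and $\psi$ estimates.
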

The lemma is established by first using \eqref{eq:decayestimates}, \eqref{prop:hoexp.1}-\eqref{prop:hoexp.3}, and \eqref{eq:OtherLimit.1}-\eqref{eq:OtherLimit.2} to estimate the leading order asymptotic behavior of the right-hand-sides of \eqref{Evac.H.t}, \eqref{vacB.5}, and \eqref{vacB.5} and then applying the product and Moser estimates along with Sobolev's inequality.

\subsection{Stability of Kasner Solutions within the Vacuum Polarized $\Tbb^2$-symmetric Class}
\label{s.stability_of_kasner}

In this subsection, we show that there exists a sub-family of Kasner solutions that fall into the set of solutions whose existence is guaranteed by Proposition~\ref{prop:main_existence} and \ref{prop:fullEFE}. 
The other (non-Kasner) solutions guaranteed by Proposition~\ref{prop:main_existence} with the same choice of the parameter $a$ are then interpreted as nonlinear perturbations of the Kasner sub-family.
The $H^k$ estimates proved below strengthen the $L^\infty$-bound \eqref{Ubnd} guaranteed by the global existence result Proposition~\ref{prop:main_existence}.
This result provides a notion of stability for the Kasner solutions within the polarized $\Tbb^2$-symmetric vacuum class. 

We note the variables used in Proposition~\ref{prop:main_existence} can be expressed, due to \eqref{first_order_variables} and \eqref{nvars.1}-\eqref{nvars.6}, in terms of the metric fields of \eqref{polT2metric} as
\begin{equation}
\label{T2vars2FOvars}
    \xi = t^{-1} e^\nu,\quad
    \psi= \alpha-1,\quad
    \eta= t \partial_\theta \alpha, \quad
    w_0=\frac{1}{a} u,\quad
    w_1 = t \partial_t u - a,\quad
    w_2 = t \partial_\theta u.
\end{equation}
Using \eqref{T2vars2FOvars}, we see that the Kasner solutions \eqref{kasner.fields} therefore correspond to 
  \begin{equation}
    \label{Kasner2FOvars}
    \begin{split}
    \xi^{(K)} = t^{\frac 14(K-3)(K+1)},\quad
    \psi^{(K)}= 0,\quad
    w_0^{(K)}=\frac{1}{2a}(1-K)\ln t,\quad
    w_1^{(K)} = \frac 12(1-K) - a,\\
    \eta^{(K)}=w_2^{(K)} = G^{(K)}=H^{(K)}=0.
  \end{split}
\end{equation}
  It is easy to confirm that for any $K\in\Rbb$, this is a solution of the full Einstein vacuum equations in the form \eqref{vacB.1}-\eqref{eq:vacCC.3} for $m=0$. Given an arbitrary $T_0>0$, it is in fact the solution of the initial value problem of these equations for Cauchy data
  \begin{equation}
    \label{eq:KasnerCD}
    \mathring{U}^{(K)} = \Bigl(\frac 12 (1-K) -a, 0, 0, T_0^{\frac 14(K-3)(K+1)}, 0\Bigr)^{\tr},\quad
    \mathring{w}_0^{(K)}=\frac{1}{2a}(1-K)\ln T_0,\quad
    \mathring{G}^{(K)}=\mathring{H}^{(K)}=0,
  \end{equation}
  imposed at $t=T_0$;
  cf.\ \eqref{eq:HOCD}. In particular it follows that the Kasner solution corresponding to an arbitrary $K$ with $K>3$ or $K<-1$, i.e.,
\begin{equation}
    |K-1|>2,
\end{equation}
agrees with the
solution $(U,w_0,G,H)$ of the full Einstein vacuum equations \eqref{vacB.1} -- \eqref{eq:vacCC.2} for $m=0$ asserted by Propositions~\ref{prop:main_existence} and \ref{prop:fullEFE} for the Cauchy data \eqref{eq:KasnerCD} provided we choose $T_0$ sufficiently small and
\begin{equation}
  \label{eq:KasnerStabChoiceba}
  a=\frac 12 (1-K),
\end{equation}
which allows us to satisfy \eqref{eq:CDSmallness} because it follows that $T_0^{\frac 14(K-3)(K+1)}$ can be made as small as necessary to apply Proposition~\ref{prop:main_existence}. 
Observe however that these conclusions would be invalid if $-1\le K\le 3$, which is consistent with our discussion in Section \ref{sec:spatially_homogeneous}.
Recall that there is no smallness condition for $\mathring{w}_0$ in Propositions~\ref{prop:main_existence} and \ref{prop:fullEFE}.
It immediately follows from \eqref{eq:KasnerStabChoiceba} that $|a|>1$. 

Given an arbitrary $K$ as above,   
we set $a$ according to \eqref{eq:KasnerStabChoiceba}. We then choose $m\ge 0$ and a sufficiently small $T_0$. 
The solution $(U,w_0,G,H)$ of the full Einstein vacuum equations \eqref{vacB.1}-\eqref{eq:vacCC.2} asserted by Propositions~\ref{prop:main_existence} and \ref{prop:fullEFE} for arbitrary Cauchy data $\mathring U$ as in Proposition~\ref{prop:main_existence} and $(\mathring{w}_0, \mathring{G}, \mathring{H})$ as in Proposition~\ref{prop:fullEFE}, which satisfy the constraints \eqref{eq:vacCC.2}-\eqref{eq:vacCC.3} at $t=T_0$,
can be understood as a \emph{nonlinear perturbation of the Kasner solution} given by $K$, within the class of polarized vacuum $\Tbb^2$-symmetric solutions. 
In particular, we have from \eqref{eq:decayestimates}, \eqref{prop:hoexp.1}-\eqref{prop:hoexp.3}, \eqref{eq:OtherLimit.1}, \eqref{eq:OtherLimit.2} and \eqref{Kasner2FOvars} that
\begin{align*}
  \Bnorm{w_0-w_0^{(K)}-\frac{1}{a} \Bigl(\wt_1\ln t+\wt_0\Bigr)}_{H^{k-1}}&\lesssim t+t^{2\kappa_0-2\sigma},\\
  \Bnorm{w_1-w_1^{(K)}-\wt_1}_{H^{k-1}}&\lesssim t+t^{2\kappa_0-2\sigma},\\
  \norm{t^{-1}w_2-t^{-1}w_2^{(K)}-(\ln(t)\del{\theta}\wt_{1}+\wt_2)}_{H^{k-2}} &\lesssim
  t+t^{2\kappa_0-3\sigma},\\
  \norm{\ln(\xi(t))-\ln(\xi^{(K)}(t))-(2a\wt_1+\wt_1^2) \ln(t)-\nut}_{H^{k-1}}&\lesssim t+t^{2\kappa_0-2\sigma},\\
  \Bnorm{\psi-\psi^{(K)}-\tilde\psi}_{H^{k-1}}&\lesssim t+t^{2\kappa_0-2\sigma},\\
  \norm{t^{-1}\eta-t^{-1}\eta^{(K)}-\tilde{\eta}}_{H^{k-2}} &\lesssim t+t^{2\kappa_0-3\sigma},\\
  \Bnorm{G-G^{(K)}-\mathring G}_{H^{k}}&=0,\\
  \Bnorm{H-H^{(K)}-\Ht}_{H^{k-1}}&\lesssim t^{2(\gamma_{\min}-1-\sigma)},
\end{align*}
for all $t\in (0,T_0]$.
Using \eqref{eq:decayestimates}, \eqref{prop:hoexp.1}-\eqref{prop:hoexp.3}, \eqref{eq:OtherLimit.1} and \eqref{eq:OtherLimit.2} with $t=T_0$ to estimate the limit quantities in terms of the initial data and $T_0$, we get
\begin{align*}
  \Bnorm{w_0-w_0^{(K)}}_{H^{k-1}}&\lesssim \Bnorm{ \mathring{w}_1 }_{H^{k-1}}|\ln (t)|+\Bnorm{\mathring{w}_0-\ln T_0}_{H^{k-1}}+(1+|\ln (t)|)(T_0+T_0^{2\kappa_0-2\sigma}),\\
  \Bnorm{w_1-w_1^{(K)}}_{H^{k-1}}&\lesssim \Bnorm{ \mathring{w}_1 }_{H^{k-1}}+ T_0+T_0^{2\kappa_0-2\sigma},\\
  \norm{t^{-1}w_2-t^{-1}w_2^{(K)}}_{H^{k-2}} &\lesssim \Bnorm{ \mathring{w}_1 }_{H^{k-1}}|\ln (t)|+
  \norm{T_0^{-1}\mathring{w}_2}_{H^{k-2}}+(1+|\ln (t)|)(T_0+T_0^{2\kappa_0-3\sigma}),\\  
  \norm{\ln(\xi(t))-\ln(\xi^{(K)}(t))}_{H^{k-1}}&\lesssim \Bnorm{ \mathring{w}_1 }_{H^{k-1}}|\ln(t)|+\Bnorm{ \mathring{w}_1 }_{H^{k-1}}^2 |\ln(t)|
                                                  +\norm{\ln(\mathring{\xi})+\ln(T_0)}_{H^{k-1}}\\
                                                  &+(1+|\ln (t)|)|T_0+T_0^{2\kappa_0-2\sigma}),\\
  \Bnorm{\psi-\psi^{(K)}}_{H^{k-1}}&\lesssim \Bnorm{\mathring{\psi}}_{H^{k-1}}+T_0+T_0^{2\kappa_0-2\sigma},\\
  \norm{t^{-1}(\eta-\eta^{(K)})}_{H^{k-2}} &\lesssim \norm{T_0^{-1}\mathring{\eta}}_{H^{k-2}}+T_0+T_0^{2\kappa_0-3\sigma},\\                                                         
  \Bnorm{G-G^{(K)}}_{H^{k}}&=\Bnorm{\mathring G}_{H^{k}},\\
  \Bnorm{H-H^{(K)}}_{H^{k-1}}&\lesssim \Bnorm{\mathring{H}}_{H^{k-1}}+T_0^{2(\gamma_{\min}-1-\sigma)},
\end{align*}
assuming that $|\ln(t)|\ge|\ln(T_0)|$.
We can then divide the first, the third and the fourth inequality by $|\ln (t)|$ (assuming that $T_0$, and therefore $t$, is so small that $|\ln t|\ge |\ln(T_0)|>1$) to find
\begin{align}
  \label{s.stability_of_kasner.first}
  \Bnorm{(w_0-w_0^{(K)})/\ln (t)}_{H^{k-1}}
  &\lesssim \frac 1{|\ln T_0|}\bnorm{\mathring{w}_0-\ln T_0}_{H^{k-1}}+\bnorm{\mathring{w}_1}_{H^{k-1}}\\
  &\qquad+T_0+T_0^{2\kappa_0-2\sigma},\\
  \Bnorm{w_1-w_1^{(K)}}_{H^{k-1}}&\lesssim \bnorm{\mathring{w}_1}_{H^{k-1}}+T_0+T_0^{2\kappa_0-2\sigma},\\
  \norm{t^{-1}(w_2-w_2^{(K)})/\ln (t)}_{H^{k-2}}  
  &\lesssim \norm{T_0^{-1}\mathring{w_2}/\ln T_0}_{H^{k-2}}
    +\norm{\mathring{w}_{1}}_{H^{k-1}}
    +T_0+T_0^{2\kappa_0-3\sigma},\\
  \norm{(\ln(\xi(t))-\ln(\xi^{(K)}(t)))/\ln(t)}_{H^{k-1}}&\lesssim \Bnorm{ \mathring{w}_1 }_{H^{k-1}}+\Bnorm{ \mathring{w}_1 }_{H^{k-1}}^2 
                                                  +\frac 1{|\ln T_0|}\norm{\ln(\mathring{\xi})+\ln(T_0)}_{H^{k-1}}\\
                                                  &+T_0+T_0^{2\kappa_0-2\sigma},\\                                                                                  
  \Bnorm{\psi-\psi^{(K)}}_{H^{k-1}}&\lesssim \Bnorm{\mathring{\psi}}_{H^{k-1}}+T_0+T_0^{2\kappa_0-2\sigma},\\
  \norm{t^{-1}(\eta-\eta^{(K)})}_{H^{k-2}} &\lesssim \norm{T_0^{-1}\mathring{\eta}}_{H^{k-2}}+T_0+T_0^{2\kappa_0-3\sigma},\\                                                         
  \Bnorm{G-G^{(K)}}_{H^{k}}&=\Bnorm{\mathring G}_{H^{k}},\\
  \label{s.stability_of_kasner.last}
  \Bnorm{H-H^{(K)}}_{H^{k-1}}&\lesssim \Bnorm{\mathring{H}}_{H^{k-1}}+T_0^{2(\gamma_{\min}-1-\sigma)},
\end{align}
where we recall that the exact Kasner quantities are given by \eqref{Kasner2FOvars}. 
The above inequalities bound the difference between (1) an arbitrary solution of the full (polarized $\Tbb^2$-symmetric) Einstein vacuum equations \eqref{vacB.1}-\eqref{eq:vacCC.2} as asserted by Propositions~\ref{prop:main_existence} and \ref{prop:fullEFE} for arbitrary Cauchy data $\mathring U$ as in Proposition~\ref{prop:main_existence} and $(\mathring{w}_0, \mathring{G}, \mathring{H})$ as in Proposition~\ref{prop:fullEFE} which satisfy the constraints \eqref{eq:vacCC.1}--\eqref{eq:vacCC.3} at $t=T_0$, and, (2) the Kasner solution given by $K$ uniformly for $t\in (0,T_0]$. 
These bounds are given in terms of the size of the initial data and the size of $T_0$. 
In particular, this difference is therefore uniformly small provided $T_0$ is small and 
\begin{gather*}
  \frac 1{|\ln T_0|}\Bnorm{\mathring{w}_0-\ln T_0}_{H^{k-1}},\quad
  \left|\right|\bnorm{\mathring{w}_1}_{H^{k-1}}, \\
  \norm{T_0^{-1}\mathring{w_2}/\ln T_0}_{H^{k-2}},\quad \frac 1{|\ln T_0|}\norm{\ln(\mathring{\xi})+\ln(T_0)}_{H^{k-1}},\quad \Bnorm{\mathring{\psi}}_{H^{k-1}}, \\
  \norm{T_0^{-1}\mathring{\eta}}_{H^{k-2}},\quad
  \Bnorm{\mathring G}_{H^{k}},
  \Bnorm{\mathring{H}}_{H^{k-1}},
\end{gather*}
are small.

\subsection{Existence of Solutions to the Singular Initial Value Problem for the VTD Equations}
\label{s.avtd}

\phantom{h}
\vspace{1ex}
 
Propositions \ref{prop:hoexp}, \ref{prop:fullEFE}, and \ref{prop:main_existence} above establish the existence of solutions to the full Einstein vacuum equations of the form
\begin{align}
  w_0 &= \frac{1}{a} \left(\wt_1 + a\right) \ln(t) + \frac{1}{a} \wt_{0} + O(t + t^{2\kappa_0 - 2\sigma}) \\
  w_1 &= \wt_1 + O(t + t^{2\kappa_0 - 2\sigma}) \\ 
  w_2 &= t\wt_2 + t\ln(t)\del{\theta}\wt_1 + O(t^2 + t^{1 + 2\kappa_0 - 3\sigma}) \\ 
  \psi &= \psit +  O(t + t^{2\kappa_0 - 2\sigma}) \\
  \eta &= t \etat + O(t^2 + t^{1 + 2\kappa_0 - 3\sigma}) \\ 
  \nu &= (a + \wt_1)^2\ln(t) + \nut + O(t + t^{2\kappa_0 - 2\sigma}) \\  
  H &= \Ht + O(t^{2(\gamma_{\min}-1-\sigma)}) \\ 
  G &= \mathring{G}
\end{align}
where 
\begin{equation}
  \mathring{G} \in H^k(\Tbb), \quad \wt_0, \wt_1, \psit, \nut, \Ht \in H^{k-1}(\Tbb), \AND \wt_2, \etat \in H^{k-2}(\Tbb).
\end{equation}
We show below that such solutions are asymptotically velocity term dominated (AVTD). 
The leading order terms in the expressions above have heuristically been shown to satisfy the VTD system \cite{Clausen2007,isenberg1999}.
To make this argument rigorous, we prove existence of solutions to the VTD system \eqref{vtd.u}--\eqref{vtd.nu} with the above leading order asymptotics.
The technique is standard for Fuchsian ODE; see for example Theorem 5.1 of \cite{kichenassamy2007k}.
The system of ODE considered here is parameterized by $\theta \in \Tbb^1$, though no spatial derivatives occur in the main system of equations. 
In order to compare the solutions of the VTD system that we obtain with solutions of the full Einstein system (which does contain spatial derivative terms), below we employ estimates in a Sobolev space. 
This comparison, and thus the verification of AVTD behavior, is performed in Section~\ref{s.proof_main_result}.

It is convenient to use the variables introduced above in \eqref{nvars.1}--\eqref{nvars.6}. 
In terms of these variables, the VTD equations \eqref{vtd.u}--\eqref{vtd.nu} can be written as
\begin{align}
  \label{vtd.1}
  t \del{t}w_0 &= \frac{1}{a} w_1 + 1, \\
  \label{vtd.2}
  t \del{t}w_1 &= -\frac{m}{2}(1 + \psi)( a + w_1) \xi^2, \\
  \label{vtd.3}
  t \del{t}\psi &= - m (b + \psi)^2 \xi^2, \\
  \label{vtd.4}
  t \del{t}\xi &= \left( (a+ w_1)^2 - 1\right)\xi + \frac{m}{4}(1 + \psi)\xi^3.
\end{align}

Before stating the main result of this subsection we introduce the Banach space
\begin{equation}
  \label{bdd_decay_space}
  E_{k, \mu, T} = \{ V \in C([0,T], H^k(\Tbb, \Rbb^4)) : \sup_{t \in (0, T)}\|t^{-\mu}V\|_{H^k} < \infty \},
\end{equation}
of $H^k$-valued functions which are continuous on $[0, T]$ and vanish faster than $t^\mu, \, \mu \in \Rbb_+$ as $t\searrow 0$. 
Below, we assume $k\ge1$ to ensure continuity in space.

\begin{prop}
  \label{prop.vtd_existence}
  Let $m>0$, $k\ge1$. 
  Let $\wb_0,\wb_1, \psib,\xib, \Hb, \Gb \in H^{k}(\Tbb)$, and let $\mu \in C^\infty(\Tbb)$ satisfy $0 < \mu < 2(\gamma-1)$, where $\gamma := (a + \wb_1)^2 > 1$ for all $\theta \in \Tbb$.  
  For any such choice of $m, k, \mu, \wb_0,\psib, \wb_1, \xib, \Hb, \Gb$, there exists a solution of the VTD equations \eqref{vtd.1}--\eqref{vtd.4} of the form 
  \begin{equation}
    \label{fuchsian_ansatz}
    w_0=\frac{1}{a} \left((\wb_1 + a)\ln(t) + \wb_0 \right)+ \omega_0,\quad
    w_1=\wb_1 + \omega_1, \quad
    \psi=\psib + \omega_2, \quad
    \xi=t^{\gamma-1} (\xib + \omega_3),
  \end{equation}
  where $\Omega := \left(\omega_0, \omega_1, \omega_2, \omega_3 \right)$ is in the space $E_{k, \mu, T}$.

Further, provided $\wb_0,\psib, \wb_1, \xib$ satisfy 
\begin{equation}
  \label{eq.vtd_constraint_asymptotics}
  \partial_\theta \xib
      - 2 (a+\wb_1) (\del{\theta} \wb_0) \xib
      + \frac 12 (1+\psib)^{-1} (\del{\theta}\psib) \xib=0,
\end{equation}
this solution of the form \eqref{fuchsian_ansatz} is a solution to the full VTD system \eqref{vtd.u}--\eqref{vtd.H}. 
\end{prop}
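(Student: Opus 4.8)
The plan is to establish the two assertions separately. First I would construct, by a Fuchsian fixed-point argument, a solution of the reduced system \eqref{vtd.1}--\eqref{vtd.4} of the form \eqref{fuchsian_ansatz} with $\Omega\in E_{k,\mu,T}$; then I would upgrade it to a solution of the full VTD system \eqref{vtd.u}--\eqref{vtd.H} by reconstructing the remaining fields and showing that the VTD constraint propagates. For the first part, substituting \eqref{fuchsian_ansatz} into \eqref{vtd.1}--\eqref{vtd.4}, the logarithmic term in $w_0$ and the exponent $\gamma-1=(a+\wb_1)^2-1$ in $\xi$ are chosen precisely so that the singular leading-order terms cancel, leaving a system of the form
\[
  t\del{t}\Omega = \mathcal N(\theta)\,\Omega + \mathcal G(t,\theta,\Omega),
\]
in which no spatial derivatives of $\Omega$ occur, $\mathcal N(\theta)$ is the nilpotent matrix encoding the couplings $\del{t}\omega_0\leftarrow\omega_1$ and $\del{t}\omega_3\leftarrow\omega_1$ coming from \eqref{vtd.1} and \eqref{vtd.4}, and $\mathcal G$ collects a pure source term of size $O(m\,t^{2(\gamma-1)})$ with $\theta$-dependent $H^k$ coefficients together with terms at least quadratic in $\Omega$. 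Removing $\mathcal N$ by the integrating factor $e^{-\ln(t)\mathcal N(\theta)}$, which is polynomial in $\ln t$ since $\mathcal N$ is nilpotent, and integrating from $t=0$, one recasts the problem as a fixed-point equation $\Omega=\mathcal T(\Omega)$ on a closed ball of $E_{k,\mu,T}$. Contractivity of $\mathcal T$ for $T$ small would follow from the Banach-algebra and Moser estimates on $H^k(\Tbb)$, $k\ge1$ (cf.\ Propositions 2.4 and 3.7 of Chapter 13 of \cite{TaylorIII:1996}), for the nonlinear terms; from the identity $\int_0^t s^{2\gamma(\theta)-2}\,\tfrac{ds}{s}=\tfrac{t^{2\gamma(\theta)-2}}{2\gamma(\theta)-2}$ together with the fact that the exponent $2\gamma(\theta)-2-\mu(\theta)$ is continuous and strictly positive, so that $\norm{t^{2\gamma(\cdot)-2-\mu(\cdot)}}_{H^k}\to 0$ as $t\searrow0$ (this is exactly where $0<\mu<2(\gamma-1)$ is used); and from the fact that the $\ln t$ factors produced by the integrating factor are subpolynomial and absorbed by any positive power of $t$. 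The fixed point is the desired $\Omega\in E_{k,\mu,T}$, and since $k\ge1$ and $\Tbb$ is one-dimensional there is no loss of regularity.

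The step requiring the most care, relative to the textbook theory of Fuchsian ODE systems (e.g.\ Theorem~5.1 of \cite{kichenassamy2007k}), is that the Fuchsian exponent $\gamma=(a+\wb_1)^2$ — and hence the admissible weight $\mu$ — is a \emph{function} of $\theta$ rather than a constant; consequently the powers of $t$ generated by time-integration must be estimated \emph{inside} the $H^k$ norm as $t^{f(\theta)}$ with $f>0$ continuous, and cannot be pulled out as scalars. The nilpotent, rather than hyperbolic, form of the linearization $\mathcal N$ is a secondary point, dealt with by the logarithmic integrating factor.

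For the second part, set $u=a\,w_0$, $\alpha=1+\psi$ and $\nu=\ln t+\ln\xi$. Using $t\del{t}u=a+w_1$ from \eqref{vtd.1} and $e^{2\nu}=t^2\xi^2$, a direct substitution shows that \eqref{vtd.1}--\eqref{vtd.4} are equivalent to \eqref{vtd.u}--\eqref{vtd.nu}. Take $G\equiv\Gb$, which solves \eqref{vtd.G}. Since $t^{-3}e^{2\nu}=t^{2\gamma-3}(\xib+\omega_3)^2$ with $2\gamma(\theta)-3>-1$, the right-hand side of \eqref{vtd.H} is integrable at $t=0$, and $H(t):=\Hb+\int_0^t\sqrt{m}\,(1+\psi(s))^{1/2}(\xib+\omega_3(s))^2\,s^{2\gamma-3}\,ds$ defines an element of $C([0,T],H^k(\Tbb))$ (by the same $t^{f(\theta)}$ estimate together with the Moser inequality) that solves \eqref{vtd.H} with $H(0)=\Hb$. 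It then remains only to verify the VTD constraint \eqref{vtd.constraint}, equivalently \eqref{eq:vacCC.1} in the variables $(w_0,w_1,\psi,\xi)$.

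This last point is the crux of the argument. I would set $C:=\del{\theta}\ln\xi-2a(a+w_1)\del{\theta}w_0+\tfrac12(1+\psi)^{-1}\del{\theta}\psi$, so that \eqref{vtd.constraint} holds at time $t$ if and only if $C(t,\cdot)=0$. A direct computation using only \eqref{vtd.1}--\eqref{vtd.4} shows that $C$ satisfies a linear homogeneous equation $t\del{t}C=g(t,\theta)\,C$, with $g=-\tfrac{m}{2}(1+\psi)\xi^2=O(m\,t^{2(\gamma-1)})$, so that $s^{-1}g(s,\cdot)$ is integrable on $(0,T]$. Hence $C(t)=C(t_0)\exp\!\bigl(\int_{t_0}^t s^{-1}g(s,\cdot)\,ds\bigr)$ for $0<t_0\le t\le T$; as $t_0\searrow0$ the exponential converges to a finite, everywhere-positive limit, while — using $\Omega\to0$ and $\del{\theta}\Omega\to0$ and the cancellation of the two $\ln t$-divergent contributions in $\del{\theta}\ln\xi$ and in $2a(a+w_1)\del{\theta}w_0$ (the surviving discrepancy being of order $t^\mu\ln t$) — the factor $C(t_0)$ converges to $\xib^{-1}$ times the left-hand side of \eqref{eq.vtd_constraint_asymptotics}, which vanishes by hypothesis. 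Therefore $C\equiv0$ on $(0,T]$, so \eqref{vtd.constraint} holds for all $t$ and $(u,\nu,\alpha,G,H)$ is a solution of the full VTD system \eqref{vtd.u}--\eqref{vtd.H}.
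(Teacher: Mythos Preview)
Your proposal is correct and follows essentially the same route as the paper: substitute the ansatz \eqref{fuchsian_ansatz}, pull off the nilpotent linear part with the integrating factor $t^{-\mathcal N}$ (the paper's $t^{-C}$), and run a contraction on $E_{k,\mu,T}$; then show the constraint quantity satisfies $t\del{t}\Vc=-\tfrac m2(1+\psi)\xi^2\,\Vc$ and vanishes in the limit $t\searrow0$ under \eqref{eq.vtd_constraint_asymptotics}. The only noteworthy differences are bookkeeping: you place the linear $\omega_1\!\to\!\omega_3$ coupling $2(a+\wb_1)\xib$ in $\mathcal N$ (still nilpotent, so the logarithmic integrating factor still works) whereas the paper keeps only the $\omega_1\!\to\!\omega_0$ coupling in its matrix $C$ and absorbs the rest into $\Hc$, and you construct $H$ by direct integration while the paper cites \cite{ames2013a}.
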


\begin{proof}
  The system \eqref{vtd.1}--\eqref{vtd.4} implies the following system of equations for the new unknowns $\Omega$:
\begin{equation}
  \label{e.vtd_remainder}
  t\del{t}\Omega(t) - C \Omega(t) = \Phi(t) + \Hc(t,\Omega), 
\end{equation} 
where 
\begin{equation}
  C :=
  \begin{pmatrix} 
    0 & \frac{1}{a} & 0 & 0 \\
    0 & 0 & 0 & 0 \\
    0 & 0 & 0 & 0 \\
    0 & 0 & 0 & 0 
  \end{pmatrix},
\end{equation}
\begin{equation}
  \Phi(t) :=
  \begin{pmatrix} 
    0 \\ \phi_1(t) \\ \phi_2(t) \\ \phi_3(t)
  \end{pmatrix}
  = 
  \begin{pmatrix} 
    0 \\ 
    -\frac{m}{2}(a + \wb_1)(1+\psib)\xib^2 t^{2(\gamma-1)} \\ 
    -m(1+\psib)^2\xib^2 t^{2(\gamma-1)} \\ 
    -\frac{m}{4}(1+\psib)\xib^3 t^{3(\gamma-1)}
  \end{pmatrix},
\end{equation}
and in the limit $t\searrow 0$
\begin{equation}
  \label{Hdef}
  \Hc(t,\Omega(t, \theta)) :=
  \begin{pmatrix} 
    0 \\ 
    \hc_1 \\ 
    \hc_2 \\ 
    \hc_3 \\ 
  \end{pmatrix}
  =
  \begin{pmatrix} 
    0 \\ 
    O(t^{2(\gamma-1)}(\omega_1 +  \omega_2 + \omega_3)) + o(\Omega^2) \\ 
    O(t^{2(\gamma-1)}(\omega_2 + \omega_3)) + o(\Omega^2) \\ 
    O(t^{\gamma-1}(\omega_1 + \omega_3)+ t^{3(\gamma-1)}\omega_2) + o(\Omega^2) \\ 
  \end{pmatrix}.
\end{equation}
The $O$-notation (cf. footnote \ref{ft.oh_notation}) in \eqref{Hdef} is with respect to the $\norm{\cdot}_{H^k}$ norm.
For any $T_0 < 1$, $V,W \in  E_{k,\mu,T_0}$, and any $t \in (0,T_0)$, $\Hc$ satisfies $\|\Hc(t,V(t)) - \Hc(t,W(t))\|_{H^{k}} \le \|t^{\gamma-1}\|_{L^\infty} \|V(t) - W(t)\|_{H^{k}}$.
We note that $\|t^{\gamma-1}\|_{L^\infty} = t^{\gamma_{\mathrm{min}-1}}$, where $\gamma_{\mathrm{min}} := \min_{\theta\in \Tbb} \gamma(\theta)$.

Let $\mathfrak{F}[\Omega](t,\theta) := \Phi(t) + \Hc(t, \Omega(t,\theta))$, and formally define 
\begin{equation}
  \label{Gfdef}
  \Gf[V](t,\theta) := t^{C} \int_0^t s^{-1} s^{-C} \Ff[V](s,\theta)\rmd s.
\end{equation}
This quantity is the formal integral solution of \eqref{e.vtd_remainder}.
We show that $\Gf[\cdot]$ is a well-defined endomorphism on $E_{k, \mu, T}$, and is a contraction for $T_0$ sufficiently small.

Note that 
\begin{equation}
  t^{-C} :=
  \begin{pmatrix} 
    1 & -\frac 1a \ln(t) & 0 & 0 \\
    0 & 1 & 0 & 0 \\
    0 & 0 & 1 & 0 \\
    0 & 0 & 0 & 1 
  \end{pmatrix}.
\end{equation}
Since $\gamma>1, \mu > 0$, both $\del{s}(s^{-C}\Omega)$ and $s^{-1} s^{-C} \Ff[V](s,\theta)$ can be integrated on $[0, t]$, for any $t < T_0$.
Moreover, it follows from $\mu < 2(\gamma-1)$ that one can show $\Gf[0] \in E_{k,\mu,T_0}$. 
Thus $\Gf[\cdot]$ is well-defined, and the solution of the integral equation
\begin{equation}
  \label{vtd.weak_form}
  \Omega = \Gf[\Omega],
\end{equation}
should it exist, is differentiable in time and satisfies \eqref{vtd.1}--\eqref{vtd.4}.

We verify $\Gf[\cdot]$ exists by showing that it is a contraction mapping.
To show this, let $V$ and $W$ be any two elements of $E_{k,\mu,T_0}$.
Then,
\begin{align*}
  \sup_{t\in(0,T_0)} \|\Gf[V](t) - \Gf[W](t)\|_{H^{k}}
    &\le \sup_{t\in(0,T_0)}
      \left\{
        |t^C| \int_0^t s^{-1}|s^{-C}|\|\Ff[V](s)-\Ff[W](s)\|_{H^{k}} \rmd s
      \right\}, \\
    &\le \sup_{t\in(0,T_0)}
      \left\{
        |t^C| \int_0^t s^{\gamma_{\mathrm{min}}-1}s^{-1}|s^{-C}|\|V(s)-W(s)\|_{H^{k}} \rmd s
      \right\} \\
    &\le \sup_{t\in(0,T_0)}
      \left\{
        |t^C| \int_0^t s^{\gamma_{\mathrm{min}} + \mu -1}s^{-1}|s^{-C}|\|(V(s)-W(s))s^{-\mu}\|_{H^{k}} \rmd s
      \right\}, \\  
    &\le \sup_{t\in(0,T_0)}
      \left\{
        |t^C| \int_0^t s^{\gamma_{\mathrm{min}} + \mu -1}s^{-1}|s^{-C}|\rmd s
      \right\}
      \sup_{t\in(0,T_0)}\|(V(t)-W(t))t^{-\mu}\|_{H^{k}}.               
\end{align*}
In the last step, the factor $\|(V-W)t^{-\mu}\|_{H^{k}}$ is bounded on $[0,T_0]$ since $V,W \in  E_{k,\mu,T_0}$. 
The positivity of $\mu$ and $\gamma_{\min}-1$ implies that $\Zc(t) := |t^C| \int_0^t s^{\gamma_{\mathrm{min}} + \mu -1}s^{-1}|s^{-C}|\rmd s$ in the last equation above is bounded and vanishes as $t\searrow 0$. 
Taking $T_0$ sufficiently small it follows that $\sup_{t\in(0,T_0)} \Zc(t) < 1$, and thus that $\Gf[\cdot]$ is a contraction on $E_{k,\mu,T_0}$.
This shows existence of a unique solution to \eqref{vtd.weak_form} and by the arguments above, a time-differentiable solution to \eqref{e.vtd_remainder} in $E_{k,\mu,T_0}$, and finally the existence of solutions to the VTD system \eqref{vtd.1}-\eqref{vtd.4} of the form \eqref{fuchsian_ansatz}.

It remains to prove that the solution satisfies the full set of VTD equations \eqref{vtd.u}--\eqref{vtd.H}. 
Let $\Vc := \del{\theta} \nu - 2t \del{t}u \del{\theta}u + \frac 12 \del{\theta}\ln(\alpha)$.
A short calculation shows that $\del{t} \Vc = - \frac m2 \alpha t^{-3}e^{2\nu} \Vc$. 
Thus, the SIVP has a unique zero solution provided $\tilde \Vc = \lim_{t\searrow 0} \Vc = 0$. 
Analysis similar to that in Proposition~\ref{prop:hoexp} for the constraint in the full Einstein system shows that $\tilde \Vc = 0$ if and only if \eqref{eq.vtd_constraint_asymptotics} holds. 
This shows that the constraint is propagated by the VTD system under the assumption \eqref{eq.vtd_constraint_asymptotics} on $\wb_0,\psib, \wb_1, \xib$.
Solutions of the SIVP for the auxiliary equations \eqref{vtd.G}, \eqref{vtd.H} have been shown to exist in \cite{ames2013a}.
The results of that work apply here with $A\equiv 0$ (the metric field $A$ is denoted $Q$ in \eqref{T2metric} above).
\end{proof}

The following corollary is the analogue of Lemma~\ref{lem.time_derivatives} that holds for solutions of the VTD equations.
\begin{cor}
  \label{cor.vtd.time_derivatives}
  Assume that the conditions of Proposition~\ref{prop.vtd_existence} hold.
  Then the time derivatives $t\del{t}(\nu^{(VTD)})$, $t\del{t}H^{(VTD)}$, and $t\del{t}\psi^{(VTD)}$ satisfy
  \begin{align}
  \label{e.asymptotics.dnudt.vtd}
  \Bnorm{t\del{t}\nu^{(VTD)} - (a + \wb_1)^2}_{H^{k}} 
    &\lesssim t^\mu + t^{2(\gamma_{\mathrm{min}}-1-\sigma)}, \\
  \label{e.asymptotics.dHdt.vtd}
  \Bnorm{t\del{t}H^{(VTD)} - \sqrt{m}(1+\psib)^{1/2}t^{2(a+\wb_1)^2-2}\xib^2}_{H^{k}} &\lesssim t^{2(\gamma_{\mathrm{min}}-1-\sigma) + \mu}, \\
  \label{e.asymptotics.dadt.vtd}
  \Bnorm{t\del{t}\psi^{(VTD)} + \sqrt{m}(1+\psib)^2t^{2(a+\wb_1)^2-2}\xib^2}_{H^{k}} &\lesssim t^{2(\gamma_{\mathrm{min}}-1-\sigma)a + \mu},
  \end{align}  
  for any $\sigma >0$ which can be chosen arbitrarily small. 
\end{cor}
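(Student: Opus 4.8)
The plan is to substitute the Fuchsian ansatz \eqref{fuchsian_ansatz} directly into closed-form expressions for the three time derivatives, read off the leading term, and bound the remainder by the product, Moser and Sobolev inequalities, in the same spirit as the proof of Lemma~\ref{lem.time_derivatives}. First I would record the algebraic identities in the Fuchsian variables. Since $\nu = \ln t + \ln\xi$, equation \eqref{vtd.4} gives $t\del{t}\nu^{(VTD)} = (a+w_1)^2 + \frac{m}{4}(1+\psi)\xi^2$; using $t^{-2}e^{2\nu}=\xi^2$, equation \eqref{vtd.H} gives $t\del{t}H^{(VTD)} = \sqrt{m}(1+\psi)^{1/2}\xi^2$; and equation \eqref{vtd.3} gives $t\del{t}\psi^{(VTD)} = -m(1+\psi)^2\xi^2$. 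In all three, $\xi^2 = t^{2(\gamma-1)}(\xib+\omega_3)^2$ with $\gamma=(a+\wb_1)^2>1$, so the relevant power of $t$ is $t^{2(\gamma-1)}=t^{2(a+\wb_1)^2-2}$.

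Next I would expand using $w_1=\wb_1+\omega_1$ and $\psi=\psib+\omega_2$. Then $(a+w_1)^2 = (a+\wb_1)^2 + 2(a+\wb_1)\omega_1+\omega_1^2$, so the leading term of $t\del{t}\nu^{(VTD)}$ is $(a+\wb_1)^2$ with remainder $2(a+\wb_1)\omega_1+\omega_1^2 + \frac{m}{4}(1+\psi)t^{2(\gamma-1)}(\xib+\omega_3)^2$. Writing $(\xib+\omega_3)^2 = \xib^2 + 2\xib\omega_3 + \omega_3^2$ and $(1+\psi)^p = (1+\psib)^p + \bigl[(1+\psi)^p-(1+\psib)^p\bigr]$, the leading terms of $t\del{t}H^{(VTD)}$ and $t\del{t}\psi^{(VTD)}$ are respectively $\sqrt{m}(1+\psib)^{1/2}t^{2(a+\wb_1)^2-2}\xib^2$ and $-\sqrt{m}(1+\psib)^2 t^{2(a+\wb_1)^2-2}\xib^2$, with remainders equal to $t^{2(\gamma-1)}$ times factors that are each $O(\norm{\Omega}_{H^k}) + O(\norm{\Omega}_{H^k}^2)$.

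To bound the remainders I would use three facts: (i) $\Omega=(\omega_0,\omega_1,\omega_2,\omega_3)\in E_{k,\mu,T}$ gives $\norm{\Omega(t)}_{H^k}\lesssim t^{\mu}$; (ii) for $k\ge1$, $H^k(\Tbb)$ is a Banach algebra and $H^k(\Tbb)\hookrightarrow L^\infty(\Tbb)$, so the product and Moser inequalities control the products and the compositions $(1+\psi)^{1/2}$, $(1+\psi)^2$ (legitimate since $|\psib|<1$ by smallness of $R_0$ and $\omega_2\to0$), while $\psib,\xib,(a+\wb_1)$ are bounded in $H^k$; and (iii) $\norm{t^{2(\gamma-1)}}_{H^k}\lesssim t^{2(\gamma_{\mathrm{min}}-1)-\sigma}$ for any $\sigma>0$. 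Combining these, the remainder of $t\del{t}\nu^{(VTD)}$ is $O(t^{\mu})+O(t^{2(\gamma_{\mathrm{min}}-1)-\sigma})$, and the remainders of $t\del{t}H^{(VTD)}$ and $t\del{t}\psi^{(VTD)}$ are $O\bigl(t^{2(\gamma_{\mathrm{min}}-1)-\sigma}\cdot t^{\mu}\bigr)$, which, after relabelling $\sigma$, match the bounds \eqref{e.asymptotics.dnudt.vtd}--\eqref{e.asymptotics.dadt.vtd}.

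The only nontrivial step is fact (iii), and it is the one I expect to require care. Writing $t^{2(\gamma-1)} = t^{2(\gamma_{\mathrm{min}}-1)}\, t^{2(\gamma-\gamma_{\mathrm{min}})}$ and observing that $t^{2(\gamma-\gamma_{\mathrm{min}})}=e^{2(\gamma-\gamma_{\mathrm{min}})\ln t}$ takes values in $(0,1]$ for $0<t<1$ (since $\gamma\ge\gamma_{\mathrm{min}}>1$ and $\gamma=(a+\wb_1)^2\in H^k$), the Moser inequality applied to $\exp$ on $(-\infty,0]$ yields $\norm{t^{2(\gamma-\gamma_{\mathrm{min}})}}_{H^k}\lesssim (1+|\ln t|)^{k}\lesssim t^{-\sigma}$ for any $\sigma>0$ and $t$ small, whence $\norm{t^{2(\gamma-1)}}_{H^k}\lesssim t^{2(\gamma_{\mathrm{min}}-1)-\sigma}$. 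This conversion of the logarithmic factors produced by repeated $\theta$-differentiation of a spatially varying power of $t$ into an arbitrarily small power of $t$ is the origin of the loss $\sigma$ in the exponents; everything else is routine bookkeeping with the product and Moser estimates, mirroring the proof of Lemma~\ref{lem.time_derivatives}.
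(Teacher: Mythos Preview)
Your proposal is correct and takes essentially the same approach as the paper, which simply states that the proof ``follows essentially the same steps as the proof of Lemma~\ref{lem.time_derivatives}.'' One minor quibble: your justification that $|\psib|<1$ ``by smallness of $R_0$'' conflates the VTD context with the full Einstein system, since Proposition~\ref{prop.vtd_existence} imposes no smallness on $\psib$; the positivity of $1+\psib$ (needed to apply Moser to $(1+\psi)^{1/2}$) is an implicit standing assumption here, satisfied in the application where $\psib=\psit$.
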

The proof of Corollary~\ref{cor.vtd.time_derivatives} follows essentially the same steps as the proof of Lemma~\ref{lem.time_derivatives}.

\subsection{Completion of the Proof of the Main Result}
\label{s.proof_main_result}

In this section we combine the results of the above subsections in order to complete the proof of Theorem~\ref{thm:mainresult}. 

\subsubsection{Existence, Decay, and Uniform Bounds}
\label{s.proof_main_result.part1}
Suppose $K$ satisfies $|K-1| > 2$ and $\sigma \in (0, 2\kappa_0/3)$, where $\kappa_0 = \min\{1, \frac 14(K-3)(K+1)\}$ are fixed. 
Let $T_0>0$ and $R_0>0$ be sufficiently small so that $|T_0^{\frac 14(K-3)(K+1)}| < R_0$.
Set $a = \frac 12(1-K)$. 
Clearly $|a| > 1$ is as required for Proposition~\ref{prop:main_existence} and \ref{prop:hoexp}.

Given any set of functions $(\mathring{u}, \mathring{\omega}, \mathring{\nu}, \mathring{\alpha}, \mathring{G}, \mathring{H})\in H^k(\Tbb, \Rbb^6)$, 
which serve as initial data for the polarized $\Tbb^2$-symmetric Einstein equations at $T_0$ in the sense that
$(u, \partial_t u, \nu, \alpha, G, H)|_{t=T_0} 
=(\mathring{u}, \mathring{\omega}, \mathring{\nu}, \mathring{\alpha}, \mathring{G}, \mathring{H})$, 
and which satisfy the Einstein constraint equations \eqref{Evac.nu.theta}, we define 
\begin{equation}
  \mathring{w_0} = \frac{1}{a}\mathring{u}, \quad
  \mathring{w_1} = T_0\mathring{\omega}-a, \quad
  \mathring{w_2} = T_0\del{\theta}\mathring{u}, \quad
  \mathring{\psi} = \mathring{\alpha} - 1, \quad
  \mathring{\eta} = T_0\del{\theta}\mathring{\alpha}, \quad
  \mathring{\xi} = T_0^{-1}e^{\mathring{\nu}}.
\end{equation}
Note that $\mathring{\xi} > 0$.
Assume that $\|\mathring{U}\|_{H^k} < R_0$ for $\mathring{U}=\bigl(\mathring{w}_1,\mathring{w}_2,\mathring{\psi},\mathring{\xi},\mathring{\eta}\bigr)^{\tr}$.
We may thus apply Proposition~\ref{prop:main_existence}, \ref{prop:hoexp}, and \ref{prop:fullEFE}.
This establishes the existence of a solution $u, \nu, \alpha, G, H$ to the full set of Einstein vacuum equations, and in addition the existence of functions $(\wt_0, \wt_1, \psit, \nut, \Ht)\in H^{k-1}(\Tbb), \AND \wt_2, \etat \in H^{k-2}(\Tbb)$ such that
\begin{equation}
  \label{eq.asymptotic_constraints}
  \wt_2 = \partial_\theta \wt_0, \quad
  \etat = \del{\theta}\psit, \AND
  \partial_\theta \nut
        - 2 (a+\wt_1) \wt_2
        + \frac 12 (1+\psit)^{-1} \tilde\eta 
        =0.
\end{equation}
Moreover, defining the \emph{asymptotic data}
\begin{equation*}
  \label{e.def.asymptotic_data}
  \ut = \wt_0, \quad
  \kt = 1- 2(\wt_1 + a), \AND
  \alphat = \psit,
\end{equation*}
the following decay estimates hold:
From Proposition~\ref{prop:main_existence}
\begin{align*}
  \Bnorm{t\del{t}u(t)- \frac 12 (1-\kt)}_{H^{k-1}}
    &\lesssim t+t^{2\kappa_0-2\sigma},\\
  \Bnorm{\alpha -1 - \alphat}_{H^{k-1}}
    &\lesssim t+t^{2\kappa_0-2\sigma},
\end{align*}
from Proposition~\ref{prop:hoexp}
\begin{align*}
  \norm{\del{\theta}u(t)-\frac 12 \ln(t)\del{\theta}\kt-\del{\theta}\ut}_{H^{k-2}} &\lesssim t+t^{2\kappa_0-3\sigma}, \\
  \norm{\nu(t)-\frac14(1-\kt)^2 \ln(t)-\nut}_{H^{k-1}}&\lesssim t+t^{2\kappa_0-2\sigma},\\  
  \norm{\del{\theta}\alpha-\del{\theta}\tilde{\alpha}}_{H^{k-2}} &\lesssim t+t^{2\kappa_0-3\sigma},
\end{align*}
and from Proposition~\ref{prop:fullEFE}
\begin{align*}
  \Bnorm{u(t)-\frac 12(1-\kt)\ln t-\ut}_{H^{k-1}}&\lesssim t+t^{2\kappa_0-2\sigma},\\
  \Bnorm{H(t)-\Ht}_{H^{k-1}}&\lesssim t^{2(\gamma_{\min}-1 -\sigma)},
\end{align*}
noting that in terms of $\kt$, $2(\gamma_{\min}-1 -\sigma) = \min_{\theta \in [0,2\pi)}\{\frac{1}{2}(\kt-3)(\kt+1)\} - 2\sigma$. 
In terms of these asymptotic data the constraints \eqref{eq.asymptotic_constraints} become 
\begin{align*}
  \wt_2 = \del{\theta}\ut, \quad 
  \etat = \del{\theta}\alphat, \AND 
  \del{\theta}\nut - (1-\kt)\del{\theta}\ut + \frac 12 \del{\theta}\ln(1+\alphat) = 0.
\end{align*}

In order to establish \eqref{eq:EstimateKasnerParameters}, we notice that $\norm{\kt-K}_{H^{k-1}}=\norm{2\wt_1}_{H^{k-1}}$,
using the definition of $\kt$ and the choice of $a$ above. Estimate \eqref{eq:EstimateKasnerParameters} then follows by taking the limit to $t=0$ of the estimate \eqref{eq:generalestimate} which one obtains as part of the proof of Proposition~\ref{prop:main_existence}.

Let $\mathcal{S}_{K,m,R_0}$ denote the set of solutions whose existence is established by the above application of Propositions~\ref{prop:main_existence}, \ref{prop:hoexp}, and \ref{prop:fullEFE}.
The discussion in Section~\ref{s.stability_of_kasner} shows that each solution in $\mathcal{S}_{K,m,R_0}$ can be considered a perturbation of the Kasner solution parameterized by $K$.
The estimates \eqref{thm.mainresult.kasnerstability.first}--\eqref{thm.mainresult.kasnerstability.last} are obtained by converting the estimates \eqref{s.stability_of_kasner.first}--\eqref{s.stability_of_kasner.last} back to the original metric variables.

\subsubsection{AVTD Property}

Fix a solution in $\mathcal{S}_{K,m,R_0}$ (as introduced above in Section~\ref{s.proof_main_result.part1}) given by $(u, \nu, \alpha, G, H)$ and having asymptotic data $\wt_0, \wt_1, \psit, \nut, \Ht$, and initial datum $\mathring{G}$.
Proposition \ref{prop.vtd_existence} yields the existence of a family of solutions of the form \eqref{polT2metric} to the velocity term dominated system \eqref{vtd.u}--\eqref{vtd.H} parameterized by functions $\wb_0, \wb_1, \psib, \xib, \Hb, \Gb$.
We show that the solution of the full Einstein vacuum equations asymptotically approaches the solution of the VTD system specified by the following choice of asymptotic data functions
\[
\wb_0 = \wt_0, \quad 
\wb_1=\wt_1, \quad 
\psib=\psit, \quad 
\xib=e^{\nut}, \quad 
\Hb=\Ht, \AND 
\Gb = \mathring{G}.
\]

As a result of this choice, and the results of Propositions~\ref{prop:main_existence}, \ref{prop:hoexp}, \ref{prop:fullEFE}, and Proposition~\ref{prop.vtd_existence}, as well as Lemma~\ref{lem.time_derivatives} and Corollary~\ref{cor.vtd.time_derivatives}, the right-hand sides of the following estimates are dominated by a positive power of $t$. 
For any $t \in (0, T_0]$,
\begin{align}
  \label{e.vtd_decay1}
  \begin{split}
  \Bnorm{(u, \alpha, \nu, H, G)- & (u^{(VTD)}, \alpha^{(VTD)}, \nu^{(VTD)}, H^{(VTD)}, G^{(VTD)})}_{H^{k-1}} \\
    \le & \Bnorm{u-(\wt_1 + a)\ln(t) + \wt_0}_{H^{k-1}} 
        + \Bnorm{u^{(VTD)}-(\wt_1 + a)\ln(t) + \wt_0}_{H^{k-1}} \\
        & + \Bnorm{\alpha-1-\alphat}_{H^{k-1}} 
        + \Bnorm{\alpha^{(VTD)}-1-\alphat}_{H^{k-1}} \\
        & +\Bnorm{\nu-(\wt_1 + a)^2\ln(t) - \nut}_{H^{k-1}} 
        + \Bnorm{\nu^{(VTD)}-(\wt_1 + a)^2\ln(t) - \nut}_{H^{k-1}} \\
        &  + \Bnorm{H-\Ht}_{H^{k-1}}
        + \Bnorm{H^{(VTD)}-\Ht}_{H^{k-1}},
        + \Bnorm{G-\Gt}_{H^{k-1}}
        + \Bnorm{G^{(VTD)}-\Gt}_{H^{k-1}},
  \end{split}
\end{align}
\begin{align}
  \label{e.vtd_decay3}
  \Bnorm{t\del{t}u-t\del{t}u^{(VTD)}}_{H^{k-1}}
    \le&  \Bnorm{t\del{t}u- \frac 12 (1-\kt)}_{H^{k-1}}
        + \Bnorm{t\del{t}u^{(VTD)} - \frac 12 (1-\kt)}_{H^{k-1}} \\     
  \label{e.vtd_decay4}
  \Bnorm{t\del{t}H - t\del{t}H^{(VTD)}}_{H^{k-1}}
    \le& \Bnorm{t\del{t}H - \sqrt{m}(1+\psit)^{1/2}t^{2(a+\wt_1)^2 -2}e^{2\nut}}_{H^{k-1}}\\
    &+ \Bnorm{t\del{t}H^{(VTD)} - \sqrt{m}(1+\psit)^{1/2}t^{2(a+\wt_1)^2 -2}e^{2\nut}}_{H^{k-1}} \\  
  \label{e.vtd_decay5} 
  \Bnorm{t\del{t}\nu - t\del{t}\nu^{(VTD)}}_{H^{k-1}}
    \le& \Bnorm{t\del{t}\nu - (a + \wt_1)^2 }_{H^{k-1}}
       + \Bnorm{t\del{t}\nu^{(VTD)} - (a + \wt_1)^2}_{H^{k-1}}, \\
  \label{e.vtd_decay6} 
      \Bnorm{t\del{t}\alpha - t\del{t}\alpha^{(VTD)}}_{H^{k-1}}
        \le& \Bnorm{t\del{t}\alpha + m(1 + \psit)^2t^{2(a + \wt_1)^2-2}e^{2\nut}}_{H^{k-1}}\\
        &+ \Bnorm{t\del{t}\alpha^{(VTD)} + m(1 + \psit)^2t^{2(a + \wt_1)^2-2}e^{2\nut}}_{H^{k-1}}    
\end{align}
Recall that $\del{t}G = 0$, and note that Proposition~\ref{prop.vtd_existence} and \eqref{nvars.1} imply that $\nu^{(VTD)} = (\wb_1 + a)^2\ln(t) + \ln(\xib) + \ln(1 + \omega_3/\xib)$.
This establishes the AVTD property in a weighted norm with $\beta = 1$ (cf. Definition~\ref{def.avtd}). 

\subsubsection{Curvature Blowup}
Finally, we show that each solution in $\mathcal{S}_{K,m,R_0}$ is inextendible as a $C^2$ metric past $t=0$. 
The Kretschmann scalar $\mathcal{K}$ can be straightforwardly computed with the help of computer algebra \cite{xact, xcoba}.
Evaluating $\mathcal{K}$ near $t \searrow 0$ using the expressions for the VTD expansion near the singularity, we find 
\begin{equation}
  \mathcal{K} \sim \frac{(\kt^2 -1)^2(3+\kt^2)e^{4(\ut-\nut)}}{4 \alphat}t^{-3-\kt^2}.  
\end{equation}
Recall definition \eqref{e.def.asymptotic_data} of $\kt, \ut, \vt, \alphat$. 
Clearly $\mathcal{K}$ is unbounded as $t \searrow 0$, which implies that the spacetime is inextendible as a $C^2$ manifold. 
The blow-up rate at each $\theta=\mathrm{const}$ hypersurface is the same as a Kasner spacetime with Kasner exponent given by $K = \kt(\theta)$. 
In fact, this calculation shows that any polarized $\Tbb^2$-symmetric spacetime which is AVTD is inextendible as a $C^2$ manifold in the contracting direction.

\appendix

\section{The Global Initial Value Problem for Symmetric Hyperbolic PDE Systems in Fuchsian Form}
\label{sec:Fuchsian}
As noted in the introduction, our results in this paper depend crucially on Theorem 3.8 from \cite{BOOS:2020}, which establishes the existence of solutions of the ``Global Initial Value Problem" (GIVP) for symmetric hyperbolic partial differential equations systems in Fuchsian form. 
In particular, this theorem shows that for a PDE system in the form
\begin{align}
  B^0(t,u)\del{t}u + B^i(t,u)\nabla_{i} u  &= \frac{1}{t}\Bc(t,u)\Pbb u + F(t,u) \quad \text{in $(0,T_0]\times \Sigma$,} \label{symivpA.1} 
\end{align}
with initial data
\begin{align}
  u &= u_0 \hspace{3.1cm} \text{in $\{T_0\}\times \Sigma$,} \label{symivpA.2}
\end{align}
specified at a time $T_0 >0$ on a closed manifold $\Sigma$, the solution $u(t)$ exist for all time from $T_0$ to the singular time $t=0$. Observe that in this general context $T_0$ is not necessarily required to be small.
Further discussion and applications of the GIVP for PDE's in Fuchsian form can be found in \cite{Oliynyk:CMP_2016,FOW:2020,LeFlochWei:2015,LiuOliynyk:2018b,LiuOliynyk:2018a,LiuWei:2019,Wei:2018,Oliynyk:2020}.
The particular form of the GIVP theorem which applies to our work in this paper is specified in Theorem~\ref{symthm}, which we present below. 
For general Fuchsian PDE systems, there is a rather long set of coefficient assumptions -- see \cite[\S 3.1.]{BOOS:2020} -- that need to be verified in order to apply this existence theorem. 
For the convenience of the reader, we state in Theorem~\ref{symthm} a simplified version of Theorem 3.8~from \cite{BOOS:2020} that is sufficient for the application considered in this article. 
The main point of this simplification is that it greatly reduces and simplifies the coefficient assumptions that need to be checked.
Observe that the convention in the presentation in \cite{BOOS:2020} is that the times $t$ and $T_0$ are negative. Since we work with the positive time convention in this paper exclusively we present all results in this appendix in terms of positive time intervals of the form $(0,T_0)$ or $[0,T_0]$. 

For our application here, the Fuchsian GIVP takes a simplified form as follows:
\begin{align}
B^0(t,u)\del{t}u + B^i(t,u)\del{i} u  &= \frac{1}{t}\Bc(t,u)\Pbb u + \frac{1}{t}F(t,u) \quad \text{in $(0,T_0]\times \Tbb^n$,} \label{symivpB.1} \\
u &= u_0 \hspace{3.4cm} \text{in $\{T_0\}\times \Tbb^n$,} \label{symivpB.2}
\end{align}
where $\del{i}=\del{}/\del{}x^i$, $i=1,2,\ldots, n$ are the partial derivatives with respect to the standard periodic coordinates $x=(x^i)$ on $\Tbb^n$ and the coefficients of the Fuchsian PDE system \eqref{symivpB.1} are assumed to satisfy the following hypotheses.
We state these for PDE systems of general dimension, but note that in our application above $N = 5$ and $n=1$.

\subsection{Assumptions on the Coefficients of the PDE System}
 \label{coeffassumps}
\begin{enumerate}[(i)]
\item $T_0>0$ and $u(t,x)$ is an $\Rbb^N$-valued map.
\item $\Pbb\in \Mbb{N}$ is\footnote{We denote by $\Mbb{N}$ the collection of constant $N\times N$ matrices with real entries.} a symmetric projection
operator; that is,
\begin{equation*} 
\Pbb^2 = \Pbb,  \quad  \Pbb^{\tr} = \Pbb, \quad \del{t}\Pbb =0. 
\end{equation*}
For use below, we define the \textit{complementary projection operator}; that is
\begin{equation*}
\Pbb^\perp = \id -\Pbb,
\end{equation*}
which by our above assumptions, is also a constant, symmetric projection operator.

\item There exist constants $R, \kappa, \gamma_1, \gamma_2 >0$ such that the matrix valued maps 
\begin{equation*}
B^0 \in 
C^0\bigl([0,T_0], C^\infty(B_R(\Rbb^N),\Mbb{N})\bigr)\cap C^1\bigl((0,T_0], C^\infty(B_R(\Rbb^N),\Mbb{N})\bigr)
\end{equation*}
and $\Bc\in C^0\bigl([0,T_0], C^\infty(B_R(\Rbb^N),\Mbb{N})\bigr)$  satisfy
\begin{gather*}
\frac{1}{\gamma_1} \id \leq  B^0(t,v)\leq \frac{1}{\kappa} \Bc(t,v) \leq  \gamma_2\id, \\
[\Pbb,\Bc(t,v)] = 0,\\
B^0(t,v)^{\tr} = B^0(t,v)
\intertext{and}
\Pbb^\perp B^0(t,v)\Pbb = \Pbb B^0(t,v) \Pbb^\perp=0
\end{gather*}
for  all $(t,v)\in (0,T_0]\times B_{R}(\Rbb^N)$.
\bigskip

\item The vector valued map $F\in C^0\bigl([0,T_0], C^\infty(B_R(\Rbb^N),\Rbb^N)\bigr)$ satisfies
\begin{equation} 
\Pbb F(t,v) = 0 
\end{equation}
and for all $(t,v)\in [0,T_0]\times B_R(\Rbb^N)$ there exists a constant $\lambda\geq 0$
such that
\begin{align*}
\Pbb^\perp F(t,v) = \Ordc\biggl(\frac{\lambda}{R}|\Pbb v|^2 \biggr)
\end{align*}
for all $(t,v)\in  (0,T_0]\times B_R(\Rbb^N)$.

\bigskip

\item The matrix valued maps $B^k\in C^0\bigl([0,T_0], C^\infty(B_R(\Rbb^N),\Mbb{N})\bigr)$, $k=1,2,\ldots n$, satisfy
\begin{equation*}
B(t,v)^{\tr}=B(t,v)
\end{equation*}
for all $(t,v)\in (0,T_0]\times B_R(\Rbb^N)$. \

\bigskip

\item  There exist constants $\theta,\beta\geq 0$ such that the  map
\begin{equation*}
\Div\! B \: : \: (0,T_0]\times B_R(\Rbb^N) \times B_R(\Rbb^{N\times n})  \longrightarrow \Mbb{N}
\end{equation*} 
defined by ($I,J,\ldots=1,\ldots,N$; $i,j,\ldots=1,\ldots,n$; the Einstein summation convention is assumed)
\begin{equation}
  \label{DivB}
  \begin{split}
  (\Div\!  B)^I_J&(t,v,w) := (\del{t}{B^0})^I_J(t,v) + (\del{v^K} B^i)^I_J(t,v) w_i^K\\
  &+ (\del{v^K} B^0)^I_J(t,v) ((B^0)^{-1})^K_L(t,v)\Bigl[-(B^j)^L_M(t,v) w_j^M 
+
\frac{1}{t}\Bc^L_M(t,v)\Pbb^M_S v^S + \frac{1}{t}F^L(t,v) \Bigr]
\end{split}
\end{equation}
satisfies
\begin{align*}
 \Div\! B(t,v,w) &= \Ordc\Bigl(\theta+\frac{\beta}{t}|\Pbb v|^2 \Bigr)
\end{align*}
for all $(t,v,w)\in (0,T_0]\times B_R(\Rbb^N)\times B_R(\Rbb^{N\times n})$, where $v=(v^I)$, $w=(w^I_i)$.

\bigskip

\noindent \textit{Note:} It is not difficult to verify that 
\begin{equation*}
\Div\! B(t,u(t,x),D_x u(t,x)) =\del{t}(B^0(t,u(t,x))+ \del{i}(B^i(t,u(t,x)))
\end{equation*}
for solutions $u(t,x)$ of \eqref{symivpB.1}. 
\end{enumerate}

\subsection{Existence and Uniqueness Theorem for the Global Initial Value Problem for the Symmetric Hyperbolic Fuchsian PDE System}

We are now ready to state the existence theorem for the Fuchsian GIVP that we employ in this article. It is a special case of Theorem 3.8.~from \cite{BOOS:2020}, where we are using the improvement to the decay estimate as discussed in Remark 3.10.(a).2~from \cite{BOOS:2020}. One important difference to notice in the theorem below compared to that of  \cite[Theorem 3.8.]{BOOS:2020} is that the regularity requirement is lower (i.e., $k \in \Zbb_{>n/2+1}$ versus $k\in \Zbb_{>n/2+2}$). This is due to the matrix terms $B^i$ being regular in $t$ (as specified in condition (v) above) for the Fuchsian systems that we are considering, and it is this feature that allows the application of \cite[Lemma 3.5.]{BOOS:2020} to be avoided in the existence proof, which leads to the reduction in the required regularity.

\begin{thm} \label{symthm}
Suppose $k \in \Zbb_{>n/2+1}$, $\sigma>0$, $u_0\in H^k(\Tbb^n,\Rbb^N)$, assumptions (i)-(vi) from Section \ref{coeffassumps} are fulfilled,
and the constants $\kappa$, $\gamma_1$, and $\lambda$ from Section \ref{coeffassumps} satisfy $\kappa > \gamma_1(\lambda+\beta/2)$.
Then there exists
a $\delta > 0$ such that if $\norm{u_0}_{H^k(\Tbb^n)} \leq \delta$,
then there exists a unique solution 
\begin{equation*}
u \in C^0\bigl((0,T_0],H^k(\Tbb^n,\Rbb^N)\bigr)\cap L^\infty\bigl((0,T_0],H^k(\Tbb^n,\Rbb^N)\bigr)\cap C^1\bigl((0,T_0],H^{k-1}(\Tbb^n,\Rbb^N)\bigr)
\end{equation*}
of the GIVP \eqref{symivpB.1}-\eqref{symivpB.2} such that the limit $\lim_{t\searrow 0} \Pbb^\perp u(t)$, denoted $\Pbb^\perp u(0)$, exists in $H^{k-1}(\Tbb^n,\Rbb^N)$.

\medskip

\noindent Moreover, for $0<t\le T_0$,  the solution $u$ satisfies  the energy estimate
\begin{equation}
  \label{eq:generalestimate}
\norm{u(t)}_{H^k(\Tbb^n)}^2+ \int^{T_0}_t \frac{1}{\tau} \norm{\Pbb u(\tau)}_{H^k(\Tbb^n)}^2\, d\tau   \lesssim \norm{u_0}_{H^k(\Tbb^n)}^2
\end{equation}
and the decay estimates
\begin{align*}
\norm{\Pbb u(t)}_{H^{k-1}(\Tbb^n)} &\lesssim \begin{cases}
t & \text{if $\kappa > 1$} \\
t^{\kappa-\sigma} & \text{if $0 < \kappa \leq 1$}
\end{cases}
\intertext{and}
\norm{\Pbb^\perp u(t) - \Pbb^\perp u(0)}_{H^{k-1}(\Tbb^n)} &\lesssim
 \begin{cases}  t
& \text{if $\kappa > 1$} \\
 t+t^{2(\kappa-\sigma)}  &  \text{if $0<\kappa \leq 1 $ }
 \end{cases}.
\end{align*}
\end{thm}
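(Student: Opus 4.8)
The plan is to deduce Theorem~\ref{symthm} from Theorem~3.8 of \cite{BOOS:2020}, so the proof reduces to three tasks: (a) checking that the simplified hypotheses (i)--(vi) of Section~\ref{coeffassumps} imply the full coefficient assumptions of \cite[\S3.1]{BOOS:2020}; (b) observing that the $t$-regularity of the spatial coefficients $B^i$ built into hypothesis (v) permits lowering the differentiability threshold from $k>n/2+2$ to $k>n/2+1$; and (c) transcribing the conclusion, including the improved decay estimate of \cite[Remark~3.10(a).2]{BOOS:2020}, from the negative-time convention there to the positive-time convention used here via $t\mapsto -t$. First I would verify (a): the chain $\frac{1}{\gamma_1}\id\le B^0\le\frac{1}{\kappa}\Bc\le\gamma_2\id$, the commutations $[\Pbb,\Bc]=0$ and $[\Pbb,B^0]=0$ (hence $\Pbb B^0\Pbb^\perp=\Pbb^\perp B^0\Pbb=0$), symmetry of $B^0$ and each $B^i$, the block structure $\Pbb F=0$ with $\Pbb^\perp F=\Ordc(\lambda R^{-1}|\Pbb v|^2)$, and the divergence bound $\Div B=\Ordc(\theta+\beta t^{-1}|\Pbb v|^2)$ are precisely the quantitative inputs of the energy scheme in \cite{BOOS:2020}, with all $t$-dependence entering only through the regular coefficient $B^0$.

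For (b) I would recall the mechanism of the energy argument so as to see where the regularity is spent. Standard symmetric-hyperbolic theory (using $B^0\ge\gamma_1^{-1}\id>0$ and smoothness and symmetry of $B^0,B^i$ in $(t,u)$) gives a solution on a maximal interval $(T_1,T_0]$. Applying $\del{x}^\alpha$ for $|\alpha|\le k$, pairing with $\del{x}^\alpha u$ under $B^0$ and integrating over $\Tbb^n$, the symmetry of $B^i$ together with the identity $\Div B=\del{t}B^0+\del{i}B^i$ along solutions turns the transport and $\del{t}B^0$ contributions into a commutator term controlled by the $\Div B$ bound; the singular term yields $\tau^{-1}\langle\Pbb\del{x}^\alpha u,\Bc\Pbb\del{x}^\alpha u\rangle\ge\kappa\tau^{-1}\langle\Pbb\del{x}^\alpha u,B^0\Pbb\del{x}^\alpha u\rangle$, which has a favorable sign upon integrating from $t$ up to $T_0$; and $F$ is handled via $\Pbb F=0$, the quadratic bound on $\Pbb^\perp F$, and Moser/product estimates. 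The only potentially dangerous top-order term is $\del{x}^\alpha$ landing on $u$ inside $B^i(t,u)\del{i}u$, but since $B^i$ carries no singular $t$-weight this is absorbed with $k>n/2+1$, so the extra smoothing step \cite[Lemma~3.5]{BOOS:2020} needed when $B^i$ is merely bounded in $t$ can be avoided. Integrating the resulting differential inequality and using $\kappa>\gamma_1(\lambda+\beta/2)$ to absorb the commutator and nonlinear remainders gives the energy estimate \eqref{eq:generalestimate}; the uniform bound forbids breakdown as $t$ decreases, so $T_1=0$.

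Finally, for the decay rates I would bootstrap. The energy identity already yields $\int_t^{T_0}\tau^{-1}\|\Pbb u(\tau)\|_{H^k}^2\,d\tau\lesssim\|u_0\|_{H^k}^2$. Multiplying \eqref{symivpB.1} by $(B^0)^{-1}$ and projecting, $\del{t}(\Pbb u)=\frac{1}{t}(B^0)^{-1}\Bc\,\Pbb u+O(1)+O(t^{-1}|\Pbb u|^2)$, where $(B^0)^{-1}\Bc$ has spectrum $\ge\kappa$ on the range of $\Pbb$; a Gr\"onwall argument at the $H^{k-1}$ level then upgrades the integrated control to the pointwise rate $\|\Pbb u(t)\|_{H^{k-1}}\lesssim t^{\min\{1,\kappa-\sigma\}}$, the loss of one derivative and the $\sigma$ coming from absorbing the remainders. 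The equation $\del{t}(\Pbb^\perp u)=O(1)+O(t^{-1}|\Pbb u|^2)$ is then $L^1$ in $t$ near $0$ with the stated rate, so $\Pbb^\perp u(t)$ is Cauchy in $H^{k-1}$ and $\Pbb^\perp u(0)$ exists with $\|\Pbb^\perp u(t)-\Pbb^\perp u(0)\|_{H^{k-1}}\lesssim t+t^{2(\kappa-\sigma)}$; uniqueness follows by applying the energy estimate to a difference of solutions. The hard part will be this last bootstrap: converting the merely \emph{integrated} control on $\Pbb u$ into a \emph{pointwise-in-$t$} decay rate while tracking the coupling to $\Pbb^\perp u$ through the quadratic remainder $t^{-1}|\Pbb u|^2$, which is exactly where the threshold $\kappa>\gamma_1(\lambda+\beta/2)$ is consumed and where one must check that everything still closes at $k>n/2+1$.
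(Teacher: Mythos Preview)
Your proposal is correct and matches the paper's approach exactly: the paper does not give an independent proof but simply states that Theorem~\ref{symthm} is a special case of \cite[Theorem~3.8]{BOOS:2020} together with the improved decay from \cite[Remark~3.10(a).2]{BOOS:2020}, and remarks that the regularity threshold drops to $k>n/2+1$ precisely because the $t$-regularity of the $B^i$ in hypothesis~(v) makes the application of \cite[Lemma~3.5]{BOOS:2020} unnecessary. Your sketch of the energy mechanism and the bootstrap for the decay rates is more detailed than what the paper provides, but is consistent with the argument of the cited reference.
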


It is important to be aware that the constant $\delta$ as well as the implicit constants in the estimates in this theorem, in general, depend implicitly on the choices of $k$, $\sigma$, and all the quantities introduced in  assumptions (i)-(vi) from Section \ref{coeffassumps}, in particular, $T_0$, $R$, and $\kappa$. The proof of Theorem~\ref{symthm}, allowing for the change in regularity noted above, follows essentially the same steps as the proof of Theorem 3.8 in reference \cite{BOOS:2020}.

\bibliographystyle{habbrv}
\bibliography{mainbib}

\end{document}